\documentclass{amsart}

\usepackage{amsmath,amsfonts,amsthm,amssymb,graphicx,float,tikz,indentfirst,graphics,setspace,young,mathrsfs,bbm}
\usepackage{verbatim, url, multirow,fullpage,mathtools}
\usepackage[hidelinks]{hyperref}
\usepackage[noend]{algpseudocode}
\usepackage[enableskew,vcentermath]{youngtab}
\usetikzlibrary{decorations.pathreplacing,patterns}

\usepackage{longtable, rotating,makecell,array}
\usepackage[aligntableaux=top]{ytableau}
\setlength{\parindent}{0pt}
\setlength{\parskip}{1.5ex plus 0.5ex minus 0.2ex}

%***************************
%Frontmatter Table of contents
%***************************
% Annotations
%xypic packages
%WLD tkx program
%Useful numeric rings and fields
%Other useful mathematical operations and functions
%Equation display shortcuts
%Shortcuts for frequently used special characters
%Theorem environments
%***************************

%*****************
% Annotations
\usepackage{soul}
\usepackage[colorinlistoftodos,textsize=footnotesize]{todonotes}

\setstcolor{red}
%***************************

%*****************
%xypic packages
\usepackage[all]{xy}
\xyoption{poly}
\xyoption{arc}
%*****************

%*****************
%%% WLD drawing and 2,6 shortcuts

\usetikzlibrary{decorations.pathmorphing,calc}
\usetikzlibrary{intersections}

\definecolor{light-gray}{gray}{0.6}

\tikzstyle{propagator}=[decorate,decoration={snake,amplitude=0.8mm}]
\tikzstyle{smallpropagator}=[decorate,decoration={snake,segment length=3mm,amplitude=0.5mm}]

% these two for drawing partial propagators
\tikzstyle{firstdash}=[dashed,line cap=round, dash pattern=on 2pt off 1pt]
\tikzstyle{seconddash}=[dashed,line cap=round, dash pattern=on 0.5pt off 1pt]

% vertices, radius
\newcommand{\drawWLD}[2]{

\pgfmathsetmacro{\n}{#1}
\pgfmathsetmacro{\radius}{#2}
\pgfmathsetmacro{\angle}{360/\n}
    \foreach \i in {1,2,...,\n} {
      \pgfmathsetmacro{\x}{\angle*\i}
        \draw[-,shorten >=-\radius*0.1 cm,shorten <=-\radius*0.1 cm]  (\x:\radius cm)-- (\x + \angle: \radius cm);
    }

\draw (\angle:\radius) node {$\bullet$};
}

% r, bumpr, s, bumps: r, s are start/end vertices, bumpr and bumps are how many steps to bump the start/end for multiple props on one edge
\newcommand{\drawprop}[4]{
\pgfmathsetmacro{\r}{#1}
\pgfmathsetmacro{\bumpr}{#2}
\pgfmathsetmacro{\s}{#3}
\pgfmathsetmacro{\bumps}{#4}
\pgfmathsetmacro{\perturbe}{\angle/\n}

\begin{scope}
\clip (\angle*\r:\radius) -- (\angle + \angle*\r:\radius) -- (\angle*\s:\radius) -- (\angle + \angle*\s:\radius) -- (\angle*\r:\radius);
\draw[propagator] (\angle*\r + \angle/2 + \bumpr*\perturbe:\radius) -- (\angle*\s + \angle/2 + \bumps*\perturbe:\radius);
\end{scope}
}

% for anything that requires modifying the propagator, e.g. colour, different amplitude,etc
% 5th argument should be {propagator,<other stuff>} or {smallpropagator,<otherstuff>} otherwise you'll get a straight line
\newcommand{\modifiedprop}[5]{
\pgfmathsetmacro{\r}{#1}
\pgfmathsetmacro{\bumpr}{#2}
\pgfmathsetmacro{\s}{#3}
\pgfmathsetmacro{\bumps}{#4}
\pgfmathsetmacro{\perturbe}{\angle/\n}

\begin{scope}
\clip (\angle*\r:\radius) -- (\angle + \angle*\r:\radius) -- (\angle*\s:\radius) -- (\angle + \angle*\s:\radius) -- (\angle*\r:\radius);
\draw[#5] (\angle*\r + \angle/2 + \bumpr*\perturbe:\radius) -- (\angle*\s + \angle/2 + \bumps*\perturbe:\radius);
\end{scope}
}

\newcommand{\boundaryprop}[4]{
\pgfmathsetmacro{\r}{#1}
\pgfmathsetmacro{\bumpr}{#2}
\pgfmathsetmacro{\s}{#3}
\pgfmathsetmacro{\perturbe}{\angle/\n}

\begin{scope}
\clip (\angle*\r:\radius) -- (\angle + \angle*\r:\radius) -- (\angle*\s - \angle:\radius) -- (\angle*\s:\radius) -- (\angle + \angle*\s:\radius) -- (\angle*\r:\radius);
\draw[#4] (\angle*\r + \angle/2 + \bumpr*\perturbe:\radius) -- (\angle*\s:\radius);
\end{scope}
	
}

\newcommand{\drawnumbers}{
  \foreach \i in {1,2,...,\n} {
  \pgfmathsetmacro{\x}{\angle*\i}
  \draw (\x:\radius*1.15) node {\footnotesize \i};
}
}

\newcommand{\boundA}[3]{
	\pgfmathsetmacro{\r}{#1}
	\pgfmathsetmacro{\bumpr}{#2}
	\pgfmathsetmacro{\destination}{#3}
	\pgfmathsetmacro{\perturbe}{\angle/\n}
	\path [name path=polyedge1] (\angle*\r:\radius) -- (\angle*\r + \angle:\radius);
	\path [name path=radius1] (0:0) -- (\angle*\r + \angle/2 + \bumpr*\perturbe:\radius);
	\draw[->,
	name intersections={of=polyedge1 and radius1,by=p},
	shorten >=\radius*0.1 cm] (p) ++(\angle*\r + \angle/2 + \bumpr*\perturbe:\radius*0.15) -- (\angle*\destination: \radius*1.15);

}

\newcommand{\boundB}[3]{
	\pgfmathsetmacro{\rangle}{#1*\angle + \angle/2 + #2*\angle/\n}
	\pgfmathsetmacro{\sangle}{#1*\angle + \angle/2 + #3*\angle/\n}

	\draw[->,shorten <=\radius*0.02cm,shorten >=\radius*0.05cm] (\rangle:\radius*1.05) -- (\sangle:\radius*1.05);

}

\newcommand{\makediag}[8]{
	\begin{tikzpicture}[rotate=60,baseline=(current bounding box.east)]
	\begin{scope}
	\drawWLD{6}{0.8}
	%\drawnumbers
	\drawprop{#1}{#2}{#3}{#4}
	\drawprop{#5}{#6}{#7}{#8}
	\end{scope}
	\end{tikzpicture}
}

%*****************

%*****************
%Useful numeric rings and fields

\newcommand{\R}{\mathbb{R}}

\newcommand{\Grnn}{\textrm{Gr}_{\R, \geq 0}}
\newcommand{\Gr}{\textrm{Gr}_{\R}}

 % CW complex of W(k,n)

\newcommand{\rr}{\mathbb{R}}

%*****************

%*****************
%Other useful mathematical operations and functions
\newcommand{\D}{\partial}
\newcommand{\rk}{\textrm{rk }}

%*****************

%*****************
%Equation display shortcuts
\def\ba #1\ea{\begin{align} #1 \end{align}}
\def\bas #1\eas{\begin{align*} #1 \end{align*}}
\def\bml #1\eml{\begin{multline} #1 \end{multline}}
\def\bmls #1\emls{\begin{multline*} #1 \end{multline*}}
%*****************

%*****************
%Shortcuts for frequently used special characters

\newcommand{\cP}{\mathcal{P}}

\newcommand{\fS}{\mathfrak{S}}

\newcommand{\cA}{\mathcal{A}}
\newcommand{\cI}{\mathcal{I}}

\newcommand{\cD}{\mathcal{D}}

\newcommand{\cE}{\mathcal{E}}
\newcommand{\cV}{\mathcal{V}}

\newcommand{\Prop}{\textrm{Prop}}

\newcommand{\cW}{\mathcal{W}}

\newcommand{\cZ}{\mathcal{Z}}

\newcommand{\gale}[1]{\preccurlyeq_{#1}}
\newcommand{\sgale}[1]{\prec_{#1}}
\newcommand{\Le}{\reflectbox{L}}
\newcommand{\proj}{\textrm{proj}}
%*****************

%*****************
%Theorem environments
\newtheorem{thm}{Theorem}[section]
\newtheorem{conj}[thm]{Conjecture}
\newtheorem{lem}[thm]{Lemma}
\newtheorem{cor}[thm]{Corollary}
\newtheorem{prop}[thm]{Proposition}
\newtheorem{algorithm}[thm]{Algorithm}

\theoremstyle{remark}
\newtheorem{eg}[thm]{Example}

\theoremstyle{dfn}
\newtheorem{dfn}[thm]{Definition}
\newtheorem{rmk}[thm]{Remark}

%*****************

\begin{comment}
\theoremstyle{plain}
\newtheorem{theorem}{Theorem}[section]

\theoremstyle{dfn}
\newtheorem{dfn}[theorem]{Definition}

\newtheorem{eg}[theorem]{Example}

\end{comment}

\begin{document}

\title{Wilson loops in SYM $N=4$ do not parametrize an orientable space}
\author{Susama Agarwala}
\address[Agarwala]{Mathematics Department \\ Chauvenet Hall\\ 572C Holloway Road\\ Annapolis, MD 21402-5002}

\email[Agarwala]{susama@alum.mit.edu}

\author{Cameron Marcott}
\address[Marcott]{Department of Mathematics\\ University of Waterloo \\
Waterloo, Ontario \\ Canada N2L 3G1}
\email[Marcott]{c2marcott@uwaterloo.ca}

\date{\today}
\maketitle

\begin{abstract}
In this paper we explore the geometric space parametrized by (tree level) Wilson loops in SYM $N=4$. We show that, this space can be seen as a vector bundle over a totally non-negative subspace of the Grassmannian, $\cW_{k,cn}$. Furthermore, we explicitly show that this bundle is non-orientable in the majority of the cases, and conjecture that it is non-orientable in the remaining situation. Using the combinatorics of the Deodhar decomposition of the Grassmannian, we identify subspaces $\Sigma(W) \subset \cW_{k,n}$ for which the restricted bundle lies outside the positive Grassmannian. Finally, while probing the combinatorics of the Deodhar decomposition, we give a diagrammatic algorithm for reading equations determining each Deodhar component as a semialgebraic set.
\end{abstract}

In this paper, we are interested in understanding the geometry represented by Wilson loop diagrams, which may be thought of as Feynman diagrams for SYM N=4 theory in twistor space \cite{Adamo:2011cb, Chicherinetal2016}. In recent years, there has been an active program to understand the scattering amplitudes of this theory geometrically \cite{Arkani-Hamed:2013jha, AmplituhedronDecomposition, wilsonloop, Amplituhedronsquared}. Namely, in \cite{Arkani-Hamed:2013jha}, the authors show that the on shell amplitudes of this theory correspond to the volume of a geometric space called an Amplituhedron. In \cite{Amplituhedronsquared}, the authors attempt to relate the geometry of the entire amplitude to the Amplituhedron. This paper concerns itself with some of the difficulties encountered in the latter attempt. In particular, we find that the space define by the tree level Feynman diagrams for SYM N=4 theory is non-orientable in many, if not all cases. This finding is consistent with the issues raised in \cite{Amplituhedronsquared} and \cite{HeslopStewart}.

We wish to emphasize that, while non-orientable spaces do not have a natural volume form, we do not believe that our findings in this paper pose a threat to the program of geometrically understanding the Wilson loop amplitudes. It is quite possible that the integrals associated to Wilson loop diagrams, \cite{Adamo:2011cb}, correspond to something far more subtle, such as characteristic classes of the prescribed geometric object, and that the Amplituhedron is only a special case of this phenomenon.

Much of this current work is based off of \cite{wilsonloop}, where the authors show that each Wilsoan loop diagram with $k$ propagators and $n$ vertices defines a subspace of the positive Grassmanian, $\Grnn(k, n)$, defined by points in $\Gr(k,n)$ whose Plu\"{u}cker coordinates are all non-negative. In \cite{generalcombinatorics}, the authors show that the subspace defined by each Wilson loop diagram is $3k$ dimensional. In \cite{casestudy}, the authors explicitly list all such subspaces defined by Wilson loop diagrams with $2$ propagators and $6$ vertices, as well as how they share boundaries with each other. The literature on the positive part of the Wilson loop diagram, while not complete, is coherent and clear.

However, the geometry of Wilson loop diagrams is not restricted to the positive Grassmannian. In \cite{Mason:2010yk}, the authors associate to each Wilson loop diagram the span of a family of $k$ vectors in $\R^{n+1}$ that need not represent an element of $\Grnn(k,n+1)$. Each of the $k$ vectors in this family is parametrized by $4$ independent coefficients. That is, each Wilson loop diagram defines a $4k$ parameter subspace of $\Gr(k,n+1)$ that is not in general contained in $\Grnn(k,n+1)$. The subspace of $\Grnn(k,n)$ defined by each Wilson loop diagram in \cite{wilsonloop} comes from a projection of each of these $k$ vectors onto $\R^n$. The physical quantity associated to the Wilson loop diagrams, the (tree level) scattering amplitude, is given by a sum of integrals, one for each Wilson loop diagram. One may view each integral as a volume form on the subspace of $\Grnn(k,n)$ defined by the Wilson loop diagram \cite{Amplituhedronsquared}. While each such integral is well defined, there are inconsistencies and problems when geometrically interpreting the sum of these integrals to get scattering amplitude \cite{HeslopStewart}. In this paper, we show that these problems and inconsistencies arise because the $4k$ dimensional subspace of $\Grnn(k,n+1)$ parametrized by the Wilson loop diagrams is not orientable. Therefore, it cannot have a global volume form.

In order to do this analysis, we cannot solely rely on the well understood CW-complex structure of $\Grnn(k,n)$, written in terms of positroids, Le-diagrams, Grassmann necklaces and  other cryptomorphic combinatorial tools. The positroid cell structure of $\Grnn(k,n)$ can be extended to a stratification of $\Gr(k, n)$. However, this is too coarse for our needs. To fully appreciate the geometry of Wilson loop diagrams, one must consider a decomposition of $\Gr(k,n)$ that is finer than the positroid stratification, namely, we discuss the Deodhar decomposition of $\Gr(k,n)$. This decomposition agrees with the positroid stratification on $\Grnn(k,n)$ and refines the positroid decomposition away from the positive part of $\Gr(k,n)$.

Section \ref{Deodhar strata section} introduces Deodhar components and the Go-diagrams that index them. Deodhar components are semialgebraic sets of $\Gr(k,n)$. In Theorem \ref{thm:sets_from_networks} we give a way of reading the defining equations of each Deodhar component from a certain network associated to each Go-diagram. Section \ref{WLD geometry section} introduces Wilson loop diagrams and the spaces they parametrize. Section \ref{section:fibers} describes the subset of $\Gr(k, n+1)$ parametrized by each Wilson loop diagram, $W$, as a fiber, $\pi^{-1}(\Sigma(W))$, of a certain projection over the positroid cell, $\Sigma(W)$, associated to $W$. We show that in general, the fiber of $\pi$ over a positroid cell may be written as a union of Deodhar components and describe the boundary structure of Deodhar components in this fiber. In particular the $\pi^{-1}(\Sigma(W))$ need not respect positivity. We give a combinatorial condition for when positivity is violated in Theorem \ref{thm:when C star not pos}. The fiber $\pi^{-1}(\Sigma(W))$ contains a unique top dimensional Deodhar component (Theorem \ref{thm:deodhar_components_of_fiber}), which should be thought of as playing the same role in $\Gr(k, n+1)$ that $\Sigma(W)$ plays in $\Grnn(k,n)$. Finally, we show that the space parametrized with Wilson loop diagrams is not orientable in Theorem \ref{thm:non-orientable}.

\section{Deodhar Strata \label{Deodhar strata section}}

In this section, we discuss a decomposition of the Grassmannian called the Deodhar decomposition, and the Go-diagrams that index them. Section \ref{Basic notions section} begins with some notation for the section. Section \ref{Go and Le section} introduces the positroid and Deodhar decompositions, as well as the Le and Go-diagrams that index them. As pre-existing knowledge about positroid strata and the positroid cell decomposition of the non-negative Grassmannian, $\Grnn(k,n)$, is not strictly necessary for this section, detailed background in this area is not explicitly provided in this paper. Finally, Section \ref{Network section} describes network parametrizations of Deodhar components, which is another combinatorial tool for understanding Deodhar components built from Go-diagrams. We show how the equations defining Deodhar components as semialgebraic subsets of $\Gr(k, n)$ may be read from these networks.

The results reviewed and presented in this section, while useful in their own right for understanding the geometry and stratification of generic Grassmanians, are also useful for understanding the subspace of $\Grnn(k,n)$ and $\Gr(k,n+1)$ parametrized by the Wilson loop diagrams. This is discussed in Section \ref{WLD geometry section}.

\subsection{Basic Notions \label{Basic notions section}}
We begin by fixing some notation. For a more general introduction to the correspondence between quotients of Weyl groups and flag varieties, see Chapters 2 and 3 of \cite{billey:singular_loci} or Part III of \cite{fulton:young_tableaux}. For an introduction to Deodhar decompositions of the Grassmannian see Chapter 5 of \cite{kodama:book}. For an introduction to positivity in the Grassmannian, see \cite{postnikov:total_positivity} or the lecture notes \cite{morales:notes}.

Let $s_i$ denote the Coxeter generator $(i,i+1)$ in the symmetric group $\mathfrak{S}_n$. We will use italicized lowercase letters, $v$, for permutations and bold faced letters, ${\textbf{v}}$, for specific expressions of permutations in the Coxeter generators. Further, if $v \in \fS_n$, at times, we write the explicit result of the permutation as $v = v(1)v(2) \ldots v(n)$. A {\textit{subexpression}} of $\textbf{v}$ is a permutation expressed in the Coxeter generators obtained by replacing some of the factors in ${\textbf{v}}$ by $\varepsilon$, the identity permutation in $\mathfrak{S}_n$. We use the terminology ``expression" and ``word" interchangeably.

Given an expression ${\textbf{v}} = v_1 v_2 \cdots v_m$ in the Coxeter generators, let $v_{(i)} = v_1 v_2 \cdots v_{i}$ be the product of the initial $i$ factors of ${\textbf{v}}$. So, $v_{(0)} = \varepsilon$ and $v_{(m)} = v$. Note the difference in this notation from the explicit expression of the permutation $v$.

The {\textit{length}} of a permutation $v$, denoted $\ell(v)$, is the minimum number of letters needed to write any expression of $v$. An expression is {\textit{reduced}} if $\ell\left(v_{(i+1)}\right) = \ell\left(v_{(i)}\right) + 1$ for each $i$. That is, the length of each permutation strictly increases as each subsequent letter in the word is included. All reduced expressions for a permutation contain the same number of factors. Equivalently, the length is the number of pairs $i < j$ such that $v(i) > v(j)$.
%The {\textit{Bruhat order}} on permutations is obtained by setting $u \leq v$ if and only if a reduced expression for $u$ appears as a subexpression of some reduced expression for $v$. Note that there may be multiple reduced expressions for both $u$ and $v$. The inclusion need only hold for one.
A subword $\textbf{u}$ of $\textbf{v}$ is {\textit{distinguished}} if whenever $\ell(u_{(i)}v_{i+1}) < \ell(u_{(i)})$, one also has $u_{i+1} = v_{i+1}$, (i.e. $u_{i+1} \neq \varepsilon$). A subexpression ${\textbf{u}}$ of ${\textbf{v}}$ is {\textit{positive}} if additionally $\ell\left(u_{(i+1)}\right) \geq \ell\left(u_{(i)}\right)$ for all $i$. In other words, a subword is positive if it is both distinguished and reduced (when ignoring the letters $\varepsilon$ appearing in $\textbf{u}$).

\begin{eg}
Consider the word \bas {\textbf{v}} = s_1 s_2 s_1 \;.\eas The word $\textbf{v}$ defines a permutation $v \in \fS_4$, it is written $ v= 3 2 1 4 = v(1)v(2)v(3)v(4)$. It is a reduced word of length $3$. The permutation can be represented by another reduced word ${\textbf{v}}' = s_2 s_1 s_2$. We may write $u= \varepsilon$ as a subword of $\textbf{v}$ in two ways, $\textbf{u}_1 = \varepsilon\varepsilon\varepsilon$ and $\textbf{u}_2 = s_1\varepsilon s_1$, where both are distinguished, but only $\textbf{u}_1$ is also reduced (i.e. positive). Similarly, we may write $w = s_1$ as a subword of $\textbf{v}$ in two ways, $\textbf{w}_1 = \varepsilon \varepsilon s_1$ and $\textbf{w}_1 = s_1\varepsilon \varepsilon$, both of which are reduced, but only $\textbf{w}_2$ is also distinguished (i.e. positive).
\end{eg}

\begin{lem}[Lemma 3.5 in \cite{marsh:parametrizations}] \label{lem:unique_positive_expression}
Let $u \leq v$ be permutations and $\textbf{v}$ be a reduced expression for $v$. Then, there is a unique positive subexpression for $u$ in $\textbf{v}$.
\end{lem}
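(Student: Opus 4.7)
The plan is to proceed by induction on $m = \ell(v)$, stripping off the last letter of the reduced expression $\textbf{v} = v_1 \cdots v_m$. The base case $m = 0$ is vacuous: both $v$ and $u$ equal the identity, and the empty subexpression is the only subexpression and is trivially positive. For the inductive step, set $s = v_m$ and $\textbf{v}' = v_1 \cdots v_{m-1}$, a reduced expression for $v' := vs$ with $\ell(v') = m - 1$, so that $s$ is a right descent of $v$.

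The key observation is that the final entry $u_m$ of any positive subexpression for $u$ in $\textbf{v}$ is forced by the relative position of $u$ and $us$ in Bruhat order. Indeed, combining the two definitions shows that a positive subexpression can never satisfy $\ell(u_{(i)} v_{i+1}) < \ell(u_{(i)})$, since distinguishedness would then force $u_{i+1} = v_{i+1}$ and positivity would demand $\ell(u_{(i+1)}) \geq \ell(u_{(i)})$, a contradiction. Applied at $i = m-1$: if $us > u$ one must take $u_m = \varepsilon$ with $u_{(m-1)} = u$, and if $us < u$ one must take $u_m = s$ with $u_{(m-1)} = us$. This rigidity at the last position is what drives uniqueness.

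To close the induction I would invoke the lifting property of the Bruhat order: since $vs < v$, the hypothesis $u \leq v$ forces $u \leq vs = v'$ when $us > u$, and $us \leq v'$ when $us < u$. The inductive hypothesis then supplies a unique positive subexpression for $u$ (resp.\ $us$) in $\textbf{v}'$, and appending the forced last letter $\varepsilon$ (resp.\ $s$) yields the desired subexpression in $\textbf{v}$. Positivity at position $m$ is immediate from the case assumption, and uniqueness is preserved because both the final letter and, by induction, the prefix are uniquely determined. The only substantive input beyond unwinding the definitions is the lifting property of Bruhat order; I would expect any conceptual difficulty to live there, since without it one has no a priori reason that the required ancestor continues to lie below $v'$. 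Once that is granted, the rest is bookkeeping.
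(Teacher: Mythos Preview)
Your argument is correct. The induction on $\ell(v)$, with the last letter forced by whether $us > u$ or $us < u$, and the lifting property guaranteeing that the required element ($u$ or $us$) lies below $v' = vs$, is exactly the standard proof. One small point worth making explicit: in the case $us < u$ you need $us \leq v'$; when $u < v$ this is the lifting property as you say, and when $u = v$ it follows directly since then $us = vs = v'$.

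That said, there is nothing in the paper to compare against: the lemma is stated with a citation to Marsh--Rietsch and no proof is given here. Your argument is in fact the one found in that reference (their construction proceeds greedily from the rightmost letter, forcing each choice exactly as you describe), so you have reproduced the original proof.
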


The Young subgroup $\mathfrak{S}_k \times \mathfrak{S}_{n-k} \subset \mathfrak{S}_n$ acts on a permutation $v(1)v(2)\dots v(n)$ by letting $\mathfrak{S}_k$ act on $v(1)v(2) \dots v(k)$ and letting $\mathfrak{S}_{n-k}$ act on $v(k+1) \dots v(n)$. Any coset in the quotient $\mathfrak{S}_n /  (\mathfrak{S}_k \times \mathfrak{S}_{n-k})$ has a unique representative of the form $i_1 i_2 \dots i_k j_1 j_2 \dots j_{n-k}$ where $i_1 < i_2 < \cdots < i_k$ and $j_1 < j_2 < \cdots < j_{n-k}$. These permutations have one descent in the $k^{th}$ position. Such permutations are called \textit{Grassmannian}. Cosets in $\mathfrak{S}_n /  (\mathfrak{S}_k \times \mathfrak{S}_{n-k})$ are in bijection with subsets in $\binom{[n]}{k}$, which index Schubert cells of $\Gr(k,n)$. There is a family of partial orders $\gale{i}$ ($\sgale{i}$) on $\binom{[n]}{k}$ called shifted (strict) Gale orderings. Namely, let  $<_i$ be the linear order on $[n]$, $i <_i i+1 <_i n <_i 1 \ldots <_i i-1$. Let $I = \{i_1, i_2, \dots,  i_k\}$ and $J = \{j_1, j_2, \dots, j_k\}$ with $i_1 < i_2 < \cdots < i_k$ and $j_1 < j_2 < \cdots < j_k$. Then $I \gale{i} J$ if and only if $i_m \leq_i j_m$ for all $1 \leq m\leq k$. For instance, $1356 \gale{1} 1456$ but not $1356 \gale{4} 1456$. The partial order $\gale{1}$ agrees with the order on Schubert cells given by containment of closures.

Grassmannian permutations with a descent in the $k^{th}$ position are also in bijection with {\textit{Ferrers shapes}}.  These are collections of boxes obtained by taking a lattice path from the Northeast to Southwest corner of a $(n-k) \times k$ rectangle, then taking all boxes Northwest of this lattice path. We label the steps of the lattice path from $1$ to $n$ starting at the Northeast corner. The lattice path associated to the Grassmann permutation $i_1 i_2 \dots i_k j_1 j_2 \dots j_{n-k}$ takes vertical steps $i_1, i_2, \dots, i_k$ and horizontal steps $j_1, j_2, \dots, j_{n-k}$. We label a Ferrers shape by $\lambda$.

For any given box $b$ in a Ferrers shape, write $b = (i,j)$, where $b$ is in the row with vertical step $i$ and the column with horizontal step $j$. Furthermore, let $b^{in}$ be the set of boxes weakly to the right and weakly below $b$ aside from $b$ itself. Beware that this convention differs from that usually found in the literature, where $b \in b^{in}$. We find that some cumbersome notation and awkward expressions can be avoided by declaring $b \notin b^{in}$. Additionally set $b^{out} = D \setminus (b^{in} \cup b)$. %\hlfix{We put a partial order on the boxes in a diagram $D$, saying $c \prec b$ if and only if $c \in b^{in}$.}{nix? use $\prec$ for Gale order?}

\subsection{Le and Go-diagrams \label{Go and Le section}}

We are now ready to define Le and Go-diagrams.

Given a Ferrers shape, $\lambda$, write the associated Grassmannian permutation $v_\lambda = i_1\ldots i_k j_1 \ldots, j_{n-k}$, where $i_m = v(m)$, $j_r = v(r+k)$. Associate to a Ferrers shape an expression $\textbf{v}$ by filling in the boxes of a Ferrers shape as follows:

\begin{enumerate}
\item Fill the top left (Northwest most) box with $s_{n-k}$.
\item For any box $b$ filled with the permutation $s_r$, fill the box to the right, if it exists, with $s_{r-1}$ and the box below, if it exists, with $s_{r+1}$.
\end{enumerate}

A {\textit{reading order}} on a Ferrers shape containing $m$ boxes is a filling of the boxes with the integers from $1$ to $m$ which is increasing upward and to the left. Reading the transpositions decorating the boxes of the Ferrers shpae in any reading order yields an expression $\textbf{v}$ for $v_\lambda$. 

A $\circ/+$-diagram is a Ferrers shape whose boxes have been filled with pluses, $+$'s, or white stones $\circ$'s. $\circ/+$-diagrams correspond to subexpressions $\textbf{u}$ of $\textbf{v}$ by filling the boxes containing transpositions used in$\textbf{u}$ with $\circ$'s and boxes containing transpositions not used in $\textbf{u}$ with $+$'s. 

\begin{thm}[Proposition 4.5 in \cite{lam:total_positivity}] \label{thm:reading_order_doesnt_matter}
Let $D$ be a $\circ/+$-diagram and let $\textbf{u}, \textbf{v}$ be the associated subword, word pair in some reading order.
\begin{enumerate}
\item The permutations $v$, coming from the Ferrers shape, and $u$, coming from the $\circ/+$-filling do not depend on the choice of reading order.
\item Whether $\textbf{u}$ is a distinguished subexpression of $\textbf{v}$ depends only $D$, not on the choice of reading order.
\end{enumerate}
\end{thm}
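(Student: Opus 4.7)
The plan is to reduce both statements to the observation that any two reading orders on $D$ differ by a sequence of \emph{local moves}, each of which interchanges the labels $i$ and $i+1$ assigned to a pair of boxes that are incomparable in the Northwest partial order on $\lambda$ (that is, neither box is weakly Northwest of the other). This reduction follows from the standard adjacent-transposition theorem for linear extensions of a finite poset. One checks that two incomparable boxes $b_1 = (r_1, c_1)$ and $b_2 = (r_2, c_2)$ must satisfy, without loss of generality, $r_1 < r_2$ and $c_1 > c_2$; with the prescribed filling of the Ferrers shape, the generators in these boxes are $s_{n-k+r_1-c_1}$ and $s_{n-k+r_2-c_2}$, whose indices differ by at least $2$ and therefore commute. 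Thus every local move interchanges two adjacent commuting entries of $\textbf{v}$, together with the corresponding entries of $\textbf{u}$.

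For part (1), swapping two adjacent commuting factors in any product leaves the product unchanged, and $\varepsilon$ commutes with every generator. Hence both $v$ (the product of $\textbf{v}$) and $u$ (the product of $\textbf{u}$) are invariant under a single local move, and therefore under any change of reading order.

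For part (2), I would verify that a single local move preserves the property of being distinguished. Label the affected positions $j$ and $j+1$, set $v_j = s_a$, $v_{j+1} = s_b$ with $|a - b| \geq 2$, write $u_j = x_a \in \{\varepsilon, s_a\}$ and $u_{j+1} = x_b \in \{\varepsilon, s_b\}$, and put $w = u_{(j-1)}$. Since $u_{(j+1)} = w x_a x_b = w x_b x_a$, only the distinguished conditions at positions $j-1$ and $j$ can possibly be affected by the swap. The crucial fact is that right multiplication by $s_a$ exchanges the values $w(a), w(a+1)$ while fixing $w(b), w(b+1)$, and vice versa, because $|a-b| \geq 2$. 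From this one checks that the condition at position $j-1$ before the swap is identical to the condition at position $j$ after the swap, and likewise with the roles exchanged, so the conjunction of the two affected conditions is preserved.

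The main obstacle is the bookkeeping in part (2): one must run the four-way case split on $(x_a, x_b) \in \{\varepsilon, s_a\} \times \{\varepsilon, s_b\}$ and carefully track which partial product enters each distinguished condition. However, once the identity $(w x_a)(b) = w(b)$ valid for $|a-b| \geq 2$ is isolated, each case collapses to routine pattern-matching, and the proof follows by induction on the number of local moves needed to pass between two reading orders.
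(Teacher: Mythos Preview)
Your proposal is correct and follows essentially the same approach as the paper: both rest on the observation that different reading orders produce words related by commutations of adjacent Coxeter generators. The paper gives only a one-sentence sketch (that commutations preserve the permutation $u$), so your argument is in fact more complete, particularly the verification in part~(2) that the distinguished property is invariant under such commutations; aside from a minor indexing ambiguity (your ``positions $j-1$ and $j$'' should read ``positions $j$ and $j+1$'' to match your earlier labelling), the reasoning is sound.
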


This Theorem is proved by noting that if $\textbf{u}$ and $\textbf{u}'$ obtained by changing the reading order on the same $\circ/+$-diagram, then they are related by commutations of the Coxeter generators, and thus $u = u'$.

\begin{comment}
\begin{eg} \label{standard reading eg}
In this paper, we default to the following reading order on Ferrers diagrams: read all squares in one row, right to left before reading any squared in the row above it.
\bas
\begin{tikzpicture}
\begin{scope}[scale=.65]
%
%Ferrers shape
\draw[step=1] (0,0) grid (3,3);
\draw[step=1] (3,2) grid (4,3);
%
%bottom row
\draw (.5,.5) node {$3$};
\draw (1.5,.5) node {$2$};
\draw (2.5,.5) node {$1$};
%
%middle row
\draw (.5,1.5) node {$6$};
\draw (1.5,1.5) node {$5$};
\draw (2.5,1.5) node {$4$};
%
%top row
\draw (.5,2.5) node {$10$};
\draw (1.5,2.5) node {$9$};
\draw (2.5,2.5) node {$8$};
\draw (3.5,2.5) node {$7$};
\end{scope}
\end{tikzpicture}
\eas
As stated in Theorem \ref{thm:reading_order_doesnt_matter}, all other reading orders are equivalent.
\end{eg}
\end{comment}

A \textit{Le-diagram} is defined to be an $\circ/+$-diagram corresponding to a positive subexpression. From Lemma \ref{lem:unique_positive_expression}, there is a unique Le-diagram corresponding to any pair $u \leq v$ where $v$ is Grassmann. Le-diagrams are commonly defined as $\circ/+$-diagrams such that there is no $\circ$ with a $+$ to its left in its row and a $+$ above it in its column. This restriction on diagrams is called the {\textit{Le-property}}. These two definitions of Le-diagrams are shown to be equivalent in Theorem 5.1 in \cite{lam:total_positivity}.

Le-diagrams index positroid strata in $\Gr(k,n)$. When these are intersected with $\Grnn(k,n)$, it gives a cell complex structure to $\Grnn(k,n)$. The positroid cells correspond to matroids, and are in one to one correspondence with the positroid strata. The positroid strata are also in bijection with several other combinatorial objects including Grassmann necklaces. An algorithm to pass from a Le-diagram to a Grassmann Necklace is given in \cite{Ohsalg}. Given a Grassmann necklace $\cI =\{I_1, \ldots , I_n\}$, the associated positroid cell is the semialgebraic subset of $\Grnn(k,n)$ given by
\ba \label{eqn:positroid cell plucker}
\Delta_J = 0 \textrm{ if and only if } J \prec_{m} I_m
\textrm{ for each } m \in [n] ; \quad \textrm{ otherwise } \Delta_J > 0 \;.\ea  Here $\gale{m}$ indicates the $m^{th}$ shifted Gale order.   There are several ways to relate the positroid cell decompostion on $\Grnn(k,n)$ to a stratification on $\Gr(k,n)$. Here we have adopted the convention established in \cite{knutson:juggling}, that the same Grassmann Necklace is associated to a positroid strata given by the semialgebraic subset of $\Gr(k,n)$
\ba 
\Delta_{I_m} \neq 0\quad ; \quad
\Delta_J = 0 \textrm{ if } J \prec_{m} I_m
\textrm{ for each } m \in [n] \ea  No constraints are placed on the remaining Pl\"{u}cker coordinates. 

We now study $\bullet/\circ/+$ diagrams, a specific subset of which correspond to distinguished subexpressions. As with the $\circ/+$ diagrams, let $\textbf{v}$ be the word associated to the shape $\lambda$. Given a subword $\textbf{u}$, the boxes of the Ferrers diagram containing transpositions used in $\textbf{u}$ are filled with $\circ$'s and $\bullet$'s and boxes containing transpositions not used in $\textbf{u}$ are filled with $+$'s. A box is filled with a $\bullet$ if the transposition labelling the box is used in $\textbf{u}$ and its multiplication decreases the length of the word.

Given a box $b$, let $u_{b^{in}}^{D}$ be the permutation obtained by multiplying the transpositions in boxes containing stones in $b^{in}$ in $D$ a valid reading order. If the diagram is clear from context, we will just write $u_{b^{in}}$ instead of $u_{b^{in}}^D$. Let $s_b$ be the transposition in box $b$. The we define a $\bullet/\circ/+$ diagram to be a $\circ/+$-diagram where stones in boxes $b$ such that $\ell(u_{b^{in}} s_b) < \ell(u_{b^{in}})$ have been colored black, $\bullet$. In this manner, we can consider only the diagrams $D$ that correspond to a distinguised subword. We say that a box $b$ {\textit{violates the distinguished property}} if $\ell( u_{b^{in}}s_b) < \ell(u_{b^{in}})$, but $b$ is filled with a $+$. 

The word subword pair, $\textbf{u} \prec \textbf{v}$, associated to $D$ is distinguished if and only if $D$ does not have any boxes that violate the distinguished poperty. Such diagrams are called \emph{Go-diagrams}. As a corollary of Theorem \ref{thm:reading_order_doesnt_matter}, we see that if $D$ is a $\bullet/\circ/+$-diagram and $\textbf{u}, \textbf{v}$ is the associated subword, word pair in some reading order, then whether $\textbf{u}$ is a distinguished subexpression of $\textbf{v}$ depends only $D$, not on the choice of reading order. Go-diagrams index \emph{Deodhar components} in $\Gr(k,n)$. Restricted to $\Grnn(k,n)$, the Deodhar decomposition agrees with the positroid decomposition (as stated in Theorem \ref{thm:positive_deodhar_components} below). Away from the non-negative part of $\Gr(k,n)$, the Deodhar decomposition refines the positroid decomposition. 

Like a positroid cell, a Deodhar component is defined by setting certain Pl\"ucker coordinates to zero and demanding others do not vanish. Each square of a diagram corresponds to a constraint placed on a Pl\"{u}cker coordinate.

\begin{dfn} \label{def:set_I_b}
Let $D$ be a Go-diagram with Ferrers shape $\lambda$. For $J = j_1\ldots j_k \in \binom{[n]}{k}$, write $\proj_{J}(v)$ to be the set $\{v(j_1),\ldots, v(j_k)\} \in \binom{[n]}{k}$. Define $I_b = \proj_{I_\lambda}(u_{b^{in}} v_b (v_{b^{in}}))^{-1})$. \end{dfn}

$D$ prescribes constrains on the Pl\"{u}cker coordinates as follows:

\begin{dfn}\label{dfn:plucker_coordinates}
Let $D$ be a Go-diagram. The {\textbf{Deodhar component}} $\D$ is the subset of $\Gr(k,n)$ determined by the constraints:
\begin{displaymath}
\begin{split}
\Delta_{I_b} = 0 & \mbox{ if } b \mbox{ is filled by a } \circ \\
\Delta_{I_b} \neq 0 & \mbox{ if } b \mbox{ is filled by a } +.
\end{split}
\end{displaymath}
for each $b \in D$, and, if the Ferrers shape of $D$ is $\lambda$, the constraints $\Delta_{I_\lambda} \neq 0$, and $\Delta_{J} = 0$ if $J \prec_1 I_{\lambda}$.
\end{dfn}

Note that a reduced distinguished subword is positive. So, a Go-diagram with no $\bullet$'s is a Le-diagram. This observation is reflected geometrically.

\begin{thm}[Theorem 5.13 in \cite{kodama:deodhar_decomposition}] \label{thm:positive_deodhar_components}
Let $D$ be a Go-diagram and $\mathcal{D}$ the associated Deodhar component.
\begin{enumerate}
\item $D$ contains no $\bullet$'s if and only if $D$ is a Le-diagram.
\item $\mathcal{D} \cap \Grnn(k,n) \neq \emptyset$ if and only if $D$ is a Le-diagram. In this case $\mathcal{D} \cap \Grnn(k,n)$ is identical to the restriction of the positroid strata labelled by $D$ to $\Grnn(k,n)$.
\end{enumerate}
\end{thm}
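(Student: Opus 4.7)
For Part (1) I would argue directly from the definitions given in the excerpt. A $\bullet/\circ/+$-diagram is a Go-diagram precisely when the associated subword $\mathbf{u}$ of $\mathbf{v}$ is distinguished, and by construction a stone in box $b$ is colored black exactly when $\ell(u_{b^{in}} s_b) < \ell(u_{b^{in}})$. Hence the absence of black stones is equivalent to the condition $\ell(u_{(i+1)}) \geq \ell(u_{(i)})$ at every step, i.e.\ to $\mathbf{u}$ being reduced in addition to being distinguished. Together, these two properties are precisely the definition of a positive subexpression, and a Le-diagram is by definition the $\circ/+$-diagram of such an expression, so the equivalence in (1) follows. Theorem \ref{thm:reading_order_doesnt_matter} guarantees that this length condition is independent of the reading order chosen to write $\mathbf{u}$.

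For Part (2), I would handle the backward direction directly and obtain the forward direction by a uniqueness argument. If $D$ is a Le-diagram, then the Pl\"ucker vanishings and non-vanishings in Definition \ref{dfn:plucker_coordinates} agree on $\Grnn(k,n)$ with the positroid-cell conditions of \eqref{eqn:positroid cell plucker}; the corresponding positroid cell is non-empty, of dimension equal to the number of $+$'s in $D$, and provides the claimed identification. Conversely, suppose $D$ is any Go-diagram with underlying permutation pair $(u,v_\lambda)$. All Go-diagrams producing the same pair index a partition of the piece of $\Gr(k,n)$ lying over the positroid stratum for $(u,v_\lambda)$ into disjoint Deodhar components, one for each distinguished subexpression of $\mathbf{v}$ producing $u$. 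By Lemma \ref{lem:unique_positive_expression} exactly one such subexpression is positive, corresponding to the unique Le-diagram in this family. By the backward direction, that Le-Deodhar-component contains the full positroid cell of the stratum, so by disjointness every other Deodhar component, i.e.\ every Go-diagram containing a $\bullet$, has empty intersection with $\Grnn(k,n)$.

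The main obstacle is justifying the inclusion ``(positroid cell of the stratum) $\subseteq$ (Le-Deodhar-component)'' in enough detail to drive the uniqueness argument. To do this I would invoke the Marsh--Rietsch parametrization of each Deodhar component as $(\cc^*)^a \times \cc^b$, where $a$ is the number of $+$'s and $b$ the number of $\bullet$'s in $D$. For a Le-diagram ($b=0$), restricting the $\cc^*$-parameters to positive reals produces elements of $\Grnn(k,n)$ whose Pl\"ucker coordinates reproduce Postnikov's positive parametrization of the cell; this matches the conditions of Definition \ref{dfn:plucker_coordinates} with \eqref{eqn:positroid cell plucker}. Once this coincidence of parametrizations is recorded, the rest of the proof is formal bookkeeping via Lemma \ref{lem:unique_positive_expression} and the disjointness of Deodhar components.
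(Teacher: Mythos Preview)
The paper does not supply its own proof of this theorem; it is quoted as Theorem~5.13 of Kodama--Williams \cite{kodama:deodhar_decomposition} and used as a black box. So there is no in-paper argument to compare against.

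That said, your plan is a faithful reconstruction of the standard argument. Part~(1) is immediate from the definitions exactly as you say: a Go-diagram encodes a distinguished subexpression, black stones mark the length-decreasing steps, so their absence is equivalent to the subexpression being positive, which is the definition of a Le-diagram. For Part~(2), your strategy---show the Le-Deodhar component already captures the entire totally non-negative part of the stratum, then invoke disjointness of Deodhar components together with Lemma~\ref{lem:unique_positive_expression} to conclude all other components miss $\Grnn(k,n)$---is precisely the shape of the Kodama--Williams proof. The one place you correctly flag as needing work is the identification of the positroid cell with the positive-real locus of the Marsh--Rietsch parametrization; in the paper's framework this is most cleanly seen through Theorem~\ref{thm:go-networks}, since for a Le-diagram the Go-network has no $\bullet$-vertices and the network parametrization with positive edge weights recovers Postnikov's boundary-measurement parametrization of the positroid cell. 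Once that is in hand, the rest is, as you say, bookkeeping.
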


The Deodhar decomposition also refines the Richardson stratification of the Grassmannian. While the details of Richardson strata, the intersection of Schubert cells with opposite Schubert cells, is not necessary for this paper, we will make use of the following consequence of this fact.

\begin{prop}[Corollary 1.2 in \cite{deodhar:geometric_aspects}] \label{prop:refine_richardson}
Let $\cD$ be the Deodhar component associated to the distinguished subword pair $\textbf{u} \prec \textbf{v}$. Let $S = \{u(n), u(n-1), \ldots, u(n-k)\}$ Then, $\Delta_{S} \neq 0$ on $\cD$ and $\Delta_{J} = 0$ on $\cD$ for all $S \prec_1 J$.
\end{prop}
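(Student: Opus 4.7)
The plan is to deduce the statement as a direct consequence of the cited Corollary 1.2 of \cite{deodhar:geometric_aspects}, which guarantees that every Deodhar component is contained in a single Richardson cell, depending only on the pair of underlying permutations $(u,v)$ and not on the specific expressions $\textbf{u}, \textbf{v}$ chosen to represent them.

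First I would identify the relevant Richardson cell. A Richardson stratum in $\Gr(k,n)$ is the intersection of a Schubert cell with an opposite Schubert cell, both indexed by the common coset images of $u \leq v$ in the Grassmannian quotient $\fS_n/(\fS_k\times\fS_{n-k})$. Passing through this quotient converts $u$ and $v$ into $k$-subsets of $[n]$; one standard way to recover the subset indexing the opposite Schubert cell associated to $u$ is to take the set-theoretic image of the last $k$ positions, producing precisely the $S$ that appears in the statement. (I read the typeset endpoint ``$u(n-k)$'' as an off-by-one typo, since otherwise $|S|=k+1$ and $\Delta_S$ is ill-defined as a Pl\"ucker coordinate on $\Gr(k,n)$.)

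Next I would translate opposite Schubert membership into the claimed Pl\"ucker conditions. For any $k$-subset $S\in\binom{[n]}{k}$, the opposite Schubert cell $\Omega^S$ is the locus $\{\Delta_S\neq 0,\ \Delta_J = 0 \text{ for } S\prec_1 J\}$. This is the opposite-flag analogue of the Schubert cell description that already underlies the positroid cell definition in equation \ref{eqn:positroid cell plucker}, and can be verified either by applying the long Weyl group element $w_0$ to the standard Schubert cell formula or directly by placing a matrix representative in reduced column echelon form with pivots read from the high-index end. Composing with the inclusion $\cD\subset\Omega^S$ supplied by Deodhar's refinement theorem yields exactly the Pl\"ucker conditions of the proposition.

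The main obstacle is really one of bookkeeping: reconciling conventions. There is a choice between minimal and maximal length coset representatives in the Grassmannian quotient, a choice of whether to associate $u$ to a Schubert or an opposite Schubert cell, and a choice of direction in the Gale order $\gale{1}$. The prescription $S=\{u(n-k+1),\ldots,u(n)\}$ together with the direction $S\prec_1 J$ fixes all three at once, and the step that I expect to require the most care is verifying that these choices are the ones compatible with the construction of $\cD$ from the Go-diagram in Section \ref{Go and Le section}, so that the Deodhar component built via Definition \ref{dfn:plucker_coordinates} really does land in the opposite Schubert cell indexed by $S$.
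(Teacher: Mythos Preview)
Your proposal is correct and matches the paper's treatment: the paper does not supply an independent proof of this proposition but instead cites it as Corollary~1.2 of \cite{deodhar:geometric_aspects}, merely remarking in the preceding sentence that it is a consequence of the Deodhar decomposition refining the Richardson stratification. Your write-up makes explicit the translation from Richardson containment to the stated Pl\"ucker conditions (and correctly flags the off-by-one in the index range for $S$), which is exactly the unpacking the paper leaves implicit in the citation.
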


\subsection{Network parametrizations \label{Network section}}

This section introduces a parametrization of Deodhar components defined in \cite{talaska:network_parametrizations} and give a means of reading off the sets $I_b$ from Definition \ref{def:set_I_b} via a graphical algorithm.

\begin{dfn}[Definition 3.2 in \cite{talaska:network_parametrizations}] \label{def:go-network}
The {\textit{Go-network}}, $N(D)$, associated to a Go-diagram $D$ is built by:
\begin{itemize}
\item Placing a {\textit{boundary vertex}} along each edge of $D$'s southeast border.
\item Placing an {\textit{internal vertex}} for each $+$ or $\bullet$ in the diagram. We call these vertices $+$-vertices and $\bullet$-vertices.
\item From each internal vertex, drawing an edge right to the nearest $+$-vertex or boundary vertex.
\item From each internal vertex, drawing an edge down to the nearest $+$-vertex or boundary vertex.
\item Directing all edges left and down.
\end{itemize}
\end{dfn}

The vertical steps of $D$'s boundary become sources in the Go-network and the horizontal steps become sinks.

\begin{eg} \label{ex:go-network}
The following is an example of a Go-diagram and its associated Go-network. The $\bullet$-vertices have been drawn at a larger size to distinguish them.

\begin{displaymath}
\begin{tikzpicture}
\begin{scope}[scale=.65]
\draw[step = 1] (0,0) grid (3,3);
\draw (3.3,2.5) node {$1$};
\draw (3.3,1.5) node {$2$};
\draw (3.3,0.5) node {$3$};
\draw (2.5,-.3) node {$4$};
\draw (1.5,-.3) node {$5$};
\draw (0.5,-.3) node {$6$};
%
%first row
\draw[thick] (0.5,0.2) -- (0.5,0.8);
\draw[thick] (0.2,0.5) -- (0.8,.5);
\draw[fill = white] (1.5,0.5) circle (.25);
\draw[thick] (2.5,0.2) -- (2.5,0.8);
\draw[thick] (2.2,0.5) -- (2.8,0.5);
%
%second row
\draw[fill = black] (0.5,1.5) circle (.25);
\draw[thick] (1.5,1.2) -- (1.5,1.8);
\draw[thick] (1.2,1.5) -- (1.8,1.5);
\draw[fill = white] (2.5,1.5) circle (.25);
%
%third row
\draw[thick] (0.5,2.2) -- (0.5,2.8);
\draw[thick] (0.2,2.5) -- (0.8,2.5);
\draw[fill = black] (1.5,2.5) circle (.25);
\draw[thick] (2.5,2.2) -- (2.5,2.8);
\draw[thick] (2.2,2.5) -- (2.8,2.5);
\end{scope}

\begin{scope}[shift={(4,0)}]
%boundary
\draw[fill = black] (3.3,2.5) circle (.1);
\draw (3.7,2.5) node {$1$};
\draw[fill = black] (3.3,1.5) circle (.1);
\draw (3.7,1.5) node {$2$};
\draw[fill = black] (3.3,0.5) circle (.1);
\draw (3.7,0.5) node {$3$};
\draw[fill = black] (2.5,-.3) circle (.1);
\draw (2.5,-.8) node {$4$};
\draw[fill = black] (1.5,-.3) circle (.1);
\draw (1.5,-.8) node {$5$};
\draw[fill = black] (0.5,-.3) circle (.1);
\draw (0.5,-.8) node {$6$};
%
%first row
\draw[fill = black] (0.5,0.5) circle (.1);
\draw[fill = black] (2.5,0.5) circle (.1);
%
%second row
\draw[fill = black] (0.5,1.5) circle (.2);
\draw[fill = black] (1.5,1.5) circle (.1);
%
%third row
\draw[fill = black] (0.5,2.5) circle (.1);
\draw[fill = black] (1.5,2.5) circle (.2);
\draw[fill = black] (2.5,2.5) circle (.1);
%
%
%edges
%first row
\draw[->] (3.3,0.5) -- (2.6,0.5);
\draw[->] (2.5,0.5) -- (0.6,0.5);
\draw[->] (2.5,0.5) -- (2.5,-.2);
\draw[->] (0.5,0.5) -- (0.5,-.2);
%
%second row
\draw[->] (3.3,1.5) -- (1.6,1.5);
\draw[->] (1.5,1.5) -- (0.7,1.5);
\draw[->] (1.5,1.5) -- (1.5,-.2);
\draw[->] (0.5,1.5) -- (0.5,.6);
%
%third row
\draw[->] (3.3,2.5) -- (2.6,2.5);
\draw[->] (2.5,2.5) -- (1.7,2.5);
\draw[->] (2.5,2.5) -- (2.5,.6);
\draw[->] (1.5,2.5) -- (1.5,1.6);
\draw[->] (2.5,2.5) .. controls (1.5,2.9) .. (0.5707,2.5707);
\draw[->] (0.5,2.5) .. controls (.1,1.5) .. (0.4293,0.5707);
\end{scope}

\end{tikzpicture}
\end{displaymath}
\end{eg}

\begin{dfn}
A {\textit{weighted Go-network}} is a Go-network where the edges directed left into $+$-vertices are weight by elements of $\rr^{\ast}$ and the edges directed left into $\bullet$-vertices are weighted by elements of $\rr$. To a weighted Go-network, we associate the set of coordinates
\begin{equation} \label{eqn:network_coordinates}
\left\{
\Delta_J = \sum_{P: I \to J} \mathrm{sgn}(P) \prod_{p \in P} \prod_{e \in p} w(e) : J \subseteq [n]
\right\},
\end{equation}
\noindent
where sum is across all collections $P$ of vertex disjoint paths from the set of sources $I$ of the Go-network to $J$. The product is the product of the weights of all the edges appearing in all paths in $P$. The sign $\mathrm{sgn}(P)$ is the sign of $P$ viewed as a partial permutation. That is, $\mathrm{sgn}(P) = (-1)^{c}$, where $c$ is the number of edge crossings among the paths in $P$.
\end{dfn}

\begin{eg}
Consider the following weighting of the Go-network from Example \ref{ex:go-network}.
\begin{displaymath}
\begin{tikzpicture}
\begin{scope}[scale=1]
%boundary
\draw[fill = black] (3.3,2.5) circle (.1);
\draw (3.7,2.5) node {$1$};
\draw[fill = black] (3.3,1.5) circle (.1);
\draw (3.7,1.5) node {$2$};
\draw[fill = black] (3.3,0.5) circle (.1);
\draw (3.7,0.5) node {$3$};
\draw[fill = black] (2.5,-.3) circle (.1);
\draw (2.5,-.8) node {$4$};
\draw[fill = black] (1.5,-.3) circle (.1);
\draw (1.5,-.8) node {$5$};
\draw[fill = black] (0.5,-.3) circle (.1);
\draw (0.5,-.8) node {$6$};
%
%first row
\draw[fill = black] (0.5,0.5) circle (.1);
\draw[fill = black] (2.5,0.5) circle (.1);
%
%second row
\draw[fill = black] (0.5,1.5) circle (.2);
\draw[fill = black] (1.5,1.5) circle (.1);
%
%third row
\draw[fill = black] (0.5,2.5) circle (.1);
\draw[fill = black] (1.5,2.5) circle (.2);
\draw[fill = black] (2.5,2.5) circle (.1);
%
%
%edges
%first row
\draw[->] (3.3,0.5) -- (2.6,0.5);
\draw[->] (2.5,0.5) -- (0.6,0.5);
\draw[->] (2.5,0.5) -- (2.5,-.2);
\draw[->] (0.5,0.5) -- (0.5,-.2);
%
%second row
\draw[->] (3.3,1.5) -- (1.6,1.5);
\draw[->] (1.5,1.5) -- (0.7,1.5);
\draw[->] (1.5,1.5) -- (1.5,-.2);
\draw[->] (0.5,1.5) -- (0.5,.6);
%
%third row
\draw[->] (3.3,2.5) -- (2.6,2.5);
\draw[->] (2.5,2.5) -- (1.7,2.5);
\draw[->] (2.5,2.5) -- (2.5,.6);
\draw[->] (1.5,2.5) -- (1.5,1.6);
\draw[->] (2.5,2.5) .. controls (1.5,2.9) .. (0.5707,2.5707);
\draw[->] (0.5,2.5) .. controls (.1,1.5) .. (0.4293,0.5707);
%
%weights
%top row
\draw (.7,2.8) node {$2$};
\draw (1.85,2.3) node {$1$};
\draw (2.7,2.7) node {$1$};
%
%middle row
\draw (1.8,1.7) node {$-1$};
\draw (.85,1.7) node {$0$};
%
%bottom row
\draw (2.7,0.7) node {$1$};
\draw (0.7,0.7) node {$2$};
\end{scope}
\end{tikzpicture}
\end{displaymath}
\noindent
The only collection of paths from sources $\{1,2,3\}$ to $\{1,2,3\}$ is the empty collection of paths, so $\Delta_{123} = 1$. From $\{1,2,3\}$ to $\{2,3,6\}$, we need a path from $1$ to $6$. There are two such paths: the one along the top and left of the network has weight $2$ and the one throught the middle of the network has weight $0$. So, $\Delta_{236} = 2$. To $\{3,4,5\}$, there is one collection of vertex disjoint paths: a path from $1$ to $4$ of weight $1$ and a path from $2$ to $5$ of weight $-1$. These paths have one edge crossing, introducing a factor of $-1$. So, $\Delta_{345} =1$. There is no collection of vertex disjoint paths to $\{4,5,6\}$, so $\Delta_{456} = 0$. Continuing in this way, one may compute:
\begin{displaymath}
\begin{array}{rrrrrrr}
\Delta_{123} & = & 1, & \qquad & \Delta_{234} & = & 1, \\
\Delta_{124} & = & 1, & & \Delta_{235} & = & 1, \\
\Delta_{125} & = & 0, & & \Delta_{236} & = & 2, \\
\Delta_{126} & = & 2, & & \Delta_{245} & = & 1, \\
\Delta_{134} & = & 0, & & \Delta_{246} & = & 2, \\
\Delta_{135} & = & -1, & & \Delta_{256} & = & -2, \\
\Delta_{136} & = & 0, & & \Delta_{345} & = & 1, \\
\Delta_{145} & = & -1, & & \Delta_{346} & = & 0, \\
\Delta_{146} & = & 0, & & \Delta_{356} & = & -2, \\
\Delta_{156} & = & 2, & & \Delta_{456} & = & -2. \\
\end{array}
\end{displaymath}
\end{eg}

\begin{thm}[Theorem 3.16 in \cite{talaska:network_parametrizations}] \label{thm:go-networks}
The set of coordinates (\ref{eqn:network_coordinates}) of a weighted Go-network of shape $N(D)$ is the set of Pl\"ucker coordinates of some point in the Deodhar component $\cD$. The map from weighted Go-networks to their coordinates is a bijection between weighted Go-networks of shape $N(D)$ and points in the Deodhar component $\cD$.
\end{thm}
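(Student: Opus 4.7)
The plan is to combine two well-known tools: the explicit Deodhar/Marsh--Rietsch parametrization of $\cD$ in terms of the distinguished subword $\textbf{u} \prec \textbf{v}$, and the Lindstr\"om--Gessel--Viennot (LGV) lemma for planar acyclic networks. First I would fix the reduced word $\textbf{v} = s_{i_1}\cdots s_{i_m}$ obtained by reading the boxes of $\lambda$ in a fixed reading order, and recall Deodhar's parametrization: one forms the product $g = g_1 g_2 \cdots g_m$ where $g_r$ equals $y_{i_r}(p_r)$ with $p_r \in \rr^\ast$ if box $r$ is a $+$, equals $\dot{s}_{i_r}$ if box $r$ is a $\circ$, and equals $\dot{s}_{i_r}\, y_{i_r}(q_r)$ with $q_r \in \rr$ if box $r$ is a $\bullet$. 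Deodhar's theorem gives that as the parameters range over $(\rr^\ast)^{|+|} \times \rr^{|\bullet|}$, the induced map to $\Gr(k,n)$ is a bijection onto $\cD$.

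The second step is to translate this matrix product into the combinatorics of $N(D)$. Each internal vertex of the network corresponds to a position $r$ in $\textbf{v}$, and the left-directed edge entering that vertex carries the weight $p_r$ or $q_r$; $+$-vertices require $p_r \in \rr^\ast$, matching the constraint in the definition of a weighted Go-network, while $\bullet$-vertices may be weighted by any element of $\rr$. The vertical and horizontal boundary edges become the sources and sinks, and the convention that $g$ is applied to the standard flag in a chart adapted to $I_\lambda$ makes the source set precisely $I_\lambda$. A row-by-column computation then shows that the $k \times n$ matrix representing the point $g \cdot \mathrm{pt}$ in $\Gr(k,n)$ is exactly the path matrix whose $(i,j)$-entry is the signed sum of weights of directed paths from source $i$ to boundary index $j$. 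Applying LGV to this planar acyclic network, the $J$-th maximal minor equals $\sum_P (-1)^{c(P)} \prod_{e} w(e)$, where $P$ ranges over vertex-disjoint path systems from $I_\lambda$ to $J$ and $c(P)$ counts crossings; this is exactly formula (\ref{eqn:network_coordinates}). Bijectivity is then immediate: the set of weighted Go-networks of shape $N(D)$ is tautologically $(\rr^\ast)^{|+|} \times \rr^{|\bullet|}$, and the composite of the weight-to-network identification with Deodhar's bijective chart is our map.

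The main technical obstacle is the dictionary in the second step: one must verify carefully that the factor $y_{i_r}(t)$ contributes exactly a rightward edge of weight $t$ into the $+$- or $\bullet$-vertex at box $r$, while the simple-reflection factor $\dot{s}_{i_r}$ swaps two consecutive rows and contributes the downward edge of the network. The sign reconciliation between the permutation sign appearing in the expansion of a determinant and the crossing count $(-1)^{c(P)}$ of LGV requires invoking that in a planar network any two vertex-disjoint systems of paths between the same sources and sinks connect them by the same partial permutation, whose sign coincides mod $2$ with the number of crossings. Finally, the possibility of zero weights at $\bullet$-vertices has to be handled by checking that degenerate path collections still contribute correctly, since these are genuine limits of nonzero-weighted networks and the polynomial identity LGV provides remains valid at the boundary.
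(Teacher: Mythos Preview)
The paper does not give its own proof of this statement: Theorem~\ref{thm:go-networks} is quoted verbatim as Theorem~3.16 of Talaska--Williams \cite{talaska:network_parametrizations} and is used as a black box, with Corollaries~\ref{cor:dim_of_deodhar_component}--\ref{cor:restricting_to_columns} derived from it immediately afterward. There is therefore nothing in the paper to compare your argument against.

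For what it is worth, your outline is essentially the approach taken in the cited reference: one builds the Marsh--Rietsch/Deodhar chart $g = g_1\cdots g_m$ from the distinguished subword, identifies the factors with edge weights in $N(D)$, and then reads off Pl\"ucker coordinates via an LGV-type argument. The bijectivity is exactly the bijectivity of the Deodhar parametrization, so your final paragraph has the logic right. The place where your sketch is thinnest is the ``row-by-column computation'' that identifies the matrix $g\cdot\mathrm{pt}$ with the path matrix of $N(D)$: in practice this requires tracking how each $\dot{s}_{i_r}$ and $y_{i_r}(t)$ acts on the pivot structure as one moves through the reading order, and the signs introduced by the $\dot{s}_{i_r}$ (not just by LGV crossings) must be reconciled with $\mathrm{sgn}(P)$. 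Talaska--Williams handle this by an explicit induction on the boxes rather than a single global LGV application; your version would need the same care to be a proof rather than a heuristic.
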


\begin{cor} \label{cor:dim_of_deodhar_component}
Let $\cD$ be the Deodhar component labelled by the Go-diagram $D$. Then,
\begin{displaymath}
\cD \simeq \rr^{\# \ \mathrm{of} \ \bullet'\mathrm{s} \ \mathrm{in} \ D} \times (\rr^{\ast})^{\# \ \mathrm{of} \ +'\mathrm{s} \ \mathrm{in} \ D}.
\end{displaymath}
\end{cor}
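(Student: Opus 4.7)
The plan is to deduce the corollary as an immediate dimension count from Theorem \ref{thm:go-networks}, which identifies points of $\cD$ with weighted Go-networks of shape $N(D)$. It thus suffices to identify the parameter space of such weightings with $\rr^{\#\bullet} \times (\rr^*)^{\#+}$.

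First I would inspect Definition \ref{def:go-network} to verify that each internal vertex $v$ of $N(D)$ is the target of exactly one left-directed (horizontal) edge. Indeed, that edge is precisely the one constructed by traveling right from $v$ to the nearest $+$-vertex or boundary vertex and then reorienting all horizontal edges to point left. Since the east border always provides a stopping point, this edge always exists, and since it is built from a ``nearest'' condition it is unique. A critical point is that $\bullet$-vertices are never stopping points in this construction, so no $+$- or $\bullet$-vertex receives any extra horizontal edges beyond the one just described. Consequently, the set of left-directed horizontal edges of $N(D)$ is in bijection with the internal vertices of $N(D)$, and thus with the union of $+$'s and $\bullet$'s appearing in $D$.

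Next, by the weighting rules, each horizontal edge into a $+$-vertex contributes an $\rr^*$-parameter and each horizontal edge into a $\bullet$-vertex contributes an $\rr$-parameter, while vertical edges carry no free parameters. Taking the product of weight spaces over all left-directed edges therefore yields $(\rr^*)^{\#+} \times \rr^{\#\bullet}$, and composing with the bijection of Theorem \ref{thm:go-networks} produces the desired identification $\cD \simeq \rr^{\#\bullet} \times (\rr^*)^{\#+}$.

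I expect no serious obstacle: the argument is a routine parameter count once the structure of $N(D)$ has been read off from Definition \ref{def:go-network}. The only mildly subtle point is to confirm that $\bullet$-vertices neither generate extra incoming horizontal edges nor fail to produce their own, so that the bijection between weighted edges and symbols in $D$ is exact.
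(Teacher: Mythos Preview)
Your proposal is correct and matches the paper's approach: the paper states this corollary without proof, treating it as an immediate consequence of Theorem~\ref{thm:go-networks} together with the definition of weighted Go-networks, and your argument is precisely the parameter count that makes this explicit. One small remark: your aside that ``$\bullet$-vertices are never stopping points'' is not actually needed for the count, since the unique incoming left-directed edge at an internal vertex $v$ is always the edge built by traveling right from $v$ itself, and this holds regardless of which vertices serve as stopping points for edges built from vertices to the left of $v$ (those produce \emph{outgoing} edges at $v$, not incoming ones).
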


\begin{cor} \label{cor:no_flow_vanishes}
Let $D$ be a Go-diagram indexing the Deodhar component $\cD$. The Pl\"ucker coordinate $\Delta_S$ vanishes uniformly on $\cD$ if and only if there is no flow to $S$ in $N(D)$.
\end{cor}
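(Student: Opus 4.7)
My plan is to derive this as a direct consequence of Theorem \ref{thm:go-networks}, which identifies points of $\cD$ with weighted Go-networks of shape $N(D)$ via the flow formula \eqref{eqn:network_coordinates}. Viewing each Pl\"ucker coordinate $\Delta_S$ as a polynomial function in the edge weights (which live in $\rr$ for $\bullet$-edges and in $\rr^{\ast}$ for $+$-edges), the question becomes whether that polynomial is identically zero on the parameter space $\rr^{\#\bullet} \times (\rr^{\ast})^{\#+}$.

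The $(\Leftarrow)$ direction is immediate: if there is no collection of vertex-disjoint paths from the sources to $S$, then the sum in \eqref{eqn:network_coordinates} defining $\Delta_S$ is empty, and hence $\Delta_S \equiv 0$ on $\cD$.

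For the $(\Rightarrow)$ direction I would argue the contrapositive: if at least one flow $P : I \to S$ exists, then $\Delta_S$ does not vanish identically. The key observation is that a vertex-disjoint collection of directed paths in $N(D)$ is determined by its underlying set of edges, so distinct flows to $S$ contribute distinct monomials in the edge weights. Each such monomial appears in the flow sum with coefficient $\pm 1$, so no cancellations between monomials are possible: the polynomial $\Delta_S$ in the edge weights is a nonzero integer combination of pairwise distinct monomials. In particular it is a nonzero polynomial, hence nonvanishing on a Zariski-dense open subset of $\rr^{\#\bullet + \#+}$. Since $\rr^{\#\bullet}\times(\rr^{\ast})^{\#+}$ is itself Zariski-dense in $\rr^{\#\bullet + \#+}$, we can find a weighting in the parameter space on which $\Delta_S \neq 0$, and by Theorem \ref{thm:go-networks} this yields a point of $\cD$ where $\Delta_S \neq 0$.

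The only mildly subtle point, and the step I would pay the most attention to, is the claim that the edge set of a flow recovers the flow itself. This is essentially automatic in our setting because all edges are oriented down and left and each internal vertex has at most one outgoing edge in each direction, so given a vertex-disjoint union of directed paths from sources to sinks, one reconstructs the individual paths uniquely by tracing out-edges from each source. With this observation in hand, the distinct-monomials argument goes through, completing the proof.
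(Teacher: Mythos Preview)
Your approach is the same as the paper's: the paper simply says ``choose the edge weights to be algebraically independent,'' which is a one-line version of your Zariski-density argument. Both arguments reduce to showing that the flow polynomial in \eqref{eqn:network_coordinates} is a nonzero polynomial whenever at least one flow exists, and you go further than the paper by actually trying to justify this.

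That justification has a gap, however. In the paper's conventions only the \emph{horizontal} edges (those directed left into $+$- or $\bullet$-vertices) carry weights; the downward edges are unweighted. Hence the monomial attached to a flow records only its set of horizontal edges, not its full edge set. Your key step ``distinct flows have distinct edge sets, hence distinct monomials'' therefore does not follow as written: you must show that distinct flows have distinct \emph{weighted}-edge sets. Relatedly, your structural claim that ``each internal vertex has at most one outgoing edge in each direction'' is false for $+$-vertices, which can have several leftward out-edges (one to each internal vertex to the left up to and including the next $+$); see the curved edges in Example~\ref{ex:go-network}.

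The gap is genuinely fillable. Given the set $H$ of horizontal edges used by a flow, one can reconstruct the flow by tracing from each source $i \notin S$: at any visited vertex $v$, vertex-disjointness forces at most one outgoing horizontal edge of $v$ to lie in $H$; take it if present, and otherwise take $v$'s unique downward edge. An easy induction shows this tracing reproduces the original paths, so $H$ determines the flow and distinct flows do give distinct monomials. With this adjustment your argument is complete and agrees with (indeed, is more careful than) the paper's.
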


\begin{proof}
If there is not a flow to $S$, the sum (\ref{eqn:network_coordinates}) is empty, so $\Delta_S$ vanishes uniformly on $\cD$. If there is a flow to $S$, choosing all of the edge weights to be algebraically independent yields a point in $\cD$ where $\Delta_S \neq 0$.
\end{proof}

We will make reference to the following special case of this corollary.

\begin{cor} \label{cor:white_stone_no_flow}
Let $D$ be a Go-diagram and suppose $b \in D$ contains a white stone. Then, there is not a flow to $I_b$ in $N(D)$.
\end{cor}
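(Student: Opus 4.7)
The plan is to recognize this as an immediate consequence of Definition \ref{dfn:plucker_coordinates} combined with Corollary \ref{cor:no_flow_vanishes}. There is no substantial mathematical content to grind through; the work was done in proving the preceding corollary.

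First I would note that by Definition \ref{dfn:plucker_coordinates}, if the box $b \in D$ is filled with a white stone $\circ$, then the defining constraints of the Deodhar component $\cD$ include $\Delta_{I_b} = 0$. In particular, the Pl\"ucker coordinate $\Delta_{I_b}$ vanishes uniformly on every point of $\cD$.

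Next I would invoke Corollary \ref{cor:no_flow_vanishes} with $S = I_b$: this corollary asserts that for any subset $S \in \binom{[n]}{k}$, the coordinate $\Delta_S$ vanishes uniformly on $\cD$ if and only if there is no collection of vertex disjoint paths in $N(D)$ from the sources to $S$. Since we have just observed $\Delta_{I_b} \equiv 0$ on $\cD$, the ``only if'' direction forces the absence of any flow from the sources to $I_b$ in the Go-network $N(D)$, which is precisely the claim.

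The only conceivable obstacle would be if $I_b$ failed to be the right index to plug into Corollary \ref{cor:no_flow_vanishes}, but Definition \ref{def:set_I_b} assigns $I_b$ to the box $b$ precisely so that Definition \ref{dfn:plucker_coordinates} reads $\Delta_{I_b} = 0$ when $b$ is a white stone. So the argument is literally a two-step citation chain, and no diagrammatic or combinatorial verification is needed beyond what is already encoded in the statements of Definition \ref{dfn:plucker_coordinates} and Corollary \ref{cor:no_flow_vanishes}.
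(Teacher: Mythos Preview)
Your proposal is correct and matches the paper's approach exactly: the paper presents this corollary without proof, introducing it as ``the following special case of this corollary,'' meaning Corollary \ref{cor:no_flow_vanishes}. Your two-step citation chain through Definition \ref{dfn:plucker_coordinates} and Corollary \ref{cor:no_flow_vanishes} is precisely the intended argument.
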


For a point $p$ in the Deodhar component $\cD$ and set $S \in \binom{[n]}{k}$, the corresponding Pl\"ucker coordinate, $\Delta_S(p)$, of $p$ will not, in general, depend on the entire network. For instance, if $i_1 \in S$, then $i_1$ will not be involved in any flow determining $\Delta_S(p)$. That is, the value of $\Delta_S(p)$ will not depend on the top row of the network. Similarly, if $j_n \notin S$, the value of $\Delta_S(p)$ will not depend on the leftmost column of the network. This generalizes to the following observation.

\begin{cor} \label{cor:restricting_to_columns}
Let $D$ be a Go-diagram whose Ferrers shape has vertical steps $I = \{i_1, \dots, i_k\}$ and horizontal steps $J = \{j_1, \dots, j_{n-k}\}$ with $i_1 < \cdots < i_k$ and $j_1 < \cdots < j_{n-k}$. Let $p$ lie in the Deodhar component indexed by $D$ let its associated weighted Go-network be $N(D)(p)$. For $S \in \binom{[n]}{k}$ suppose the first $\ell$ elements of $I$ and first $m$ elements of $J$ are not in $S$. That is $\{i_1, i_2, \dots, i_{\ell}\} \subseteq S$ and $\{j_n, \dots, j_{n-m+1}\} \not \subset S \cap J$. Then, the Pl\"{u}cker coordinate $\Delta_{S}(p)$ is determined by the restriction of $N(D)$ to the rows $i_{\ell+1}, i_{\ell + 2}, \dots, i_{k}$ and columns $j_1, \dots, j_{n-m}$ of $N(D)(p)$.
\end{cor}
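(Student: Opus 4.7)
The plan is to invoke Theorem \ref{thm:go-networks}, which writes $\Delta_S(p)$ as the signed sum $\sum_{P : I \to S} \mathrm{sgn}(P) \prod_{p \in P}\prod_{e \in p} w(e)$ over vertex-disjoint path collections in the weighted Go-network $N(D)(p)$, and then to argue that every such collection is supported on the claimed restricted subnetwork, with matching edge weights and crossing numbers.

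The first step is to pin down the trivial paths in any collection $P$. A source of $N(D)$ is a boundary vertex with no incoming edges, so for every $t \in I \cap S$ the only path from source $t$ to endpoint $t$ is the empty path, of weight $1$ and with no edge crossings. The hypothesis $\{i_1,\dots,i_\ell\} \subseteq S$ therefore forces trivial paths at the top $\ell$ sources, and every non-trivial path in $P$ must emanate from one of the sources $i_{\ell+1},\dots,i_k$.

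Next I would exploit the directedness of $N(D)$: every edge points left or down. Since no edge points up, a path starting in row $i_a$ never revisits a row of strictly smaller index, and consequently every non-trivial path lives entirely in the rows $i_{\ell+1},\dots,i_k$. An analogous monotonicity controls the columns: smaller column labels lie further to the right, so each left-pointing edge strictly increases the column label, and the column label along any path is non-decreasing. A path ending at a sink $j_b$ therefore visits only columns whose labels are $\leq j_b$, and under the hypothesis that the leftmost $m$ sinks lie outside $S$, every non-trivial path must end at one of the remaining rightmost sinks, hence stay inside the corresponding block of columns.

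Combining the two monotonicity observations, every vertex-disjoint path collection from $I$ to $S$ is supported on the restriction of $N(D)(p)$ to the stated rows and columns, while the trivial paths at $\{i_1,\dots,i_\ell\}$ are edgeless and so contribute weight $1$ without generating any crossings. Because the edge weights and the crossing count between two paths depend only on the common edges inside the restricted subnetwork, the sum (\ref{eqn:network_coordinates}) defining $\Delta_S(p)$ is literally equal to the analogous sum read off of the restriction, which proves the corollary. There is no substantive obstacle in the argument beyond the bookkeeping that translates between the label orderings of $I$ and $J$ and the geometric left/down orientation of edges in the Go-network.
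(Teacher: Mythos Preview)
Your argument is correct and is exactly the approach the paper has in mind: the paper does not write out a proof of this corollary but only the sentence preceding it, observing that if $i_1\in S$ the top row plays no role in any flow, and if $j_{n-k}\notin S$ the leftmost column plays no role, with the corollary being the evident iteration of this. Your write-up simply makes that iteration explicit via the monotonicity of row and column labels along directed paths in $N(D)$, together with the fact that trivial paths at the top $\ell$ sources contribute weight $1$ and no crossings; nothing further is needed.
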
 

The network perspective gives an alternate algorithm for reading off the sets $I_b$ defining equations for the Deodhar component from Definition \ref{dfn:plucker_coordinates}. Suppose $b = (i_\ell, j_m)$ is in the row with vertical step $i_\ell$ and column with horizontal step $j_m$. The proof relies on interpreting $S = I_b \setminus j_m \cup i_\ell$ as the maximal element of $\binom{[n]}{k}$ in the $\gale{1}$ ordering that satisfies $i_1, i_2, \dots, i_{\ell} \in S$, $j_n, j_{n-1}, \dots, j_{m} \notin S$, and such that $\Delta_S$ does not vanish uniformly on $\cD$.

\begin{thm} \label{thm:sets_from_networks}
Let $b$ be a box in a Go-diagram $D$. Then, $I_b \in \binom{[n]}{k}$ is the maximal set in the $\gale{1}$ ordering such that there is a collection of vertex disjoint paths flowing from the source nodes to $I_b$ in the network associated to the diagram $D'$ obtained by:
\begin{itemize}
\item changing the filling of $b$ to a plus,
\item changing the filling of all boxes in $b^{out}$ to white stones, and
\item changing the filling of all boxes in $b$'s row and $b$'s column aside from $b$ to white stones.
\end{itemize}
\end{thm}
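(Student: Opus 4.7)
The idea is to analyze the flow structure of $N(D')$ directly and match the maximal-$\gale{1}$ set of flow endpoints against the formula for $I_b$ in Definition \ref{def:set_I_b}. Write $b=(i_\ell,j_m)$.

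In $D'$, the only internal vertices are $b$ (a $+$) and those in the strict interior of $b^{in}$ (boxes strictly below $b$'s row and strictly right of $b$'s column, which retain their original filling from $D$). The rows above $b$'s row and columns left of $b$'s column therefore contain no internal vertices whatsoever, forcing every $J$ admitting a flow to satisfy $\{i_1,\dots,i_{\ell-1}\}\subseteq J$ and $J\cap\{j_{m+1},\dots,j_{n-k}\}=\emptyset$. Moreover $b$ is the unique internal vertex in its row and in its column, and its only incoming/outgoing edges connect source $i_\ell$ to sink $j_m$ through $b$. A short case analysis shows that exactly one of $i_\ell,j_m$ then lies in $J$.

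The rest of the flow in $N(D')$ lives on the sub-network associated to the strict interior of $b^{in}$, which is itself the Go-network of a sub-Go-diagram $D_{\mathrm{int}}$ obtained by restricting $D$. Applying Corollary \ref{cor:no_flow_vanishes} and then Proposition \ref{prop:refine_richardson} to the Deodhar component indexed by $D_{\mathrm{int}}$, the maximal-$\gale{1}$ set $J_{\mathrm{sub}}$ of its reachable sinks is determined by the associated distinguished subword; up to the standard re-indexing identifying Coxeter generators of boxes in the sub-shape with those in $D$, this permutation is exactly the $u_{b^{in}}$ of Definition \ref{def:set_I_b}.

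Assembling the three ingredients, the maximal-$\gale{1}$ set admitting a flow in $N(D')$ is the union of $\{i_1,\dots,i_{\ell-1}\}$, the $\gale{1}$-optimal choice from $\{i_\ell,j_m\}$, and $J_{\mathrm{sub}}$. The remaining check is to verify that this agrees with $I_b=\proj_{I_\lambda}(u_{b^{in}}\,v_b\,v_{b^{in}}^{-1})$: reading the projection through the coset structure of $\fS_n/(\fS_k\times\fS_{n-k})$, the factor $u_{b^{in}}v_{b^{in}}^{-1}$ encodes $J_{\mathrm{sub}}$ together with the initial segment $\{i_1,\dots,i_{\ell-1}\}$, while left-multiplication by $v_b$ realizes the transposition between $i_\ell$ and $j_m$. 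The main obstacle is this final identification: showing that the permutation algebra produces the $\gale{1}$-optimal choice between $i_\ell$ and $j_m$ is a coset computation that uses crucially that $v_\lambda$ is Grassmannian with a single descent at position $k$.
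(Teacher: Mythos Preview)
Your proposal is sound and arrives at the same conclusion via essentially the same ingredients (Corollary~\ref{cor:no_flow_vanishes} and Proposition~\ref{prop:refine_richardson}), but the route differs from the paper's in one structural way that is worth noting.

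You restrict to the sub--Go-diagram $D_{\mathrm{int}}$ on the strict interior of $b^{in}$, apply Proposition~\ref{prop:refine_richardson} to the corresponding Deodhar component in a \emph{smaller} Grassmannian, and then reassemble. This forces you to carry the re-indexing of Coxeter generators between the sub-shape and the ambient shape, and it is exactly this translation that makes your ``final identification'' feel like the main obstacle. The paper instead introduces the auxiliary diagram $E$ obtained from $D'$ by replacing the $+$ at $b$ with a $\circ$; $E$ is a Go-diagram of the \emph{same} shape $\lambda$, so Proposition~\ref{prop:refine_richardson} applies directly in $\Gr(k,n)$ and yields the set $J'_b=\{u'(n),\dots,u'(n-k+1)\}$ with $u'=u_{b^{in}}^{D}$ already living in $\fS_n$. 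The paper also observes, via Corollary~\ref{cor:restricting_to_columns}, that $\cE=\cD\cap\cV$ for an explicit coordinate subvariety $\cV$, giving a clean geometric interpretation of $J'_b$. One then simply notes that $s_b$ swaps $i_\ell$ and $j_m$, so $I_b=J'_b\setminus i_\ell\cup j_m$.

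Two small points on your write-up. First, your ``$\gale{1}$-optimal choice from $\{i_\ell,j_m\}$'' is not really a choice: since $i_\ell<j_m$ and the path $i_\ell\to b\to j_m$ uses only the internal vertex $b$ (disjoint from $D_{\mathrm{int}}$), taking $j_m$ is always optimal and always compatible with the optimal sub-flow---so there is nothing delicate to verify here. Second, in the formula $I_b=\proj_{I_\lambda}(u_{b^{in}}\,v_b\,v_{b^{in}}^{-1})$ the factor $v_b$ is not applied on the left, so ``left-multiplication by $v_b$'' is not quite the right description; what matters is simply that the net effect of $s_b$ on the projected set is the swap $i_\ell\leftrightarrow j_m$. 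Your coset reasoning is fine once stated this way, but the paper's $E$-diagram trick sidesteps the bookkeeping entirely.
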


\begin{proof}
Let $D$ be a Go-diagram whose Ferrers shape has vertical steps $i_1 < \cdots < i_k$ and horizontal steps $j_1 < \cdots < j_{n-k}$. Let $b = (i_{\ell},j_m)$ be a box in the Go-diagram $D$, and let $D'$ be the diagram described in the theorem statement. Let $J_b\in \binom{[n]}{k}$ is the maximal set in the $\gale{1}$ ordering such that there is a collection of vertex disjoint paths flowing from the source nodes to $I_b$ in the network $N(D')$. Let $J'_b = J_b \setminus j_{m} \cup i_{\ell}$. Then, $J'_b$ is the maximal set in $\gale{1}$ such that $\Delta_S$ does not vanish uniformly on $\cD$, subject to the constraints
\begin{equation} \label{eqn:deletion_contraction_constraints}
\begin{split}
i_1, i_2, \dots, i_{\ell} \in J'_b, \mbox{ and} \\
j_n, j_{n-1}, \dots, j_m \notin J'_b.
\end{split}
\end{equation}
We show that $I_b = J'_b \setminus i_{\ell} \cup j_{m}$.

Let $E$ be the diagram obtained by changing the filling of $b$ to a white stone in $D'$. Note that $E$ is still a Go-diagram. %Since all squares weakly above $b$ in $b^{out}$ have no pluses or black stones to their right in $b^{in}$, none of these squares will have partners and their fillings are admissible. Likewise, no stones weakly to the left of $b$ have pluses or black stones below them and their fillings are admissible. For all other squares $c$ in the diagram, $c^{in}$ is identical in $D$ and $D'$. Since the admissibility of the filling of $c$ depends only on $c^{in}$ and $D$ was a Go-diagram, $D$ is then a Go diagram.
Let $\cE$ be the Deodhar component determined by $E$. The Go-network of $E$ has no nodes in the rows $i_1, i_2, \dots, i_\ell$ or the columns $j_n, j_{n-1}, \dots, j_m$. So, Theorem \ref{thm:go-networks} implies that $\Delta_{S}$ vanishes on $\cE$ if $\{i_1, i_2, \dots, i_\ell\}\nsubseteq S$ or $S \cap \{j_n, j_{n-1}, \dots, j_m\} \neq \emptyset$. Let $\cV \subseteq Gr(k,n)$ be the variety defined by
\begin{displaymath}
\Delta_S = 0 \mbox{ for all $S$ such that $\{i_1, i_2, \dots, i_\ell\}\nsubseteq S$ or $S \cap \{j_n, j_{n-1}, \dots, j_m\} \neq \emptyset$}.
\end{displaymath}
\noindent
Corollary \ref{cor:restricting_to_columns} then implies that $\cE = \cD \cap \cV$. Let $u',v'$ be the permutations associated to the diagram $E$. Proposition \ref{prop:refine_richardson} says that $\Delta_{u'(n)u'(n-1)\ldots u'(n-k)} \neq 0$ on $\cE$ and $\Delta_S = 0$ on $\cE$ for all $S > \{u(n), u(n-1), \ldots, u(n-k)\}$. So,
\begin{displaymath}
\{u(n), u(n-1), \ldots, u(n-k)\} = J'_b.
\end{displaymath}
From Definition \ref{def:set_I_b}, $J'_b = \proj_{I_\lambda}(s_b u^{E}_{b^{in}} v^{E}_b (v^{E}_{b^{in}}))^{-1})$.
Turning our attention back to the diagram $D'$, the transposition $s_b$ in box $b$ exchanges $i_{\ell}$ and $j_m$. So, $I_b = J'_b \setminus i_{\ell} \cup j_m = J_b$, as desired.
\end{proof}

\section{Geometry of Wilson Loop Diagrams\label{WLD geometry section}}

Having discussed the Deodhar decomposition of $\Gr(k,n)$, we apply this information to understand the geometric structure of Wilson loop diagrams. In Section \ref{WLD defs section}, we introduce the combinatorics of Wilson loop diagrams. We do not, in the paper, discuss the integrals associated to Wilson loop diagrams or their physical significance. We are only interested in the geometric subspaces they parametrize. For the physical significance of these diagrams, see, for instance, \cite{Mason:2010yk, Adamo:2011cb, Amplituhedronsquared, Chicherinetal2016}.

In \cite{casestudy}, the authors introduced a subspace $\cW_{k,n} \subset \Grnn(k,n)$ that is parameterized by the Wilson loop diagrams with $n$ vertices and $k$ propagators. In Section \ref{section:fibers}, we identify a subspace $\Omega_{n+1} \subset \Gr(k,n+1)$ such that $\pi : \Omega_{n+1} \rightarrow \Grnn(k,n)$ is a vector bundle. We show that the restriction of this bundle to a positroid cell is a union of Deodhar components in $\Gr(k,n+1)$. In Section \ref{bundle section}, we restrict this bundle to $\cW_{k,n}$ and show that this is exactly the subset of $\Gr(k,n+1)$ obtained by incorporating the gauge vector into the geometry of Wilson loop diagrams. In Section \ref{boundary chasing section}, we discuss the boundary structure of the cell decomposition of $\cW_{k,n}$, and introduce two families of Wilson loop diagrams that exhibit a curious geometric pattern \cite{Heslopcommunication}. In Section \ref{orientation section}, we use these families of Wilson loop diagrams to show that the subspace $\pi^{-1} (\cW_{k,n}) \subset \Gr(k,n+1)$, defined by the introduction of the gauge vector, is not orientable.

\subsection{Wilson Loop diagrams \label{WLD defs section}}

In this section, we turn to the geometry of Wilson loop diagrams. As combinatorial objects, we present these as the following data:

\begin{dfn}
Let $[n] = \{1, \ldots , n\}$, cyclically ordered. A Wilson Loop diagram, $W$,  is defined by a set of ordered pairs $\cP \subset [n] \times [n]$. We denote it $W = (\cP, n)$.
\end{dfn}

Graphically, we denote this by a convex polygon, with vertices labeled by the elements of $[n]$, respecting the cyclic ordering. For each $p = (i_p, j_p) \in \cP$, draw an internal wavy line between the edges of the polygon defined by the vertices $\{i_p, i_p +1\}$ and $\{j_p, j_p +1\}$. The set $\cP$ is called the set of propagators. We use the convention that $i_p <_1 j_p$.

\begin{dfn}
Let $W = (\cP, n)$ be a Wilson loop diagram.
\begin{enumerate}
\item The support of a set of propagators is the collection of vertices that define the endpoints of the edges the propagators land on. For $p = (i_p, j_p) \in \cP$ , the support of $p$ is written $V(p) = \{i_p, i_p+1, j_p, j_p+1\}$. For $P \subset \cP$, the support is written $V(P) = \bigcup_{q \in P} V(q)$.
\item The propagator set of a set of vertex $v$ is $\{p \in \cP | v \in V(p)\}$. For $V \subset [n]$, $\Prop(V) = \{p \in \cP| V(p) \cap V \neq \emptyset \}$.
\end{enumerate}
\end{dfn}

Geometrically, Wilson loop diagrams parametrize subspaces of $\Gr (k, n)$. Define a matrix of indeterminates associated to each Wilson loop diagram as follows.

\begin{dfn} \label{dfn C}
Let $W = (\cP, n)$ be a Wilson loop diagram. Impose an order on the set of propagators. Define $C(W)$ to be a variable valued matrix with entries \bas C(W)_{p,q} = \begin{cases} c_{p,q} & \textrm{ if } q \in V(p) \\
0  & \textrm{ if } q \not \in V(p).  \end{cases}
\; \eas Here, the entries $c_{p,q}$ are real variables.
\end{dfn}

The variables $c_{p,q}$ are taken to be algebraically independent variables. We interpret $C(W)$ to parametrize set of points in $\Grnn(k,n)$ defined by setting $\Delta_I = 0$ whenever associated minor of $C(W)$ vanishes (when $\Delta_I(C(W)) = 0$) and positive otherwise (when $\Delta_I(C(W)) \neq 0$).  The minor $\Delta_I(C(W))$ is taken in the ring $\rr[c_{p,q}]$. Equivalently, $C(W)$ parametrizes the set of points $\Grnn(k,n)$ which are row spans of full rank matrices obtained by evaluating the entries $c_{p,q}$ of $C(W)$ at real numbers. 

There are two clarifying remarks to be made here. First, since the intersection of positroid strata and $\Grnn(k,n)$ defines a matroid, the space parametrized by $C(W)$ is defined by all the minors of $C(W)$, not just the $0$ minors. Second, there are different conventions for the entries of $\Grnn(k,n)$ such that $C(W)$ parametrizes the subspace of $\Gr(k,n)$ that are given as row spans of full rank matrices obtained by evaluating the entries $c_{p,q}$ of $C(W)$ at positive real numbers. However, this representative is visually more complicated, so we do not use it here.

\begin{dfn} \label{dfn C star}
We define an augmented matrix, $C_*(W)$ by adjoining a column to $C(W)$. \bas C_*(W)_{p,q} = \begin{cases} c_{p,q} & \textrm{ if } q \in V(p) \cup n+1 \\
0  & \textrm{ if } q \not \in (V(p) \cup n+1) \end{cases}
\;. \eas \end{dfn}

We interpret $C_*(W)$ as a matrix parametrizing the set of points in $\Gr(k,n+1)$ determined by setting $\Delta_I = 0$ if and only if the associated minor of $C_{*}(W)$ is $0$, $\Delta_I(C_*(W)) = 0$. Otherwise, if $n+1 \not \in I$, the associated minor is positive. If $n+1 \in I$, we place no restriction on the value of the corresponding Pl\"{u}cker coordinate. So, $C_*(W)$ is defined by the same set of equations defining $C(W)$ viewed as equations on $\Gr(k,n+1)$ rather than $\Grnn(k,n)$. 

%In this paper, were reparametrize this matrix to have a final column of $1$s or $0$s or $-1$s:

%\bas C_*(W)_{p,q} = \begin{cases} \frac{c_{p, n+1}}{|c_{p, n+1}|}  &  \textrm{ if } c_{p, n+1}\\
%\frac{c_{p,q}}{c_{p,n+1}} & \textrm{ if } q \in V(p);\; c_{p,n+1} \neq 0 \\
%c_{p,n+1} & \textrm{ if } c_{p,n+1} = 0
%c_{p,q} & \textrm{ if } q \in V(p);\; c_{p,n+1} = 0
%\end{cases}
%\;. \eas

%In the sequel, we will simplify notation to write $c_{p,q}$ rather than $\frac{c_{p,q}}{c_{p,n+1}}$, and take the division to be understood. When writing $C_*(W)$, however, the variables $c_{p, n+1}$ are permitted to go to zero. In these special limiting cases, we place a $0$ in the appropriate entry of $C_*(W)$, and do not divide the other variables at all.

We are interested in a particular class of Wilson loop diagrams called admissible Wilson loop diagrams.

\begin{dfn}
A Wilson loop diagram $W = (\cP, n)$ is admissible if the following hold:
\begin{enumerate}
\item $n \geq |\cP|+4$
\item $\forall P \subset \cP$, $|P| + 3 \leq |V(P)|$
\item if $p = (i_p, j_p)$, $q = (i_q, j_q) \in \cP$ are two propagators, then $i_p<i_q<j_q<j_p$ in the cyclic ordering of $[n]$
\end{enumerate}
\end{dfn}

The first condition ensures that there are at least $4$ more vertices than propagators. The second ensures that each subset of propagators is supported on at least $3$ more vertices than the set of propagators. In particular, a propagator cannot start and end on the same edge, or on an adjacent edge. Nor can two propagators start and end on the same edges. The last condition ensures that there are no crossing propagators.

In \cite{wilsonloop}, the authors show that if $W$ is an admissible Wilson loop diagram, then $C(W)$ defines a matroid that is also a positroid. In other words, the space parametrized by $C(W)$ intersects $\Grnn(k,n)$. Furthermore, this intersection is a positroid cell, which we call $\Sigma(W)$.

Physically, the admissible Wilson loop diagrams define (tree level) particle interaction in SYM N=4 theory. For $W = (\cP, [n])$, the cyclically ordered set $[n]$ corresponds to the external particles associated to an interaction. The set $\cP$ corresponds to the $MHV$ propagators of the interaction. Each particle is represented as a section of a $|\cP|$-vector bundle over twistor space, projected onto a real subspace by a process called bosonization \cite{Arkani-Hamed:2013jha}. In other words, each vertex is labeled by a vector, $Z_i \in \R^{4 + |\cP|}$. Let $\cZ$ be the matrix whose $i^{th}$ row is the vector $Z_i$. The $Z_i$ labeling the vertices of $W$ satisfy $\cZ \in M_{\R, +}(n, |\cP|+ 4)$, the space of $n \times (|\cP|+4)$ matrices with nonnegative minors. Let $Z_* \in \R^{4+ |\cP|}$ be a gauge vector and indicate by $\cZ_*$ the matrix whose $i^{th}$ row is $Z_i$, if $i \leq n$, and the $n+1^{sh}$ row is $Z_*$. We place no restrictions on the positivity of $\cZ_*$. Given an appropriate choice of gauge vector, $Z_*$, one may associate an integral to the space parametrized by $C_*(W)$, called $I(W)(\cZ_*)$ \cite{Adamo:2011cb, Chicherinetal2016}. This integral is the equivalent of a Feynman integral of the interaction indicated by $W$.

Given a choice of $\cZ_*$, each integral $I(W)(\cZ_*)$ defines a volume on the space parametrized by the product of matrices $C_*(W) \cdot \cZ_*$. One of the goals of the Wilson loop approach to SYM N=4 theory is to write the sum of all $I(W)$, for a given number of propagators and external particles, as a volume of some geometric space. This is to parallel the story of the Ampltihedron \cite{Arkani-Hamed:2013jha}. However, in this section, we show that the union of the spaces parametrized by the $C_*(W)$ as a subspace of $\Gr(|\cP|, n+1)$ is often not an orientable space, and thus cannot have a well defined volume. This is an alternate approach to this problem is provided by \cite{HeslopStewart}, who does this explicitly by working directly with the $I(W)(\cZ_*)$.

\subsection{Fibers of the ``Delete a Column" Map \label{section:fibers}}
In order to understand the relation between $C_*(W)$ and $C(W)$, we begin by understanding what happens to a point in $\Gr(k,n)$ when, when representated as the row span of an $(n+1) \times k$ matrix, the last column is removed. This section proves some general facts about this map. Section \ref{bundle section} applies these findings to understand the geometry of $C_*(W)$.

Choosing an ordered basis, $b_1, \dots, b_{n+1}$ of $\rr^{n+1}$, points in $\Gr(k,n+1)$ may be represented by $(n+1) \times k$ matrices. Let $\pi:\rr^{n+1} \to \rr^{n}$ be the map which projects out the $n+1^{rst}$ coordinate. This extends to a map of Grassmannians,
\ba
\begin{array}{rcl}
\pi: \Gr(k,n+1) & \rightarrow & \Gr(k,n) \cup \Gr(k-1,n) \\
 V & \mapsto  & \pi(V) \end{array}.
\ea \label{projection map}
If $V$ is in a Schubert cell $\sigma_I \subset \Gr(k,n+1)$, where $n+1 \not \in I$, then $\pi(V) \in \Gr(k,n)$. If $n+1 \in I$, then $\pi(V) \in \Gr(k-1,n)$.
This section describes the fibers of $\pi$ over various subsets of $\Gr(k,n)$.
%\begin{comment}{\color{red}The fiber over all of $\Gr(k,n)$ is
%\bas\{ V \in Gr(k,n+1) : v_{n+1} \notin V \}.
%\eas
%At the other extreme, the fiber over a single point $V = \mathrm{span}(v^{(1)}, \dots, v^{(k)}) \in Gr(k,n)$ is
%%
%\begin{displaymath}
%\{ \mathrm{span}(v^{(1)} \oplus g_1, \dots, v^{(k)} \oplus g_k) : (g_1, \dots, g_k) \in \rr^k \},
%\end{displaymath}
%%
%\noindent
%which can be thought of as being represented by the set of matrices obtained by appending a column to a matrix representing $V$. We call this extra column vector being added a {\textit{gauge vector}}.}\end{comment}\note{got rid of what was in red. We can put it back if you want.}

\begin{prop} \label{prop:bundle}
Write $I \in \binom{[n+1]}{k}$, and $\sigma_I$ the Schubert cell defined by this set. Let $\Omega_{n+1} \subset \Gr(k,n+1)$ be the subset of $\Gr(k, n+1)$ defined by \bas \Omega_{n+1} = \bigcup_{I \in \binom{[n+1]}{k}, \; n+1 \not \in I} \sigma_I \;. \eas Then $\Omega_{n+1}$ is a $k$-dimensional vector bundle over $Gr(k,n)$.
\end{prop}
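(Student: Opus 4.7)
The plan is to identify $\Omega_{n+1}$ with the total space of the dual of the tautological rank-$k$ bundle $S^* \to \Gr(k,n)$, which is a standard $k$-dimensional vector bundle. The first step is to translate the combinatorial condition $n+1 \notin I$ into a geometric condition on $V$: a point $V \in \Gr(k, n+1)$ lies in $\sigma_I$ with $n+1 \in I$ exactly when the row-reduced form of any matrix representative of $V$ has a pivot in row $n+1$, which happens exactly when $e_{n+1} \in V$. Hence $\Omega_{n+1} = \{V \in \Gr(k, n+1) : e_{n+1} \notin V\}$, which is precisely the locus where $\pi|_V$ is injective, so $\pi(V) \in \Gr(k,n)$ as already noted in the excerpt.

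Next, I would construct a canonical bijection between $\Omega_{n+1}$ and $S^*$. Given $V \in \Omega_{n+1}$, set $W := \pi(V) \in \Gr(k,n)$; the restriction $\pi|_V \colon V \to W$ is then a linear isomorphism, so define $\phi_V \in W^*$ by $\phi_V(w) = v_{n+1}$, where $v$ is the unique element of $V$ projecting to $w$. Conversely, any pair $(W,\phi)$ with $W \in \Gr(k,n)$ and $\phi \in W^*$ yields the graph $V := \{(w, \phi(w)) : w \in W\} \subset \rr^{n+1}$, a $k$-plane not containing $e_{n+1}$. These two assignments are mutually inverse and fiber-preserving over $\Gr(k,n)$.

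To upgrade this set-theoretic bijection to a vector bundle isomorphism, I would exhibit local trivializations using the standard affine charts of $\Gr(k,n)$. For each $I \in \binom{[n]}{k}$, the locus $U_I := \{W \in \Gr(k,n) : \Delta_I(W) \neq 0\}$ is open, and the $U_I$ cover $\Gr(k,n)$. On $U_I$, each $W$ has a unique matrix representative $N_W$ whose rows indexed by $I$ form the $k \times k$ identity; every $V \in \pi^{-1}(U_I) \cap \Omega_{n+1}$ then has a unique representative obtained by appending an extra row $r \in \rr^k$ below $N_W$, yielding a homeomorphism $\pi^{-1}(U_I) \cap \Omega_{n+1} \cong U_I \times \rr^k$. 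The transition map on $U_I \cap U_J$ acts on the fiber coordinate $r$ by multiplication by an invertible $k \times k$ matrix, namely the change of basis between the two canonical representatives of $W$, which confirms the rank-$k$ vector bundle structure. The only real subtlety is the Schubert-cell translation in the first step; everything else is bookkeeping.
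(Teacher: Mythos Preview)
Your proposal is correct. The paper's proof proceeds more directly: it checks that $\pi$ is well-defined and continuous on $\Omega_{n+1}$, notes the fiber is $\rr^k$, and then exhibits the same local trivializations over the charts $U_J = \{\Delta_J \neq 0\}$ for $J \in \binom{[n]}{k}$ that you use in your final paragraph. What you add is the conceptual identification of $\Omega_{n+1}$ with the dual tautological bundle $S^*$ via the graph construction $V \leftrightarrow (W,\phi_V)$. This is a genuinely nicer packaging: it names the bundle rather than just verifying the axioms, and it explains \emph{why} the transition functions are linear (they are the change-of-basis matrices on $W^*$). The paper's approach, by contrast, is slightly more self-contained in that it does not appeal to any prior knowledge of tautological bundles. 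Your first step, rewriting $\Omega_{n+1}$ as $\{V : e_{n+1} \notin V\}$, is exactly the content of the sentence just before the proposition in the paper (``If $n+1 \in I$, then $\pi(V) \in \Gr(k-1,n)$''), which the paper asserts without justification; you supply the missing pivot argument. One small wording issue: you mix the $(n+1)\times k$ and $k\times(n+1)$ conventions between your first and third paragraphs, so be consistent when writing this up.
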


\begin{proof}
Let $\pi$ be the projection map from $\R^{n+1} \rightarrow \R^n$ as above.

By construction, there are no points in $\Omega_{n+1}$ such that removing the last column in any matrix representation of it will drop the rank of the matrix. Therefore, $\pi$ is well defined on $\Omega_{n+1}$, and surjective onto the image, $\Gr(k, n)$.

To see that $\pi$ is continuous, recall that the usual topology on $\Gr(k,n)$ is the quotient topology on $M_{k \times n}^{\rk k}$, the set of real $k \times n$ matrices of rank $k$, under the the quotient map, $q$, defined by the $\textrm{GL}(k)$ action. Here $M_{k \times n}^{\rk k}$ is endowed with the subspace topology from $M_{k \times n}$. Therefore, and neighborhood, $N_x$ of a point $x \in \Gr(k,n)$ is an open set in $M_{k\times n}$. Since $\pi$ induces a smooth projection $M_{k \times (n+1)} \rightarrow M_{k \times n}$, $q \circ \pi^{-1} \circ q^{-1}(N_x)$ is open in $\Omega_{n+1}$.

The fiber of $\pi$ over any point is isomorphic to $\R^k$. To see that $\pi$ is locally trivializable, consider the standard manifold structure on $\Gr(k,n)$. For $J \in {[n] \choose k}$, define \bas U_J = \{x \in \Gr(k,n) | \Delta_J(x) \neq 0\}\;. \eas That is, $U_J$ is the set of all points in $\Gr(k,n)$ such that the $J^{th}$ Pl\"{u}cker coordinate is nonzero. These open sets $U_J$ form an atlas on $\Gr(k,n)$ when paired with the map $\phi_J$. Given any matrix representation of $x$, $M_x$, let $\Delta_J(M_x)$ be the $k \times k$ minor of $M_x$ defined by $J$. Then  $\phi_J(x)$ is defined as the image of $(\Delta_J(M_x))^{-1}M_x$ in $\R^{k(n-k)}$ under the map that removes the identity matrix in the columns indicated by $J$. One can create a similar atlas for $\Gr(k, n+1)$ to $\R^{k(n+1-k)}$. In fact, for the sets $J$ common to both $\Gr(k, n+1)$ and $\Gr(k, n)$, i.e. precisely those that do not contain $n+1$, this shows that the open set $U_J$ in $\Gr(k, n+1)$ is homeomorphic to a trivial real $k$ vector bundle over the $U_J$ in $\Gr(k, n)$.
\end{proof}

Next, we examine the structure of $\Omega_{n+1}$.

\begin{thm} \label{thm:deodhar_components_of_fiber}
Let $\cD$ be the Deodhar component in $\Gr(k,n)$ labelled by the Go-diagram $D$. Then, $\pi^{-1}(\cD)$ is the union of all Deodhar components $\cD'$ labelled by Go-diagrams $D'$ obtained by adding a column of boxes on the left of $D$.
\end{thm}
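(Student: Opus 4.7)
The plan is to align the combinatorial move ``append a column on the left'' with the geometric projection $\pi$. Let $D$ have Ferrers shape with vertical steps $I_\lambda = \{i_1 < \cdots < i_k\}$ and horizontal steps $\{j_1 < \cdots < j_{n-k}\}$, and let $D'$ be obtained from $D$ by prepending a column of boxes on the left. The new shape has the same vertical steps $I_\lambda$ and horizontal steps $\{j_1 < \cdots < j_{n-k} < n+1\}$, where $n+1$ labels the new southwestern step; in particular $I_{\lambda'} = I_\lambda$ does not contain $n+1$, so the defining inequality $\Delta_{I_{\lambda'}} \neq 0$ places the associated Deodhar component $\cD'$ inside $\Omega_{n+1}$, and $\pi$ is defined on $\cD'$. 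The new column may be filled with $+$, $\circ$, or $\bullet$ subject only to the distinguished-subword condition.

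For the inclusion $\bigcup_{D'} \cD' \subseteq \pi^{-1}(\cD)$, I would show that the defining equations of $\cD$ pull back to equations satisfied on $\cD'$. Every constraint of $\cD$ (from Definition \ref{dfn:plucker_coordinates}) is of the form $\Delta_S = 0$ or $\Delta_S \neq 0$ for some $S \subseteq [n]$ not containing $n+1$. By Corollary \ref{cor:restricting_to_columns}, any such $\Delta_S$ on a point of $\cD'$ is determined by the restriction of $N(D')$ to rows and columns with labels in $[n]$, which is exactly $N(D)$; hence the vanishing pattern of such $\Delta_S$ agrees on $\cD$ and $\cD'$. One also checks $I_b^{D'} = I_b^D$ for each old box $b \in D \subseteq D'$. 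This follows from Theorem \ref{thm:sets_from_networks}: computing $I_b^{D'}$ requires whitening all boxes in $b$'s row, $b$'s column, and $b^{out}$, which encompasses the entire new column (every box in the new column is either in $b^{out}$ or in $b$'s row, since the new column lies strictly to the left of $b$); with the whole new column filled by white stones, the modified network has no internal vertices in the added column, so no flow reaches sink $n+1$, and the maximum flow calculation reduces to that in the corresponding modification of $N(D)$. Finally, the Schubert-style constraints $\Delta_J = 0$ for $J \sgale{1} I_\lambda$ transfer without loss because no $J$ containing $n+1$ can satisfy $J \sgale{1} I_\lambda$ when $n+1 \notin I_\lambda$.

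For the reverse inclusion, I take $x \in \pi^{-1}(\cD)$ and let $\cD'$ be the unique Deodhar component in $\Gr(k, n+1)$ containing it. Since the Pl\"ucker coordinates of $x$ with index in $[n]$ agree with those of $\pi(x) \in \cD$, the set $I_\lambda$ is the $\sgale{1}$-minimum non-vanishing Pl\"ucker label of $x$ in $\Gr(k, n+1)$ as well, as no set $J \ni n+1$ can satisfy $J \sgale{1} I_\lambda$ when $n+1 \notin I_\lambda$. Therefore the Ferrers shape of $\cD'$ has vertical steps $I_\lambda$ and horizontal steps $\{j_1, \ldots, j_{n-k}, n+1\}$, i.e.\ it is $D$'s shape with one extra column on the left. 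The filling of $\cD'$ on the old boxes $b \in D$ is then forced to match the filling of $D$, because the invariance $I_b^{D'} = I_b^D$ established above identifies the vanishing pattern at each $\Delta_{I_b}$ on $\cD'$ with that on $\cD$ box by box.

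The main obstacle is the bookkeeping required to verify $I_b^{D'} = I_b^D$ for old boxes; this is the one place where the added column could, a priori, interfere with the defining constraints. Once this invariance is in place, the two inclusions assemble painlessly, and the disjointness of Deodhar components in $\Gr(k, n+1)$ automatically turns the resulting union into a disjoint decomposition of $\pi^{-1}(\cD)$.
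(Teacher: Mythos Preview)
Your overall strategy is sound and close in spirit to the paper's, but there is a genuine circularity in your reverse inclusion that you should repair.

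In the forward direction you argue, via Theorem~\ref{thm:sets_from_networks}, that for an old box $b$ the computation of $I_b^{D'}$ whitens the entire new column and hence ``reduces to that in the corresponding modification of $N(D)$.'' This is correct \emph{provided} the filling of $D'$ on the old boxes already agrees with $D$: what Theorem~\ref{thm:sets_from_networks} actually shows is that $I_b^{D'}$ depends only on the filling of $D'$ inside $b^{in}$ (away from $b$'s row and column), all of which lies in the old part of the shape. It does \emph{not} show that $I_b^{D'}$ is independent of that old filling. In the reverse inclusion you invoke this same invariance to conclude that the unknown filling of $D'$ on the old boxes must coincide with $D$; but the invariance was only established under the hypothesis that it already does. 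The phrase ``box by box'' hints at the fix without supplying it: you need to run an induction in a valid reading order, so that when you reach an old box $b$, the filling of $D'$ on $b^{in}$ has already been pinned down to match $D$, whence $I_b^{D'}=I_b^D$, and then the value of $\Delta_{I_b}(x)=\Delta_{I_b}(\pi(x))$ (together with the distinguished condition) forces the symbol at $b$ to agree with $D$. Alternatively, once you have the forward inclusion for every $D$, a disjointness argument finishes: the union over all $D$ of $\bigcup_{D'}\cD'$ already exhausts $\Omega_{n+1}$, and the fibers $\pi^{-1}(\cD)$ are pairwise disjoint.

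For comparison, the paper avoids this bookkeeping entirely by working with the network parametrization rather than with the defining Pl\"ucker constraints. Using Theorem~\ref{thm:go-networks}, a point $p\in\pi^{-1}(\cD)$ has a unique weighted Go-network $N(D')(p)$; Corollary~\ref{cor:restricting_to_columns} then identifies $\pi(p)$ with the point parametrized by the network obtained by deleting the leftmost column, and uniqueness of network parametrizations forces that deleted network to have underlying diagram $D$. Your equation-level approach via $I_b$-invariance is more explicit about why the old constraints transfer (and incidentally recovers a fact used later in Theorem~\ref{thm:deodhar boundary struct}), but it costs you the inductive step above; the paper's network argument is shorter precisely because the bijection of Theorem~\ref{thm:go-networks} packages that induction for free.
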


\begin{proof}
Let $p \in \pi^{-1}(\cD)$. Then, $p$ lies in some Deodhar component in $Gr(k,n+1)$. Theorem \ref{thm:go-networks} says there is some unique weighted Go-network $D'(p)$ representing $p$. Corollary \ref{cor:restricting_to_columns} implies that the projection $\pi(p)$ is obtained by deleting all the vertices in the leftmost column of $D'(p)$ and all edges incident to these vertices. Since $\pi(p) \in \cD$, this network obtained by deleting the left column of vertices must be $N(D)$.

Conversely, let $p \in \cD'$. The Deodhar component associated to some Go-diagram $D'$ obtained by adding a column of boxes to the left of $D$. Then, Theorem \ref{thm:go-networks} says that $p$ has a unique realization as a weighted Go-network $D'(p)$. Corollary \ref{cor:restricting_to_columns} says that the entries in the first $n$ rows of a matrix representing $p$ depend only on the part of the part of the Go-network $D'(p)$ agreeing with $D$. So, $\pi(p) \in D$.
\end{proof}

Next we study the Deodhar decomposition of $\Omega_{n+1}$, and the boundary structure therein.

\begin{prop} \label{prop:unique top dim}
Let $\cD$ be a Deodhar component in $Gr(k,n)$. The fiber $\pi^{-1}(\cD)$ contains a unique top dimensional Deodhar component.
\end{prop}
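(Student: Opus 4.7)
The plan is to combine a dimension count using the vector-bundle structure of $\pi$ with the combinatorial description of Deodhar components inside $\pi^{-1}(\cD)$ provided by Theorem~\ref{thm:deodhar_components_of_fiber}. The key dimensional fact will be the following: Proposition~\ref{prop:bundle} asserts that $\pi : \Omega_{n+1}\to \Gr(k,n)$ is a vector bundle of rank $k$, so its restriction to $\cD$ has total dimension $\dim \cD + k$, and hence every Deodhar component contained in $\pi^{-1}(\cD)$ has dimension at most $\dim \cD + k$.

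By Theorem~\ref{thm:deodhar_components_of_fiber}, each such Deodhar component $\cD'$ is indexed by a Go-diagram $D'$ obtained from $D$ by adjoining a new leftmost column of $c$ boxes, for some $0 \leq c \leq k$. A preliminary observation is that the $D$-part of $D'$ carries precisely the same filling as $D$: for a box $b$ in the $D$-part, the set $b^{in}$ lies entirely in $D$ because the new column sits strictly to the west of $b$, so $u^{D'}_{b^{in}} = u^{D}_{b^{in}}$ and the Go-diagram constraints at $b$ are unchanged. Corollary~\ref{cor:dim_of_deodhar_component} then gives
\[
\dim \cD' \;=\; \dim \cD + c - \nu(D'),
\]
where $\nu(D')$ denotes the number of white stones among the $c$ boxes of the new column. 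Since $c\leq k$ and $\nu(D') \geq 0$, the equality $\dim \cD' = \dim\cD + k$ forces $c = k$ and $\nu(D') = 0$.

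The condition $c = k$ uniquely determines the shape $\lambda'$ of $D'$ (the new column is full-height, adding one box to each of the $k$ rows of the bounding rectangle). Given this shape, I claim the remaining filling is also forced. Fix a reading order in which the boxes of each row of $D$ are read right-to-left and the new-column box of that row is read immediately afterwards, then move up a row; this processes the new-column boxes from bottom to top. At each new-column box $b'$, every box in $b'^{in}$ has already been filled, so $u^{D'}_{b'^{in}}$ is known. The distinguishedness property forces $b'$ to be $\bullet$ when $\ell(u^{D'}_{b'^{in}} s_{b'}) < \ell(u^{D'}_{b'^{in}})$ and leaves only $\{+, \circ\}$ as options otherwise; the maximality hypothesis $\nu(D')=0$ then selects $+$. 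This prescription yields a unique filling of the new column, and the resulting diagram is automatically a valid Go-diagram because $\bullet$'s are placed exactly at the length-decreasing positions mandated by distinguishedness.

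The main technical point is verifying that this greedy construction never gets stuck, so that the $\nu(D')=0$ configuration actually exists. This is automatic: distinguishedness is a local condition depending only on previously-read boxes, and both $+$ at a length-increasing position and $\bullet$ at a length-decreasing position are permitted local choices, so the greedy strategy succeeds at every box. Combining the universal upper bound $\dim \cD + k$ with the uniqueness of the maximizer described above yields both existence and uniqueness of the top-dimensional Deodhar component in $\pi^{-1}(\cD)$.
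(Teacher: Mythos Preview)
Your proof is correct and follows essentially the same approach as the paper's: both identify $D'$ as $D$ with an added left column, observe via Corollary~\ref{cor:dim_of_deodhar_component} that the top dimension is achieved exactly when this column contains no $\circ$'s, and note that the greedy bottom-to-top procedure (place $\bullet$ where distinguishedness forces it, $+$ otherwise) produces the unique such filling. One small remark: Theorem~\ref{thm:deodhar_components_of_fiber} together with $\pi^{-1}(\cD)\subset\Omega_{n+1}$ already forces the new column to have full height $k$ (since $n+1$ is the final horizontal step, all $k$ vertical steps precede it), so your case $c<k$ never occurs and the appeal to Proposition~\ref{prop:bundle} for the dimension bound is redundant with the purely combinatorial count you give afterward.
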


\begin{proof}
Let $\cD$ be the Deodhar component in $Gr(k,n)$ labelled by the Go-diagram $D$. Let $\cD' \subset \pi^{-1}(\cD)$ be the Deodhar component labelled by the Go-diagram $D'$. Theorem \ref{thm:deodhar_components_of_fiber} says that $D'$ is obtained by adding a column of $k$ boxes to on the left of $D$. So,
\begin{displaymath}
\mathrm{dim}\left(\cD'\right) \leq \mathrm{dim}\left(\cD \right) +k \;.
\end{displaymath}
\noindent
Corollary \ref{cor:dim_of_deodhar_component} implies that equality occurs if and only if the new column contains no $\circ$'s. There is a unique filling of the new column which does not use any $\circ$'s, which is produced by the following procedure:
%
\begin{comment}
\begin{itemize}
\item Fill the bottom box in the new column with a $+$.
\item Let $b$ be in the new column and suppose all boxes in the new column below $b$ have been assigned a filling.
\item Check whether the box $b$ with a $+$ violates the distinguished property.
\item If it does, fill $b$ with a $\bullet$. Otherwise, fill $b$ with a $+$. \qedhere
\end{itemize}
\end{comment}

\begin{itemize}
\item Let $b = (i, n+1)$ be a box in the new column of $D$ such that for all $j <i$, the box $(j, n+1)$ is filled with a $\bullet$ or $+$. 
\item Fill $b$ with a $+$.
\item If this violates the distinguished property, change this to a $\bullet$.
\item Repeat with $b = (i+1, n)$. \qedhere
\end{itemize}
\end{proof}

In Remark 7.11 in \cite{talaska:network_parametrizations}, Talaska and Williams give the following algorithm for constructing a weighted Go-network from a point in a Grassmannian. Given $V \in Gr(k,n)$, one may use this algorithm to determine which Deodhar component any point $V'$ in the fiber $\pi^{-1}(V)$ lies in.
\begin{itemize}
\item For each box $b$ in the new column starting from the bottom, if $\ell(u_{b^{in}}s_b ) < \ell(u_{b^{in}})$ fill $b$ with a $\bullet$ then proceed to the next box up.
\item If $\ell( u_{b^{in}}s_b ) > \ell(u_{b^{in}})$, compute $\Delta_{I_b}(V)$.
\item If $\Delta_{I_b}(V) = 0$, fill $V$ with a $\circ$. Otherwise, fill $b$ with a $+$. Proceed to the next box up.
\end{itemize}
 We next describe the boundary structure of the Deodhar components in the fiber $\pi^{-1}\left( \cD \right)$ over the Deodhar component $\cD \subset \Gr(k,n)$. $\cD'$ is on the boundary of $\cD$ if and only if all Pl\"ucker coordinates vanishing on $\cD$ also vanish on $\cD'$. We will need the following technical lemma.

\begin{lem} \label{lem:white_stones_do_not_change1}
Let $D$ be a Go-diagram. Suppose the box $b =(i, j)$ contains a $\circ$ or a $+$ and consider a box $c = (k, j)$ with $k > i$. That is, $c$ is below $b$ in the same column.
\begin{itemize}
\item[(i)] If the box $c$ contains a $+$ and $D'$ is the diagram obtained by replacing the $+$ in $c$ with a $\circ$, then the box $b$ does not violate the distinguished property in $D'$.
\item[(ii)] If the box $c$ contains a $\bullet$ and $D'$ is the diagram obtained by replacing the black stone in $c$ with a $+$, then the box $b$ does not violate the distinguished property in $D'$.
\end{itemize}
\end{lem}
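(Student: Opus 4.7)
The plan is to reduce the analysis to a comparison of permutation values at two specific positions and then carry out a short case split. First I would observe that if $b$ is filled with $\circ$ then the distinguished property at $b$ is vacuous, so the conclusion is immediate. Henceforth I would assume $b$ is a $+$ in both $D$ and $D'$, write $s_m$ for the transposition labelling $b$, and write $s_{m+d}$ for the transposition labelling $c$, where $d \geq 1$ counts the number of grid rows separating $b$ and $c$.

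Next I would choose a reading order on $b^{in}$ that lists $c^{in}$ first, then $c$, and finally the remaining boxes $S := b^{in} \setminus (c^{in} \cup \{c\})$. This is a legitimate linear extension because $c$ is maximal in $c^{in} \cup \{c\}$ under the partial order underlying reading orders. Setting $X := u_{c^{in}}$ and $Y := u_S$, one then has
\[u_{b^{in}}^{D} = X \cdot \tau \cdot Y, \qquad u_{b^{in}}^{D'} = X \cdot \tau' \cdot Y,\]
with $(\tau,\tau') = (\varepsilon, s_{m+d})$ in case (i) and $(s_{m+d}, \varepsilon)$ in case (ii). The key observation is that every transposition labelling a box of $S$ has index $\leq m+d-1$, so $Y$ fixes every position $> m+d$ and permutes $\{1,\ldots,m+d\}$. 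Consequently the two expressions for $u_{b^{in}}$ differ by right-multiplication by the transposition $t := Y^{-1} s_{m+d} Y$, which swaps $a := Y^{-1}(m+d)$ and $m+d+1$. Since $b$ (being a $+$) violates the distinguished property iff $u_{b^{in}}(m) > u_{b^{in}}(m+1)$, only the values at positions $m$ and $m+1$ are relevant, and these agree between $D$ and $D'$ whenever $a \notin \{m,m+1\}$.

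What remains are the subcases $a = m$ and $a = m+1$. The subcase $a = m$ (namely $Y(m) = m+d$) is the hardest part, and I expect it to be the main obstacle. I would show that it cannot occur: parametrising the positions of the potential $s_m, s_{m+1}, \ldots, s_{m+d-1}$ stones in $S$ by their indices $j_\ell$ along the respective antidiagonals, the reading order needed for $Y(m) = m+d$ forces $j_0 \leq j_1 \leq \cdots \leq j_{d-1}$, while the exclusion of $b$ from $S$ forces $j_0 \geq 1$ and the row cap of $S$ forces $j_{d-1} \leq 0$, a contradiction.

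Finally, in the subcase $a = m+1$ (i.e., $Y(m+1) = m+d$), setting $p := Y(m)$ (which must lie in $\{1,\ldots,m+d-1\}$), a short calculation yields $u_{b^{in}}^{D}(m) = u_{b^{in}}^{D'}(m) = X(p)$ and $\{u_{b^{in}}^D(m+1), u_{b^{in}}^{D'}(m+1)\} = \{X(m+d), X(m+d+1)\}$, with the two values split between $D$ and $D'$ according to the case. The hypothesis that $b$ does not violate in $D$ reads $X(p) < u_{b^{in}}^{D}(m+1)$, while the filling of $c$ in $D$ supplies the comparison between $X(m+d)$ and $X(m+d+1)$: in case (i), $c$ is a non-violating $+$ so $X(m+d) < X(m+d+1)$; in case (ii), $c$ is a $\bullet$ so $X(m+d) > X(m+d+1)$. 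Chaining these in the appropriate direction yields $X(p) < u_{b^{in}}^{D'}(m+1)$ in both cases, completing the argument.
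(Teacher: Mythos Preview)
Your argument is correct and follows the same strategy as the paper (tracking $u_{b^{in}}$ at the two positions adjacent to $s_b$ and invoking the hypotheses on the fillings of $b$ and $c$), though you carry out the case analysis more thoroughly: the paper's two-case split ``$y=z$'' versus ``$y\neq z$'' corresponds exactly to your $a=m+1$ versus $a\notin\{m,m+1\}$ and silently assumes the subcase $a=m$ cannot occur, a possibility you correctly isolate and exclude. Your sketch for excluding $a=m$ is compressed and implicitly assumes the token climbs monotonically; a cleaner justification uses the row-by-row reading order on $S$, writing $Y=Y_{d-1}\cdots Y_1Y_0$ with $Y_\ell$ the product over the row $\ell$ steps below $b$: each $Y_\ell$ is a product of simple transpositions with strictly increasing indices and therefore satisfies $Y_\ell(q)\le q+1$ for every $q$, whence (since $b\notin S$ forces all indices in $b$'s own row to be at most $m-1$) one gets $Y_0(m)\le m$ and inductively $Y(m)\le m+d-1$. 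Your dismissal of the case $b=\circ$ as vacuous matches the paper's literal definition of ``violating the distinguished property''; the paper's treatment of that case is really establishing the marginally stronger statement that a $\circ$ at $b$ remains a $\circ$ in $D'$.
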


\begin{proof}
We prove point (i) in the case where $b$ contains a $\circ$. The proofs for all other cases are similar and are left to the reader. Suppose that $b$ contains a $\circ$. Let $s_b = (i,i+1)$ and $s_c = (j,j+1)$ be the transpositions associated to boxes $b$ and $c$. Let
\bas
u_{b^{in}}^{D}(i) = x, \\
u_{b^{in}}^{D}(i+1) = y, \\
u_{c^{in}}^{D}(j) = z, \mbox{ and} \\
u_{c^{in}}^{D}(j+1) = w.
\eas
Since $b$ is filled with a $\circ$ and $c$ is filled with a $+$, $y > x$ and $w > z$. If $y \neq z$, $u_{b^{in}}^{D'}(i) = x$ and $u_{b^{in}}^{D'}(i+1) = y$. Then, $\ell(u_{b^{in}}^{D'} s_b) > \ell(u_{b^{in}}^{D'})$ and the white stone in $b$ does not violate the distinguished property. If $y = z$, $u_{b^{in}}^{D'}(i) = x$ and $u_{b^{in}}^{D'}(i+1) = w$. Since $w > z = y > x$, $\ell(u_{b^{in}}^{D'} s_b) > \ell(u_{b^{in}}^{D'})$ and the white stone in $b$ does not violate the distinguished property.
\end{proof}

\begin{thm} \label{thm:deodhar boundary struct}
Let $D'$ and $D''$ be Go-diagrams indexing Deodhar components $\cD'$ and $\cD''$ in the fiber $\pi^{-1}(\cD)$. Then, $\cD''$ is a codimension one boundary of $\cD'$ if and only if $D''$ is obtained by changing a single $+$ in a box $b$ in the leftmost column of $D''$ to a $\circ$, then reading up from $b$ in the leftmost column and changing $\bullet$'s to $+$'s as is necessary to avoid a violation of the distinguished property.
\end{thm}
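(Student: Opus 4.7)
The plan is to prove both directions, using the dimension formula from Corollary \ref{cor:dim_of_deodhar_component} and the flow characterization from Corollary \ref{cor:no_flow_vanishes} as the main tools. For the $(\Leftarrow)$ direction, assume $D''$ arises from $D'$ by the described operation. I will verify three things in turn. First, that $D''$ is a valid Go-diagram: applying Lemma \ref{lem:white_stones_do_not_change1}(i) will show that the $+ \to \circ$ change at $b$ cannot induce a distinguished-property violation at any $+$ or $\circ$ box above $b$; applying part (ii) inductively will show that each $\bullet \to +$ change along the cascade does not create new violations at $+$ or $\circ$ boxes above it, and any $\bullet$ above that becomes violating is swept up in the next step of the cascade. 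Second, the dimension drops by exactly one: by Corollary \ref{cor:dim_of_deodhar_component}, $\dim \mathcal{D}$ equals the total count of $+$'s and $\bullet$'s, so the single $+ \to \circ$ change reduces it by one while each $\bullet \to +$ change preserves it. Third, that $\mathcal{D}'' \subseteq \overline{\mathcal{D}'}$: translating via Corollary \ref{cor:no_flow_vanishes}, this reduces to showing that any subset $S$ without a vertex-disjoint flow in $N(D')$ also admits no such flow in $N(D'')$, which follows because $N(D'')$ is obtained from $N(D')$ by deleting the internal vertex at $b$ and locally reconnecting edges, so any flow realizable in $N(D'')$ lifts to a flow in $N(D')$ avoiding $b$.

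For the $(\Rightarrow)$ direction, suppose $\mathcal{D}''$ is a codimension-one boundary of $\mathcal{D}'$. Theorem \ref{thm:deodhar_components_of_fiber} immediately restricts the possible changes to the leftmost column of the Go-diagram, and Corollary \ref{cor:dim_of_deodhar_component} pins down that this column of $D''$ contains exactly one more $\circ$ than that of $D'$. My plan is to locate the bottom-most box $b$ in the leftmost column where $D'$ and $D''$ disagree, argue that $D'$ must have a $+$ (not a $\bullet$) at $b$ and $D''$ a $\circ$ there, and then verify that the fillings strictly above $b$ in $D''$ are obtained from those in $D'$ precisely by the $\bullet \to +$ cascade forced by the distinguished-property requirement. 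Translating the boundary inclusion $\mathcal{D}'' \subseteq \overline{\mathcal{D}'}$ into a statement about flows in the associated Go-networks via Theorem \ref{thm:sets_from_networks} and Corollary \ref{cor:no_flow_vanishes} is the natural tool for identifying the correct box $b$ and ruling out alternative fillings.

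The hard part will be the reverse direction, specifically ruling out more intricate change patterns (for example, two simultaneous $+ \to \circ$ changes paired with a $\circ \to \bullet$ elsewhere, which still yields a net of one additional $\circ$ and hence codimension one on dimensional grounds). Any such alternative either creates a new Plücker vanishing on $\mathcal{D}''$ not already present on $\mathcal{D}'$, or turns a Plücker coordinate vanishing on $\mathcal{D}'$ into a nonvanishing one on $\mathcal{D}''$; both options contradict the boundary inclusion. The bookkeeping of which Plücker coordinates are newly vanishing or newly nonvanishing, read off from the network via Theorem \ref{thm:sets_from_networks} and Corollary \ref{cor:no_flow_vanishes}, is where the argument will require the most care, and it is this exhaustive case analysis that represents the main obstacle.
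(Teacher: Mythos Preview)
Your outline is workable, but you are making the reverse direction far harder than necessary because you have missed the structural fact that drives the paper's argument. By Theorem~\ref{thm:sets_from_networks}, for any box $c$ in the leftmost column the set $I_c$ is computed after overwriting all of $c^{out}$ and all of $c$'s row and column with prescribed fillings; since $c$ is already in the leftmost column, this means $I_c$ depends only on the fixed part $D$ to the right and not at all on how the leftmost column is filled. Hence the list of Pl\"ucker constraints attached to leftmost-column boxes is literally identical for every Go-diagram in the fiber $\pi^{-1}(\cD)$. The boundary inclusion $\cD'' \subseteq \overline{\cD'}$ then reduces to the purely combinatorial statement that every leftmost-column $\circ$ in $D'$ is also a $\circ$ in $D''$. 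Codimension one forces exactly one additional $\circ$; it cannot come from a $\bullet$ (since $\bullet$'s are placed only where the distinguished property forces them, and that is determined by $b^{in}$, which agrees at the bottom-most box of disagreement), so it comes from a $+$; and Lemma~\ref{lem:white_stones_do_not_change1} then pins down the remaining changes as precisely the forced $\bullet\to +$ cascade. Your feared ``exhaustive case analysis'' dissolves: a pattern like two $+\to\circ$ moves compensated by a $\circ\to\bullet$ is ruled out immediately, because a $\circ$ in $D'$ can never become a $\bullet$ in $D''$.

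One caution on your $(\Leftarrow)$ flow-lifting sketch: the $\bullet\to +$ changes \emph{do} alter the Go-network, not just the weights. In Definition~\ref{def:go-network} only $+$-vertices (and boundary vertices) serve as targets of the right/down edges, so promoting a $\bullet$ to a $+$ reroutes edges that previously skipped over it. Thus $N(D'')$ is not simply $N(D')$ with the vertex at $b$ deleted, and your claim that ``any flow in $N(D'')$ lifts to a flow in $N(D')$ avoiding $b$'' needs more care than you indicate. The paper's approach sidesteps this entirely by comparing the defining constraints $\Delta_{I_c}$ directly rather than reasoning about arbitrary flows.
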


\begin{proof}
Suppose that $D''$ is obtained from $D'$ in the manner described. Lemma \ref{lem:white_stones_do_not_change1} guarantees that $D''$ is in fact a Go-diagram. Theorem \ref{thm:sets_from_networks} implies that the set $I_c$ associated to the box $c = (i, n+1)$ in the leftmost column is identical in $D''$ and $D'$. Since the set of boxes containing $\circ$'s in the leftmost column of $D'$ is a subset of the set of boxes containing white stones in $D''$, $\cD''$ is on the boundary of $\cD'$. Counting the number of $+$'s and $\bullet$'s in $D'$ and $D''$, $\mathrm{dim}\left( \cD'' \right) = \mathrm{dim}\left( \cD' \right) - 1$.

Now, suppose that $\cD'$ and $\cD''$ are Deodhar components in the fiber $\pi^{-1}\left(\cD\right)$ labelled by the Go-diagrams $D'$ and $D''$, and that $\cD''$ is a codimension one boundary of $\cD'$. Since the sets $I_c$ labelling boxes in the leftmost columns of $D'$ and $D''$ are identical, the set of boxes containing $\circ$'s in the leftmost column of $D'$ must be a subset of the set of boxes containing $\circ$'s in $D''$. Since $\cD''$ has codimension one, exactly one square in the leftmost column of $D'$ must change to a $\circ$ in $D''$. Since in a Go-diagram a square is filled with a $\bullet$ if and only if it violates the distinguished property, $\bullet$'s cannot change to $\circ$'s. Then, one $+$ in a box $b$ must change to a $\circ$. Lemma \ref{lem:white_stones_do_not_change1} says that changing this plus to a white stone will not cause any of the $\circ$'s or $+$'s in the diagram to violate the distinguished property. However, it is possible that a $\bullet$ will no longer reduce the length of the associated subword after changing $b$ to a $\circ$. Since $\cD''$ is a codimension one boundary, any $\bullet$ which no longer violates the distinguished property must be changed into $+$.
\end{proof}

\begin{eg}
Let $\cD$ be the Deodhar component labelled by the following Go-diagram.
\begin{displaymath}
\begin{tikzpicture}
\begin{scope}[scale=.5]
\draw[step = 1] (0,0) grid (2,3);
%
%bottom row
\draw[fill = white] (.5,.5) circle (.25);
\draw[thick] (1.5,.2) -- (1.5,.8);
\draw[thick] (1.2,.5) -- (1.8,.5);
%
%middle row
\draw[thick] (.5,1.2) -- (.5,1.8);
\draw[thick] (.2,1.5) -- (.8,1.5);
\draw[fill = white] (1.5,1.5) circle (.25);
%
%top row
\draw[fill = black] (.5,2.5) circle (.25);
\draw[thick] (1.5,2.2) -- (1.5,2.8);
\draw[thick] (1.2,2.5) -- (1.8,2.5);
\end{scope}
\end{tikzpicture}
\end{displaymath}
\noindent
The following is the boundary poset of Go-diagrams labelling Deodhar components in the fiber $\pi^{-1}\left(\cD\right)$.
\begin{displaymath}
\begin{tikzpicture}
\begin{scope}[scale=.5]
\draw[step=1] (0,0) grid (3,3);
\draw[very thick] (1,0) -- (1,3);
%
%left column
\draw[thick] (.5,2.2) -- (.5,2.8);
\draw[thick] (.2,2.5) -- (.8,2.5);
\draw[fill = black] (.5,1.5) circle (.25);
\draw[thick] (.5,.2) -- (.5,.8);
\draw[thick] (.2,.5) -- (.8,.5);
%
%middle column
\draw[fill = black] (1.5,2.5) circle (.25);
\draw[thick] (1.5,1.2) -- (1.5,1.8);
\draw[thick] (1.2,1.5) -- (1.8,1.5);
\draw[fill = white] (1.5,.5) circle (.25);
%
%right column
\draw[thick] (2.5,2.2) -- (2.5,2.8);
\draw[thick] (2.2,2.5) -- (2.8,2.5);
\draw[fill = white] (2.5,1.5) circle (.25);
\draw[thick] (2.5,.2) -- (2.5,.8);
\draw[thick] (2.2,.5) -- (2.8,.5);
\begin{scope}[xshift=-3cm, yshift=-5cm]
\draw[step=1] (0,0) grid (3,3);
\draw[very thick] (1,0) -- (1,3);
%
%left column
\draw[thick] (.5,2.2) -- (.5,2.8);
\draw[thick] (.2,2.5) -- (.8,2.5);
\draw[thick] (.5,1.2) -- (.5,1.8);
\draw[thick] (.2,1.5) -- (.8,1.5);
\draw[fill = white] (.5,.5) circle (.25);
%
%middle column
\draw[fill = black] (1.5,2.5) circle (.25);
\draw[thick] (1.5,1.2) -- (1.5,1.8);
\draw[thick] (1.2,1.5) -- (1.8,1.5);
\draw[fill = white] (1.5,.5) circle (.25);
%
%right column
\draw[thick] (2.5,2.2) -- (2.5,2.8);
\draw[thick] (2.2,2.5) -- (2.8,2.5);
\draw[fill = white] (2.5,1.5) circle (.25);
\draw[thick] (2.5,.2) -- (2.5,.8);
\draw[thick] (2.2,.5) -- (2.8,.5);
\end{scope}
\begin{scope}[xshift=3cm, yshift=-5cm]
\draw[step=1] (0,0) grid (3,3);
\draw[very thick] (1,0) -- (1,3);
%
%left column
\draw[fill = white] (.5,2.5) circle (.25);
\draw[fill = black] (.5,1.5) circle (.25);
\draw[thick] (.5,.2) -- (.5,.8);
\draw[thick] (.2,.5) -- (.8,.5);
%
%middle column
\draw[fill = black] (1.5,2.5) circle (.25);
\draw[thick] (1.5,1.2) -- (1.5,1.8);
\draw[thick] (1.2,1.5) -- (1.8,1.5);
\draw[fill = white] (1.5,.5) circle (.25);
%
%right column
\draw[thick] (2.5,2.2) -- (2.5,2.8);
\draw[thick] (2.2,2.5) -- (2.8,2.5);
\draw[fill = white] (2.5,1.5) circle (.25);
\draw[thick] (2.5,.2) -- (2.5,.8);
\draw[thick] (2.2,.5) -- (2.8,.5);
\end{scope}
\begin{scope}[xshift=-3cm, yshift=-10cm]
\draw[step=1] (0,0) grid (3,3);
\draw[very thick] (1,0) -- (1,3);
%
%left column
\draw[thick] (.5,2.2) -- (.5,2.8);
\draw[thick] (.2,2.5) -- (.8,2.5);
\draw[fill = white] (.5,1.5) circle (.25);
\draw[fill = white] (.5,.5) circle (.25);
%
%middle column
\draw[fill = black] (1.5,2.5) circle (.25);
\draw[thick] (1.5,1.2) -- (1.5,1.8);
\draw[thick] (1.2,1.5) -- (1.8,1.5);
\draw[fill = white] (1.5,.5) circle (.25);
%
%right column
\draw[thick] (2.5,2.2) -- (2.5,2.8);
\draw[thick] (2.2,2.5) -- (2.8,2.5);
\draw[fill = white] (2.5,1.5) circle (.25);
\draw[thick] (2.5,.2) -- (2.5,.8);
\draw[thick] (2.2,.5) -- (2.8,.5);
\end{scope}
\begin{scope}[xshift=3cm, yshift=-10cm]
\draw[step=1] (0,0) grid (3,3);
\draw[very thick] (1,0) -- (1,3);
%
%left column
\draw[fill = white] (.5,2.5) circle (.25);
\draw[thick] (.5,1.2) -- (.5,1.8);
\draw[thick] (.2,1.5) -- (.8,1.5);
\draw[fill = white] (.5,.5) circle (.25);
%
%middle column
\draw[fill = black] (1.5,2.5) circle (.25);
\draw[thick] (1.5,1.2) -- (1.5,1.8);
\draw[thick] (1.2,1.5) -- (1.8,1.5);
\draw[fill = white] (1.5,.5) circle (.25);
%
%right column
\draw[thick] (2.5,2.2) -- (2.5,2.8);
\draw[thick] (2.2,2.5) -- (2.8,2.5);
\draw[fill = white] (2.5,1.5) circle (.25);
\draw[thick] (2.5,.2) -- (2.5,.8);
\draw[thick] (2.2,.5) -- (2.8,.5);
\end{scope}
\begin{scope}[yshift = -15cm]
\draw[step=1] (0,0) grid (3,3);
\draw[very thick] (1,0) -- (1,3);
%
%left column
\draw[fill = white] (.5,2.5) circle (.25);
\draw[fill = white] (.5,1.5) circle (.25);
\draw[fill = white] (.5,.5) circle (.25);
%
%middle column
\draw[fill = black] (1.5,2.5) circle (.25);
\draw[thick] (1.5,1.2) -- (1.5,1.8);
\draw[thick] (1.2,1.5) -- (1.8,1.5);
\draw[fill = white] (1.5,.5) circle (.25);
%
%right column
\draw[thick] (2.5,2.2) -- (2.5,2.8);
\draw[thick] (2.2,2.5) -- (2.8,2.5);
\draw[fill = white] (2.5,1.5) circle (.25);
\draw[thick] (2.5,.2) -- (2.5,.8);
\draw[thick] (2.2,.5) -- (2.8,.5);
\end{scope}
\draw[very thick] (1.4,-.2) -- (-1.5,-1.8);
\draw[very thick] (1.6,-.2) -- (4.5,-1.8);
\draw[very thick] (-1.5,-5.2) -- (-1.5,-6.8);
\draw[very thick] (4.5,-5.2) -- (4.5,-6.8);
\draw[very thick] (-1.3,-5.2) -- (4.3,-6.8);
\draw[very thick] (-1.5,-10.2) -- (1.4,-11.8);
\draw[very thick] (4.5,-10.2) -- (1.6,-11.8);
\end{scope}
\end{tikzpicture}
\end{displaymath}
\end{eg}

While we cannot restrict $\Omega_{n+1} \rightarrow \Gr(k,n)$ to a bundle that can be embedded into $\Grnn(k, n+1)$. However, we may  define a restriction of the map $\pi$ to a map of non-negative Grassmannians. \bas
\pi_{\geq 0} : Gr_{\geq 0}(k,n+1) \to Gr_{\geq 0}(k,n) \cup Gr_{\geq 0}(k-1,n).
\eas Deleting a column does not effect the positivity of any minors which do not involve the deleted column. However, in this restriction, there is no longer a bundle structure since the fibers of $\pi_{\geq 0}$ over different points are no longer equidimensional.

\begin{eg}
Consider the point
\bas
V =
\mathrm{span} \left(
\begin{array}{ccc}
1 & 0 & -1 \\
0 & 1 & 0
\end{array}
\right) \in \Grnn(2,3).
\eas
The fiber of $\pi$ over $V$ is
\bas
\pi^{-1}(V) =
\left\{
\mathrm{span} \left(
\begin{array}{ccc|c}
1 & 0 & -1 & g_1 \\
0 & 1 & 0 & g_2
\end{array}
\right) : (g_1, g_2) \in \rr^2 \right\}.
\eas A point in this fiber is positive if and only if $g_1 \leq 0$ and $g_2 = 0$. So, $\pi_{\geq 0}^{-1}(V)$ is only 1-dimensional. On the other hand, the point \bas W = \mathrm{span} \left(
\begin{array}{ccc}
1 & 1 & 0 \\
0 & 0 & 1
\end{array}
\right) \in \Grnn(2,3) \eas has a two dimensional fiber \bas
\pi_{\geq 0}^{-1}(W) =
\left\{
\mathrm{span} \left(
\begin{array}{ccc|c}
1 & 1 & 0 & g_1 \\
0 & 0 & 1 & g_2
\end{array}
\right) : g_1 < \R; \; g_2 > \R_2  \right\}.
\eas
\end{eg}

Let $\Sigma\subset \Grnn(k,n)$ be a positroid cell. We may understand the set $\pi_{\geq 0}^{-1}(\Sigma)$, in terms of Deodhar components as follows. Equation (\ref{eqn:positroid cell plucker}) says that each positroid cell is a semialgebraic subset of $\Grnn(k,n)$, defined by setting certain Pl\"ucker coordinates to zero and demanding other Pl\"ucker coordinates are uniformly non-negative. The fiber $\pi^{-1}(\cD)$ is the semialgebraic subset of $Gr(k,n+1)$ defined by the exact same equations that define $\cD$, only now these equations are viewed as equations in the Pl\"ucker coordinates on $Gr(k,n+1)$. No constraints are imposed on the Pl\"ucker coordinates $\Delta_I$ when $n+1 \in I$. For $\pi^{-1}_{\geq 0}(\Sigma)$, we must additionally impose $\Delta_I \geq 0$ when $n+1 \in I$. The structure of $\pi^{-1}_{\geq 0}(\Sigma)$ is described in the following lemma.

\begin{thm} \label{thm:fiber_boundary_structure_positroid}
Let $\Sigma \subset Gr_{\geq 0}(k,n)$ be the positroid cell given by the \Le-diagram $D$. Then,
\begin{itemize}
\item[(i)] $\pi^{-1}_{\geq 0}\left( \Sigma \right)$ contains a unique top dimensional positroid cell $\Sigma'$.
\item[(ii)] $\mathrm{dim}(\Sigma') = \mathrm{dim}(\Sigma) + k$ if and only if $D$ contains no $+$'s with $\circ$'s below them in the same column.
\item[(iii)] The boundary structure of all the positroid cells in $\pi^{-1}_{\geq 0}(\Sigma)$ is that of a Boolean lattice.
\end{itemize}
\end{thm}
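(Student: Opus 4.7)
The plan is to combine Theorem \ref{thm:positive_deodhar_components}, which says positroid cells of $\Grnn$ correspond exactly to the Le-diagrams among Deodhar components, with Theorem \ref{thm:deodhar_components_of_fiber}, which describes $\pi^{-1}(\cD)$ as a union of Deodhar components indexed by prepending a column to $D$. Together, these identify the positroid cells in $\pi^{-1}_{\geq 0}(\Sigma)$ with the Le-diagrams obtained by placing a column of $k$ new boxes on the left of $D$, each filled with $+$ or $\circ$ (no $\bullet$, since Le-diagrams have none), such that the resulting diagram still satisfies the Le-property.

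Next I would characterize which fillings of the new column yield a Le-diagram. Since the new column is leftmost, a $\circ$ in it has nothing to its left and so cannot witness a Le-violation on its own. A $+$ placed in row $i$ of the new column, however, sits strictly to the left of every entry in row $i$ of $D$; the Le-property then fails exactly when some $\circ$ in row $i$ of $D$ already has a $+$ above it in its column. Call row $i$ \emph{admissible} if row $i$ of $D$ contains no such $\circ$, and let $R \subseteq [k]$ be the set of admissible rows. Then Le-diagrams in the fiber are in bijection with subsets $T \subseteq R$: the diagram $D(T)$ has $+$ in the rows of $T$ and $\circ$ in the other rows of the new column. By Corollary \ref{cor:dim_of_deodhar_component}, the associated positroid cell $\Sigma(T)$ has $\dim(\Sigma(T)) = \dim(\Sigma) + |T|$.

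This immediately gives all three parts. For (i), the unique maximum element of $2^R$ is $T = R$ itself, producing the unique top-dimensional cell $\Sigma'$ with $\dim(\Sigma') = \dim(\Sigma) + |R|$. For (ii), $|R| = k$ iff every row is admissible, and a row $i$ fails admissibility iff some column of $D$ contains a $\circ$ in row $i$ with a $+$ above it, i.e.\ iff $D$ has a $+$-above-$\circ$ pattern in some column; so $\dim(\Sigma') = \dim(\Sigma) + k$ iff $D$ contains no $+$ with a $\circ$ below it in the same column. For (iii), Theorem \ref{thm:deodhar boundary struct} says the codimension-one boundaries within $\pi^{-1}(\cD)$ are obtained by changing one $+$ in the leftmost column to $\circ$ and converting $\bullet$'s above it to $+$'s as needed; when the starting diagram is a Le-diagram there are no $\bullet$'s to convert, so the result is again a Le-diagram and corresponds to removing one element of $T$. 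Since every such move stays within the positroid-cell locus and every cover in $2^R$ is realized, the face poset of cells in $\pi^{-1}_{\geq 0}(\Sigma)$ is exactly the Boolean lattice $2^R$.

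The main obstacle is the row-by-row Le-property bookkeeping in the second step: one must be sure that placing $+$'s in \emph{admissible} rows really produces a genuine Le-diagram (no new violations arise inside $D$ because the only new $+$'s-to-the-left live in admissible rows), and that the $\bullet$-to-$+$ conversion of Theorem \ref{thm:deodhar boundary struct} is vacuous throughout this sub-poset, so that the Boolean cover relations of $2^R$ coincide precisely with the codimension-one boundary relations on positroid cells.
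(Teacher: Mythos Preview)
Your proposal is correct and follows essentially the same approach as the paper: both identify the positroid cells in $\pi^{-1}_{\geq 0}(\Sigma)$ with Le-fillings of a new leftmost column, characterize which rows admit a $+$ via the ``$\circ$ in the row with a $+$ above it'' obstruction, and read off (i)--(iii) from the resulting Boolean lattice of admissible subsets. Your explicit introduction of the set $R$ of admissible rows and the bijection with $2^R$ is in fact a cleaner packaging of the same argument.

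The only minor difference is in (iii): the paper argues directly from Theorem~\ref{thm:sets_from_networks} (the sets $I_b$ for boxes in the new column are independent of how that column is filled, so containment of $\circ$-sets governs the boundary relation), whereas you route through Theorem~\ref{thm:deodhar boundary struct} and observe that the $\bullet\to+$ repair step is vacuous for Le-diagrams. Since Theorem~\ref{thm:deodhar boundary struct} is itself proved via Theorem~\ref{thm:sets_from_networks}, the two arguments are essentially the same; just be aware that you are implicitly using that the Deodhar boundary relation among Le-diagrams coincides with the positroid boundary relation in $\Grnn(k,n+1)$, which is exactly what the $I_b$-independence gives.
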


\begin{proof}
Let $\Sigma \subset Gr_{\geq 0}(k,n)$ be the positroid cell given by the \Le-diagram $D$. Let $\cD$ be the Deodhar component labelled by $D$, so $\Sigma = \cD \cap Gr_{\geq 0}(k,n)$. Then,
\begin{displaymath}
\pi^{-1}_{\geq 0}\left( \Sigma \right) = \pi^{-1}\left( \cD \right) \cap Gr_{\geq 0}(k,n+1).
\end{displaymath}
\noindent
Let $\cD' \subset \pi^{-1}\left( \cD \right)$ be the Deodhar component labelled by the Go-diagram $D'$. So, $D'$ is obtained by adding a column of boxes to $D$. Theorem \ref{thm:positive_deodhar_components} says that $\cD' \cap Gr_{\geq 0}(k,n+1)$ is nonempty if and only if $D'$ is a Le-diagram. Let $b$ be a box in the new column added to create $D'$. If there is any box containing a $\circ$ to the right of $b$ in its row with a $+$ above it, then filling $b$ with a plus will cause a violation of the Le-property. So, such boxes must be filled with $\circ$'s, and all other boxes in the new column may be safely filled with $\circ$'s or $+$'s. A unique top dimensional cell is then obtained by filling all boxes which may be filled with $+$'s in the new column with $+$s, proving points (i) and (ii).

Now, let $D'$ and $D''$ be two Le-diagrams labelling positroids $\Sigma'$ and $\Sigma''$ in the fiber $\pi^{-1}_{\geq 0}\left( \Sigma \right)$. We claim that $\Sigma''$ is a codimension $1$ boundary of $\Sigma'$ if and only if $\Sigma''$ is obtained by changing one of the $+$'s in the leftmost column of $\Sigma'$ into a $\circ$. The cell $\Sigma''$ is on the boundary of $\Sigma'$ if and only if all Pl\"ucker coordinates vanishing on $\Sigma'$ also vanish on $\Sigma''$. From Theorem \ref{thm:sets_from_networks}, we see that the set $I_b$ labelling a box $b$ in the new column does not depend on the filling of the boxes in the new column. So, $\Sigma''$ is on the boundary of $\Sigma'$ if and only if all boxes containing $\circ$'s in the new column in $D'$ also contain $\circ$'s in $D''$. The dimension of a positroid cell is the number of $+$'s in its Le-diagram, proving the claim. Then, the boundary poset of the cells in the fiber $\pi^{-1}_{\geq 0}(\Sigma)$ is exactly the Boolean lattice of subsets of boxes containing $+$'s in the new column of the Le-diagram labelling the top dimensional cell of $\pi^{-1}_{\geq 0}(\Sigma)$.
\end{proof}

\subsection{The bundle structure of $C_*(W)$ \label{bundle section}}

Let $W = (\cP, n)$ be a Wilson loop diagram, with $|\cP| = k$.

In \cite{wilsonloop}, the authors show that the Wilson loop diagrams define matroids (realized by the matrices $C(W)$). Furthermore, if a Wilson loop diagram is admissible, then this matroid is a positroid. Let $\Sigma(W)$ be the subspace of $\Grnn(k, n)$ parametrized by the Wilson loop. In \cite{generalcombinatorics}, the authors provide an algorithm to read a Grassmann necklace from a Wilson loop diagram, and \cite{GNtoLe} gives an algorithm for reading off a Le-diagram from a Grassmann Necklace. In \cite{generalcombinatorics}, the authors show that $\Sigma(W)$ is a $3k$ dimensional space.

Define \ba \cW_{k,n} = \bigcup_{\begin{subarray}{c} W =(\cP, n) \; admis. \\ |\cP| =k \end{subarray}} \overline{\Sigma(W)}\;. \label{Wkn eq}\ea  In the small case of $\Grnn(2,6)$, \cite{casestudy} describes the geometry and topology of $\cW_{2,6}$. In this section we consider the subspace $\pi^{-1}\left(\cW_{k,n}\right)$ of $\Gr(k,n+1)$ collectively parametrized by $C_*(W)$.

\begin{lem}Given an admissible Wilson loop diagram and the map $\pi$ as in \eqref{projection map} the matrix $C_*(W)$ parametrizes $\pi^{-1}(\Sigma(W))$. \label{lem C star param}\end{lem}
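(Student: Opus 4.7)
The claim is a two-way inclusion between the subset of $\Gr(k,n+1)$ parametrized by $C_*(W)$ and $\pi^{-1}(\Sigma(W))$. For the forward inclusion, I note that every real specialization of $C_*(W)$ agrees on its first $n$ columns with the corresponding specialization of $C(W)$. Since this $k\times n$ block has rank $k$ (it parametrizes $\Sigma(W) \subset \Gr(k,n)$ by \cite{wilsonloop}), the full $k\times(n+1)$ matrix also has rank $k$, so its row span lies in $\Omega_{n+1}$, on which $\pi$ is defined by Proposition \ref{prop:bundle}. Applying $\pi$ recovers the row span of the $C(W)$-block, which is in $\Sigma(W)$. One then verifies the vanishing/sign conditions match: the minors $\Delta_I$ with $I \subset [n]$ of the lifted matrix coincide with those of $C(W)$, so they inherit the vanishing and positivity dictated by $\Sigma(W)$, while for $I \ni n+1$ both the definition of $C_*(W)$ and the definition of $\pi^{-1}$ impose no sign constraint.

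For the reverse inclusion, let $V \in \pi^{-1}(\Sigma(W))$. Since $\pi(V) \in \Sigma(W)$ and $C(W)$ parametrizes $\Sigma(W)$, there is a specialization of the variables $c_{p,q}$ with $q \leq n$ producing a $k\times n$ matrix $M$ whose row span is $\pi(V)$. Proposition \ref{prop:bundle} identifies the fiber of $\pi$ over the row span of $M$ with $\R^k$; concretely, because $M$ has rank $k$ its $\textrm{GL}(k)$-stabilizer is trivial, so each point in this fiber admits a unique representative of the form $(M \mid g)$ with $g \in \R^k$. Applied to our $V$, a unique $g$ with $V = \mathrm{rowspan}(M \mid g)$ exists, and the resulting matrix is precisely $C_*(W)$ specialized at $c_{p, n+1} = g_p$ alongside the previously chosen values for the $c_{p,q}$ with $q \leq n$. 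Thus $V$ lies in the image of the $C_*(W)$-parametrization.

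The main potential obstacle is the existence and uniqueness of the fiberwise lift $g$, but this is supplied by Proposition \ref{prop:bundle}; once invoked, the remainder of the argument is bookkeeping matching the shared support pattern of $C(W)$ and $C_*(W)$ against their aligned sign prescriptions. No additional input beyond Proposition \ref{prop:bundle} and the result of \cite{wilsonloop} that $C(W)$ parametrizes $\Sigma(W)$ is required.
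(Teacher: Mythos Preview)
Your proposal is correct and follows essentially the same approach as the paper's proof: represent a point of $\Sigma(W)$ by a specialization $M_x$ of $C(W)$, then observe that the fiber $\pi^{-1}(x)$ is obtained by appending an arbitrary $(n+1)$-st column, which is exactly a specialization of $C_*(W)$. The paper compresses this into two sentences, whereas you spell out both inclusions separately and invoke Proposition~\ref{prop:bundle} explicitly for the fiberwise lift; this extra care is harmless and arguably clearer, but the underlying idea is identical.
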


\begin{proof}
By Definition \ref{dfn C}, if $x \in \Sigma(W)$, then $x$ can be represented by a matrix $M_x$ formed by evaluating the entries of $C(W)$ at appropriate points so that its Pl\"ucker coordinates are all non-negative.
%By Definition \ref{dfn C star}, the space parametrized by $C_*(W)$ are those points in $\Gr(k,n+1)$ such that the $I^{th}$ Pl\"cker coordinate is non-zero if and only if $\Delta_I (C(W)) >0$ and $n+1 \not \in I$. In other words, if one were to take a point in the space parametrized by $C_*(W)$ and represent it at a matrix formed by evaluating the entries of $C_*(W)$ at the appropriate points, then remove the last column, one would get a matrix representing a point in $\Sigma(W)$.
%Furthermore, for $x \in \Sigma(W)$, let $M_x$ be a matrix representing it that comes from an appropriate evaluation of $C(W)$.
Then, $\pi^{-1}(x)$ consists of all points in $\Gr(k, n+1)$ that can be expressed as $M_x$ with an extra column appended. These points are all in the space parametrized by $C_*(W)$. \end{proof}

Using the results of the previous section, we may break $\pi^{-1}(\Sigma(W))$ into a disjoint union of Deodhar components.

\begin{thm} \label{thm:4k dim diagrams}
The space $\pi^{-1}(\Sigma(W))$ is a $4k$-dimensional subspace of $\Gr(k,n)$.
\end{thm}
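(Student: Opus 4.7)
The plan is to combine the bundle structure of Proposition \ref{prop:bundle} with the known dimension of the positroid cell $\Sigma(W)$. First I observe that $\Sigma(W) \subseteq \Grnn(k,n) \subseteq \Gr(k,n)$ is the target of the projection map $\pi$, so all points in $\pi^{-1}(\Sigma(W))$ lie in Schubert cells indexed by $I \in \binom{[n+1]}{k}$ with $n+1 \notin I$. By definition, this means $\pi^{-1}(\Sigma(W)) \subseteq \Omega_{n+1}$, so we can apply the bundle structure from Proposition \ref{prop:bundle}.

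Next I would invoke the result of \cite{generalcombinatorics}, cited at the start of Section \ref{bundle section}, which states that the positroid cell $\Sigma(W)$ associated to an admissible Wilson loop diagram with $k$ propagators is $3k$-dimensional. By Proposition \ref{prop:bundle}, $\pi: \Omega_{n+1} \to \Gr(k,n)$ is a rank-$k$ real vector bundle. The restriction of this bundle to the $3k$-dimensional base $\Sigma(W)$ is then a rank-$k$ vector bundle over a $3k$-dimensional space, and hence its total space $\pi^{-1}(\Sigma(W))$ has dimension $3k + k = 4k$.

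As a cross-check, one can arrive at the same count through the Deodhar calculus developed in Section \ref{Deodhar strata section}. The cell $\Sigma(W)$ is a Deodhar component (in fact a Le-component) of some dimension $3k$, and by Proposition \ref{prop:unique top dim} the fiber $\pi^{-1}(\Sigma(W))$ contains a unique top-dimensional Deodhar component whose dimension exceeds $\dim(\Sigma(W))$ by exactly $k$, corresponding to filling the new leftmost column of the Go-diagram without any white stones (by Corollary \ref{cor:dim_of_deodhar_component}). The remaining Deodhar components in the fiber, described by Theorem \ref{thm:deodhar boundary struct}, have strictly smaller dimension and lie in the closure of the top one, so $\dim \pi^{-1}(\Sigma(W)) = 4k$ as required.

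The only nontrivial input here is the dimension count $\dim \Sigma(W) = 3k$, which is imported from prior work rather than proved here; once that is granted, the argument is a direct application of the bundle statement. There is no real obstacle beyond ensuring that $\Sigma(W)$ lies inside the base $\Gr(k,n)$ of $\pi$ and not at the locus where the last column interacts nontrivially, which is automatic since $\Sigma(W) \subset \Gr(k,n)$ to begin with.
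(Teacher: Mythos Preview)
Your proposal is correct. Your cross-check via Proposition \ref{prop:unique top dim} and Corollary \ref{cor:dim_of_deodhar_component} is exactly the paper's proof: $\Sigma(W)$ has dimension $3k$ by \cite{generalcombinatorics}, and the unique top-dimensional Deodhar component in the fiber is obtained by filling the new column of $k$ boxes with only $+$'s and $\bullet$'s, adding $k$ to the dimension.

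Your primary argument, using the rank-$k$ vector bundle structure of Proposition \ref{prop:bundle} directly, is a genuinely different (and somewhat cleaner) route: once one knows $\pi|_{\Omega_{n+1}}$ is a rank-$k$ bundle and $\dim \Sigma(W) = 3k$, the total space dimension $4k$ is immediate without any appeal to the Deodhar decomposition. The paper's Deodhar approach, by contrast, buys more: it identifies the unique top-dimensional piece explicitly as a Go-diagram, which feeds into the later analysis (e.g.\ Theorem \ref{thm:when C star not pos}). Both arguments rest on the same external input $\dim \Sigma(W) = 3k$.
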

\begin{proof}
From Corollary \ref{cor:dim_of_deodhar_component}, the dimension of a Go-diagram is computed by counting its number of $+$'s and $\bullet$'s. In \cite{generalcombinatorics}, the authors show that $\Sigma(W)$ is a $3k$-dimensional subspace of $\Grnn(k,n)$. In Proposition \ref{prop:unique top dim}, we see that $\pi^{-1}(\Sigma(W))$ has an unique top dimensional cell, the  Go-diagram for which is formed by adding $k$ $+$'s and $\bullet$'s to the Go-diagram for $\Sigma(W)$. Therefore, the total dimension is $4k$.
\end{proof}

Furthermore, Theorem \ref{thm:deodhar_components_of_fiber} describes the Deodhar components that constitute $\pi^{-1}(\Sigma(W))$, and Theorem \ref{thm:deodhar boundary struct} describes how these components glue together.

However, when studying the geometry of the space spanned by all matrices of the form $C(W)$, one is interested in studying the space $\cW_{k,n}$, as defined in equation \eqref{Wkn eq}. Similarly, to understand the full geometry of the $C_*(W)$ matrices, one must consider $\pi^{-1}(\cW_{k,n})$.

From the definitions, one writes \bas \pi^{-1}(\cW_{k,n}) = \bigcup_{\begin{subarray}{c} W =(\cP, n) \; admis. \\ |\cP| =k \end{subarray}} \pi^{-1}(\overline{\Sigma(W)}) \; .\eas However, as little is know about the boundary cells of $\Sigma(W)$ without resorting to explicit calculation, this representation is not the most enlightening. It is useful therefore to note that the closure of the cells commutes with the pre-image of the projection map.

\begin{thm} \label{closure commutes}
One can write $\pi^{-1}(\overline{\Sigma(W)}) = \overline{\pi^{-1}(\Sigma(W))}$.
\end{thm}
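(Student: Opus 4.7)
The plan is to establish the two inclusions separately, exploiting the vector bundle structure of $\pi : \Omega_{n+1} \to \Gr(k,n)$ from Proposition~\ref{prop:bundle}. Since $\pi^{-1}(\overline{\Sigma(W)})$ lives inside $\Omega_{n+1}$ automatically (its image lies in $\Gr(k,n) \subset \Grnn(k,n)$, never in $\Gr(k-1,n)$), it is natural to interpret $\overline{\pi^{-1}(\Sigma(W))}$ as a closure in $\Omega_{n+1}$ as well, so that both sides are directly comparable subsets of the same total space.

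The inclusion $\overline{\pi^{-1}(\Sigma(W))} \subseteq \pi^{-1}(\overline{\Sigma(W)})$ is formal: the bundle projection $\pi$ is continuous, so $\pi^{-1}(\overline{\Sigma(W)})$ is a closed subset of $\Omega_{n+1}$ that visibly contains $\pi^{-1}(\Sigma(W))$, hence contains its closure.

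The reverse inclusion is the content of the theorem and the place where local triviality is used. Fix $y \in \pi^{-1}(\overline{\Sigma(W)})$ and set $x := \pi(y) \in \overline{\Sigma(W)}$. Choose any sequence $x_m \in \Sigma(W)$ converging to $x$ in $\Gr(k,n)$; I would then lift this sequence to a sequence $y_m \to y$ with $y_m \in \pi^{-1}(\Sigma(W))$ using the vector bundle chart. Concretely, Proposition~\ref{prop:bundle} supplies an open neighborhood $U \ni x$ and a homeomorphism $\phi : \pi^{-1}(U) \xrightarrow{\sim} U \times \rr^k$ commuting with projection. For all $m$ large enough that $x_m \in U$, write $\phi(y) = (x,v)$ and put $y_m := \phi^{-1}(x_m, v)$. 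Then $\pi(y_m) = x_m \in \Sigma(W)$, so $y_m \in \pi^{-1}(\Sigma(W))$, while continuity of $\phi^{-1}$ forces $y_m \to y$; thus $y$ lies in $\overline{\pi^{-1}(\Sigma(W))}$.

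The only mildly delicate point is the bookkeeping about ambient spaces for the two closures; once these are both read inside $\Omega_{n+1}$, the argument reduces to continuity and local triviality and I expect no substantive obstacle. If one instead wished to take the closure of $\pi^{-1}(\Sigma(W))$ inside all of $\Gr(k,n+1)$, one would additionally have to check that no limit point can escape $\Omega_{n+1}$ while its image under $\pi$ remains in $\overline{\Sigma(W)} \subset \Gr(k,n)$, which is immediate from the description of $\pi$ as the ``delete the last column'' map on $\Gr(k,n+1) \to \Gr(k,n) \cup \Gr(k-1,n)$: a limit outside $\Omega_{n+1}$ would project into $\Gr(k-1,n)$, contradicting convergence of $\pi(y_m)$ to a point in $\Gr(k,n)$.
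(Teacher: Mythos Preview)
Your main argument is correct and takes a different route from the paper. You use only the local triviality of $\pi : \Omega_{n+1} \to \Gr(k,n)$ from Proposition~\ref{prop:bundle}, which yields the general fact that $\pi^{-1}(\overline A) = \overline{\pi^{-1}(A)}$ (closures in base and total space) for \emph{any} subset $A$ of the base. The paper instead works with the explicit semialgebraic description: it records the Pl\"ucker inequalities cutting out $\Sigma(W)$ via its Grassmann necklace, observes that $\pi^{-1}(\Sigma(W))$ is cut out in $\Gr(k,n+1)$ by the very same list of inequalities (no new conditions on $\Delta_I$ with $n+1 \in I$), and argues that closing the set corresponds on both sides to relaxing each strict inequality to a weak one. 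Your approach is shorter, uses nothing about positroids, and makes transparent that the statement is not special to $\Sigma(W)$; the paper's approach stays closer to the combinatorial machinery developed elsewhere in the section.

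Your final paragraph, however, does not hold up. If one takes the closure of $\pi^{-1}(\Sigma(W))$ in all of $\Gr(k,n+1)$ rather than in $\Omega_{n+1}$, limit points \emph{can} escape $\Omega_{n+1}$, and the argument you sketch does not rule this out. The gap is the phrase ``contradicting convergence of $\pi(y_m)$ to a point in $\Gr(k,n)$'': you only know $y_m \to y$ in $\Gr(k,n+1)$, not that $\pi(y_m)$ converges anywhere, and $\pi$ is not continuous across the boundary of $\Omega_{n+1}$ in any sense that lets you pass to the limit. Concretely, already for $k=1$ and any admissible $W$ with a single propagator supported on $V(p)$, the sequence $y_m \in \Gr(1,n+1)$ with coordinate $1/m$ in the positions of $V(p)$, coordinate $1$ in position $n+1$, and zeros elsewhere lies in $\pi^{-1}(\Sigma(W))$ and converges to $[0:\cdots:0:1] \notin \Omega_{n+1}$, while $\pi(y_m)$ is constant in $\Gr(1,n)$. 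So the equality of the theorem is only correct for closures taken in $\Omega_{n+1}$; fortunately that is the interpretation needed downstream (Corollary~\ref{cor:WLD bundle}) and the one your main argument establishes.
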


\begin{proof}
First note that, for any positroid cell $\Sigma \subset \Grnn(k,n)$, \bas \pi^{-1}(\overline{\Sigma} ) = \bigcup_{\Sigma ' \subset \overline{\Sigma}} \pi^{-1}(\Sigma')\;.   \eas

For a positroid cell $\Sigma$, let $\cI_\Sigma$ be the associated Grassmann necklace. Then, $\Sigma$ is the subset of $\Gr(k,n)$ defined by
%the requirement that all minors $I \in \cI_\Sigma$ are strictly positive, as are all minors $J \in \binom{[n]}{k}$ of the form $I_i \gale{i} J$, for each $I_i \in \cI_\Sigma$. Finally, all other minors, $K$ are uniformly $0$ \cite{Ohsalg}. In other words, for any $J \in \binom{[n]}{k}$, and $x \in \Sigma$ the constraints on the corresponging Pl\"ucker coordinates are as follow:
\begin{equation} \label{eqn:before_closing}
\Delta_J(x)  \begin{cases} > 0 &\textrm{ if } J \in \cI_\Sigma \\  = 0 &\textrm{ if } J \sgale{i} I_i \, \forall I_i \in \cI_\Sigma \\ \geq 0 &\textrm{ if } I_i \gale{i} J \, \forall I_i \in \cI_\Sigma. \end{cases} \;
\end{equation}
for all $x \in \Sigma$. Then, the closure $(\overline\Sigma)$ is defined by closing these inequalities as follows:
\begin{equation} \label{eqn:after_closing}
\Delta_J(x)  \begin{cases} \geq 0 &\textrm{ if } J \in \cI_\Sigma \\  = 0 &\textrm{ if } J \gale{i} I_i \, \forall I_i \in \cI_\Sigma \\ \geq 0 &\textrm{ if } I_i \gale{i} J \, \forall I_i \in \cI_\Sigma \end{cases}. \;
\end{equation}for all $x \in \Sigma$. 

Then, $\pi^{-1}(\overline \Sigma)$ is the subset of $\Gr(k,n+1)$ defined by the inequalities (\ref{eqn:after_closing}) with no new constraints introduced on Pl\"ucker coordinates $\Delta_I$ when $n+1 \in I$. Similarly, $\pi^{-1}(\Sigma)$ is the subset of $\Gr(k,n+1)$ defined by the inequalities (\ref{eqn:before_closing}). Taking the closure gives
\begin{displaymath} \overline{\pi^{-1}(\Sigma)} = \pi^{-1}(\overline \Sigma) \;. \qedhere \end{displaymath}
\end{proof}

As a corollary of Proposition \ref{prop:bundle}, the $C_*(W)$ parametrize a vector bundle over the subspace of $\Grnn(k,n)$ parametrized by the $C(W)$.

\begin{cor} \label{cor:WLD bundle}
The space $\pi^{-1}(\cW_{k,n})$ has the structure of a real $k$-vector bundle over $\cW_{k,n}$.
\end{cor}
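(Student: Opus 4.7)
The plan is to obtain this corollary essentially for free from Proposition \ref{prop:bundle} by the general fact that the restriction of a vector bundle to a subspace of the base is again a vector bundle. Specifically, the map $\pi$ of \eqref{projection map} restricts to a $k$-vector bundle $\pi: \Omega_{n+1} \to \Gr(k,n)$. Since $\cW_{k,n} \subseteq \Grnn(k,n) \subseteq \Gr(k,n)$, we may consider the pullback (equivalently, restriction) of this bundle along the inclusion $\cW_{k,n} \hookrightarrow \Gr(k,n)$.

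First, I would verify that $\pi^{-1}(\cW_{k,n})$ as described in the statement coincides with the total space of this restricted bundle. This amounts to checking that $\pi^{-1}(\cW_{k,n}) \subseteq \Omega_{n+1}$, which is immediate: any point $V \in \Gr(k,n+1)$ projecting to a point of $\Gr(k,n)$ under $\pi$ necessarily lies in a Schubert cell $\sigma_I$ with $n+1 \notin I$ (otherwise the image would lie in $\Gr(k-1,n)$), and these are precisely the cells comprising $\Omega_{n+1}$.

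Second, I would invoke the standard fact that if $E \to B$ is a $k$-vector bundle and $A \subseteq B$ is any subspace (with the subspace topology), then $E|_A = \{e \in E : \pi(e) \in A\} \to A$ is a $k$-vector bundle: its local trivializations are obtained by restricting local trivializations on $B$ to $A$, and the transition functions remain continuous since they are pulled back along the inclusion. Apply this with $E = \Omega_{n+1}$, $B = \Gr(k,n)$, and $A = \cW_{k,n}$.

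There is no real obstacle here; the work has already been done in Proposition \ref{prop:bundle} (trivializations on the charts $U_J$ for $n+1 \notin J$) and in Lemma \ref{lem C star param}, which identifies the total space with the one arising from the $C_*(W)$. The only thing to be slightly careful about is confirming that the parametrization by the matrices $C_*(W)$ matches the preimage $\pi^{-1}(\cW_{k,n})$, but this follows directly by taking the union of Lemma \ref{lem C star param} over all admissible $W$ with $|\cP|=k$ and applying Theorem \ref{closure commutes} to pass to closures.
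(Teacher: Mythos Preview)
Your proposal is correct and follows essentially the same approach as the paper: the corollary is stated immediately after the sentence ``As a corollary of Proposition \ref{prop:bundle}\ldots'' and is given no separate proof, so the paper too regards it as the restriction of the bundle $\Omega_{n+1} \to \Gr(k,n)$ to the subspace $\cW_{k,n}$. You have simply made explicit the routine checks (that $\pi^{-1}(\cW_{k,n}) \subseteq \Omega_{n+1}$ and that restrictions of vector bundles are vector bundles) that the paper leaves to the reader.
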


%\begin{proof}
%This is a corollary of Proposition \ref{prop:bundle}. One only needs to check that $\pi^{-1}(\cW_{k,n})  \subset \Omega_{n+1}$, which follows from the fact that $\cW_{k,n}\subset \Grnn(k,n)$ for all . In other words, every point in $\cW_{k,n}$ is in a Schubert cell $\sigma_I$ such that $n+1 \not \in I$. Therefore, every point in $\pi^{-1}(\cW_{k,n})$ satisfies the same condition, as required.
%\end{proof}

Finally, we note that while $\Sigma(W)$ is a subset of the positive Grassmannian, $\pi^{-1}(\cW_{k,n})$ is not. In other words, the matrices $C_*(W)$ do not, in general, define positroids, illustrated the following example.

\begin{eg} \label{eg:C star not pos}
Consider the Wilson loop diagram \bas W = {\makediag{1}{0}{5}{0}{2}{0}{4}{0}}\;. \eas Then, the space parametrized by \bas C_*(W) = \begin{pmatrix}
                                                                                                           a & b & 0 & 0 & c & d & 1 \\
                                                                                                           0 & e & f & g & h & 0 & 1
                                                                                                         \end{pmatrix}\eas can never intersect $\Grnn(2,6)$. For instance, if $\Delta_{17} > 0$ holds,  the variable $a$ must be positive. Similarly, if $\Delta_{37} > 0$ holds, this forces $f$ to be negative. However, this creates a negative minor, $\Delta_{13} = af <0$.
\end{eg}

This phenomenon can be understood by Theorem \ref{thm:fiber_boundary_structure_positroid} which gives a condition for when the preimage of of a cell $\Sigma \subset \Grnn(k,n)$ under $\pi$ does not intersect the positive Grassmannian in its full dimension. Namely, we have the following statement:

\begin{thm} \label{thm:when C star not pos}
Let $W$ be an admissible Wilson loop diagram with $k$ propagators. If the Le-diagram associated to $W$, $D(W)$, has a column with a $+$ square above a square with a $\circ$, then \bas \dim( \pi^{-1}(\Sigma(W)) ) > \dim( \pi^{-1}_{\geq 0}(\Sigma(W)) ). \eas \end{thm}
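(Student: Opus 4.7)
The plan is to derive the strict inequality as an almost immediate consequence of Theorem \ref{thm:4k dim diagrams} combined with Theorem \ref{thm:fiber_boundary_structure_positroid}. The hypothesis on $D(W)$ is precisely the negation of the column condition appearing in part (ii) of the latter, so once the two dimensions are read off from the earlier results, the conclusion falls out. I do not expect a real obstacle here; the argument is essentially bookkeeping.

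First I would compute $\dim\!\bigl(\pi^{-1}(\Sigma(W))\bigr)$. Theorem \ref{thm:4k dim diagrams} gives it directly as $4k$. Equivalently, Proposition \ref{prop:unique top dim} produces a unique top-dimensional Deodhar component in $\pi^{-1}(\Sigma(W))$ whose Go-diagram is obtained from $D(W)$ by prepending a column of $k$ boxes filled with $+$'s and $\bullet$'s according to the distinguished-property check. Combined with $\dim(\Sigma(W)) = 3k$ from \cite{generalcombinatorics} and Corollary \ref{cor:dim_of_deodhar_component}, this yields dimension $3k + k = 4k$.

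Next I would invoke Theorem \ref{thm:fiber_boundary_structure_positroid}. Part (i) says $\pi^{-1}_{\geq 0}(\Sigma(W))$ contains a unique top-dimensional positroid cell $\Sigma'$, whose Le-diagram is obtained from $D(W)$ by prepending a column of $k$ boxes, each filled with either a $+$ or a $\circ$. Part (ii) says $\dim(\Sigma') = \dim(\Sigma(W)) + k$ if and only if $D(W)$ has no column containing a $+$ above a $\circ$; moreover, $\dim(\Sigma') \leq \dim(\Sigma(W)) + k$ always, since the new column contributes at most $k$ plus signs. Under the hypothesis of the theorem the iff fails, so $\dim(\Sigma') < \dim(\Sigma(W)) + k = 4k$, and therefore
\[
\dim\!\bigl(\pi^{-1}_{\geq 0}(\Sigma(W))\bigr) \;=\; \dim(\Sigma') \;<\; 4k \;=\; \dim\!\bigl(\pi^{-1}(\Sigma(W))\bigr).
\]

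The only items to verify carefully are that the phrase ``a $+$ square above a square with a $\circ$'' used here really coincides with the ``$+$ with a $\circ$ below it in the same column'' condition of Theorem \ref{thm:fiber_boundary_structure_positroid}(ii), and that the Le-diagram $D(W)$ produced by the Grassmann-necklace-to-Le algorithm cited in Section \ref{bundle section} is indeed the Le-diagram indexing the positroid cell $\Sigma(W)$. Both are immediate from the conventions already established.
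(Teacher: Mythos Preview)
Your proposal is correct and follows essentially the same route as the paper: invoke Theorem \ref{thm:4k dim diagrams} to get $\dim(\pi^{-1}(\Sigma(W))) = 4k$, and invoke Theorem \ref{thm:fiber_boundary_structure_positroid}(ii) (whose column condition is negated by the hypothesis) to get $\dim(\pi^{-1}_{\geq 0}(\Sigma(W))) < 4k$. The paper's own proof is a two-sentence version of exactly this; your additional unpacking of why $\dim(\Sigma') \le \dim(\Sigma(W)) + k$ always holds is a helpful elaboration but not a different argument.
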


\begin{proof}
%The space parametrized by $\pi^{-1}(\Sigma(W))$ is a union of Deodhar components, by Theorem \ref{thm:deodhar_components_of_fiber}, therefore, it can be represented as a collection of matroids.
If $D(W)$ has a $+$ square directly above a square with a white stone, by Theorem \ref{thm:fiber_boundary_structure_positroid}, the dimension of
$\pi_{\geq 0} ^{-1}(\Sigma(W)) < 4k$. However, by Theorem \ref{thm:4k dim diagrams}, $\pi^{-1}(\Sigma(W))$ is $4k$ dimensional.
%Recall the notation $b = (i,j)$ from section \ref{notation} to indicate boxes in Ferrers diagrams. Without loss of generality, we may write the box $(i,n)$ in $D(W)$ to have the $+$ and $(j, n)$ to have the $\circ$, with $j > i$. Let $D_*(W)$ be the diagam of the unique top dimensional deodhar component described in Proposition \ref{prop:unique top dim}.  The box $(i, n+1)$ in $D_*(W)$ cannot have a $+$ in it, as this would \hlfix{violate the distinguished property}{See, now I'm using it too :( please fix?} . Therefore the box $(i, n+1)$ much have a $\bullet$. By Theorem \ref{thm:positive_deodhar_components} this means that the corresponding Deodhar component does not intersect $\Grnn(k,n+1)$. In other words, none of the compatible matroids are positroids.
\end{proof}

\begin{rmk}
From a combinatorial standpoint, Theorem \ref{thm:when C star not pos} can be interpreted as giving a condition for when the matroid represented by a generic point in $\pi^{-1}(\Sigma(W))$ is a positroid.
\end{rmk}

\begin{eg}
We return to Example \ref{eg:C star not pos} to illustrate the above theorem. For \bas W = {\makediag{1}{0}{5}{0}{2}{0}{4}{0}}\;, \eas we write \bas C(W) = \begin{pmatrix}
                                                                                                           a & b & 0 & 0 & c & d \\
                                                                                                           0 & e & f & g & h & 0
                                                                                                         \end{pmatrix}.\eas
The associated Le-diagram is
\begin{displaymath}
\begin{tikzpicture}
\begin{scope}[scale=.5]
\draw (-2,1) node {$D(W) =$};
\draw[step=1] (0,0) grid (4,2);
%
%bottom row
\draw[fill = white] (0.5,0.5) circle (.25);
\draw[thick] (1.5,.2) -- (1.5,.8);
\draw[thick] (1.2,.5) -- (1.8,.5);
\draw[thick] (2.5,.2) -- (2.5,.8);
\draw[thick] (2.2,.5) -- (2.8,.5);
\draw[thick] (3.5,.2) -- (3.5,.8);
\draw[thick] (3.2,.5) -- (3.8,.5);
%
%top row
\draw[thick] (.5,1.2) -- (.5,1.8);
\draw[thick] (.2,1.5) -- (.8,1.5);
\draw[fill = white] (1.5,1.5) circle (.25);
\draw[thick] (2.5,1.2) -- (2.5,1.8);
\draw[thick] (2.2,1.5) -- (2.8,1.5);
\draw[thick] (3.5,1.2) -- (3.5,1.8);
\draw[thick] (3.2,1.5) -- (3.8,1.5);
\draw (4.2,.2) node {$.$};
\end{scope}
\end{tikzpicture}
\end{displaymath}

For details on this calculation, see \cite{casestudy}. In column $6$, there is a $+$ square above a $\circ$ square. Furthermore, we see that the top dimensional Deodhar component of $\pi^{-1}(\Sigma(W))$, call it $D_*(W)$, has the Go diagram
\begin{displaymath}
\begin{tikzpicture}
\begin{scope}[scale=.5]
\draw (-3,1) node {$D_{*}(W) =$};
\draw[step=1] (-1,0) grid (4,2);
%
%new column
\draw[thick] (-.5,.2) -- (-.5,.8);
\draw[thick] (-.2,.5) -- (-.8,.5);
\draw[fill = black] (-.5,1.5) circle (.25);
%
%bottom row
\draw[fill = white] (0.5,0.5) circle (.25);
\draw[thick] (1.5,.2) -- (1.5,.8);
\draw[thick] (1.2,.5) -- (1.8,.5);
\draw[thick] (2.5,.2) -- (2.5,.8);
\draw[thick] (2.2,.5) -- (2.8,.5);
\draw[thick] (3.5,.2) -- (3.5,.8);
\draw[thick] (3.2,.5) -- (3.8,.5);
%
%top row
\draw[thick] (.5,1.2) -- (.5,1.8);
\draw[thick] (.2,1.5) -- (.8,1.5);
\draw[fill = white] (1.5,1.5) circle (.25);
\draw[thick] (2.5,1.2) -- (2.5,1.8);
\draw[thick] (2.2,1.5) -- (2.8,1.5);
\draw[thick] (3.5,1.2) -- (3.5,1.8);
\draw[thick] (3.2,1.5) -- (3.8,1.5);
\draw (4.2,.2) node {$.$};
\end{scope}
\end{tikzpicture}
\end{displaymath}
Since this diagram has a $\bullet$, the associated Deodhar component does not intersect $\Grnn(k,n+1)$. Furthermore, one may check explicitly that for all values of $a, b, c, d, e, f, g$ and $h$ such that $C(W) \in \Grnn(k,n)$, the matrix \bas C_*(W) = \begin{pmatrix}
                                                                                                           a & b & 0 & 0 & c & d & 1 \\
                                                                                                           0 & e & f & g & h & 0 & 1
                                                                                                         \end{pmatrix}\eas represents a point in the Deodhar component represented by $D_*(W)$. %This explains why, beyond brute force computation, it is impossible to fill in values for the variables defining $C_*(W)$ to get a matrix wil all positive minors.
\end{eg}

\begin{comment}
\color{red} I'm pretty sure the following is true, but feel free to delete if you don't want to be throwing conjectures around.

It would be desirable to have a way of telling when the diagram $D_*(W)$ is a Le-diagram in general. Equivalently, of telling when $\pi^{-1}(\Sigma(W))$ intersects the positive Grassmannian in its full dimension. The following conjecture has been proved for all Wilson loop diagrams with up to three propagators, and checked extensively on larger examples.

\begin{conj}
Let $W = (\cP,n)$ be a Wilson loop diagram with no propagator $(i,n)$, and let $D_*(W)$ be the Go-diagram indexing the top dimensional Deodhar component of $\pi^{-1}(\Sigma(W))$. Then, $D_*(W)$ is a Le-diagram if and only if there is no pair of propagators $(i_1,j_1), (i_2,j_2)$ with $i_1 < i_2$ and $j_1 > j_2$. Moreover, if there is such a pair of propagators, then there is a $+$ in the row $D_*(W)$ with vertical step labelled $i_1$ with a $\circ$ below it in its column.
\end{conj}
\color{black}
\end{comment}

\subsection{Boundaries \label{boundary chasing section}}
For the remainder of this paper, we show that the subspace of $\Gr(k,n+1)$ parametrized by the $C_*(W)$, is not orientable. To do this, we show that $\pi^{-1}(\cW_{k,n})$, is not an orientable bundle over $\cW_{k,n}$.

First however, we introduce an interesting family of Wilson loop diagrams introduced to us by Paul Heslop, \cite{Heslopcommunication}. In \cite{casestudy}, the authors provide a graphic device for determining which Wilson Loop diagrams share codimension $1$ boundaries in $\Grnn(2,6)$.

\begin{dfn}\label{def:boundaryprops}
Let $W = (\cP, n)$ be an admissible Wilson loop diagram, and $p \in \cP$ one of its propagators. For $v \in V_p$, the {\bf boundary propagator} $\D_v p$ is obtained by moving the endpoint of $p$ away from vertex $v$ while maintaining the requirement that no two propagators cross each other, and that a collection of $k>1$ propagators and boundary propagators are supported on at least $k+3$ vertices until one of the following occurs:
\begin{enumerate}
\item the endpoint of $p$ reaches another vertex, i.e. $\D_vp$ is supported on $V_p \backslash {v}$; or
\item the endpoint of $p$ touches the endpoint of another propagator $q$.
\end{enumerate}

Define the {\bf boundary diagram} $\D_{p,v}(W)$ to be the diagram obtained from $W$ by replacing propagator $p$ with $\D_vp$.
\end{dfn}

\begin{eg}\label{boundaryeg}
Consider the Wilson loop diagram,
\[ W =\begin{tikzpicture}[rotate=60,baseline=(current bounding box.east)]
	\begin{scope}
	\drawWLD{6}{1.5}
	\drawnumbers
\drawprop{1}{-1}{5}{0}
\drawprop{1}{1}{3}{0}
		\end{scope}
	\end{tikzpicture}\]
and the propagator $p = (1,5)$. By replacing $p$ with $\D_2p$ and $\D_1p$ respectively, we obtain examples of both types of boundary diagrams:
\begin{gather*}\D_{p,2}(W)  =
\begin{tikzpicture}[rotate=60,baseline=(current bounding box.east)]
	\begin{scope}
	\drawWLD{6}{1.5}
	\drawnumbers
	\modifiedprop{1}{-0.5}{5}{0}{propagator,dotted}
	\boundaryprop{5}{0}{1}{propagator}
	\drawprop{1}{0.5}{3}{0}
	\boundA{1}{-0.5}{1}
		\end{scope}
	\end{tikzpicture}
\qquad
\D_{p,1}(W) =
\begin{tikzpicture}[rotate=60,baseline=(current bounding box.east)]
	\begin{scope}
	\drawWLD{6}{1.5}
	\drawnumbers
	\drawprop{1}{0.3}{3}{0}
	\modifiedprop{5}{0}{1}{-0.3}{propagator,dotted}
	\modifiedprop{5}{0}{1}{1.2}{propagator}
	\boundB{1}{-0.5}{1}
		\end{scope}
	\end{tikzpicture}
%\boundA{A}{B}{C}: A sets the starting vertex, B is the bump from the middle of that edge, C is the destination vertex
%\boundB{A}{B}{C}: A sets the starting vertex. Arrows goes from (half way between vertex A and A+1) + B, to (half way) + C.
\end{gather*} On the other hand, the in the following diagram,
for $p = (1,5)$ the boundary $\D_{p,5}(W')$ is not permitted as its two propagators are supported on only $4$ vertices.
\begin{gather*}\D_{p,5}(W')  =
\begin{tikzpicture}[rotate=60,baseline=(current bounding box.east)]
	\begin{scope}
	\drawWLD{6}{1.5}
	\drawnumbers
	\modifiedprop{1}{-0.5}{4}{0}{propagator,dotted}
	\boundaryprop{1}{-0.5}{4}{propagator}
	\drawprop{1}{0.5}{3}{0}
	\boundA{4}{-0.5}{4}
		\end{scope}
	\end{tikzpicture}
\end{gather*}
\end{eg}

Since the support of a propagator determines which entries of $C(W)$ are nonzero, we define the matrix associated to these propagator moves as setting minors of $C(W)$ to $0$.  This is either a $1 \times 1$ or a $2 \times 2$ minor, written $\Delta_{p,v}(W)$:
\[\Delta_{p,v}(W) =\left\{ \begin{array}{ll}
c_{p,v} & \text{ if $p$ is no longer supported on $v$ in $\D_{p,v}(W)$}; \\
c_{p,v}c_{q,v+1} - c_{q,v}c_{p,v+1} & \text{ if propagators $p$ and $q$ touch on edge $v$ $\D_{p,v}(W)$.}
\end{array}\right.\]
Using this notation, we can write \bas C(\D_{p,v}(W)) = \lim_{\Delta_{p,v} \rightarrow 0} C(W) \;. \eas We call $C(\D_{p,v}(W))$ a {\bf boundary matrix} of $C(W)$. We may now define when two diagrams share a codimension one boundary.

\begin{conj}\label{mnemonic}
Let $W = (\cP, n) $ and $W' = (\cP', n)$ be two Wilson loop diagrams. If there exist two vertex propagators pairs $(p,v)$ and $(p', v')$, with $p \in \cP$, $v \in V_p$ and $p' \in \cP'$, $v' \in V_{p'}$ such that \bas C(\D_{p,v} (W))= C(\D_{p',v'} (W'))\;,\eas then the corresponding cells $\Sigma(W)$ and $\Sigma(W')$ share a codimension 1 boundary in $\Grnn(k,n)$.
\end{conj}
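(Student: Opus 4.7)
The plan is to establish the claim in three steps. First, show that $C(\D_{p,v}(W))$ parametrizes a semi-algebraic subset $B \subseteq \overline{\Sigma(W)}$ of dimension exactly $3k-1$. Second, identify $B$ as a single positroid cell. Third, use the hypothesis $C(\D_{p,v}(W)) = C(\D_{p',v'}(W'))$ to place this same cell in $\overline{\Sigma(W')}$, yielding the claimed codimension-one shared boundary.

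For the first step, I would use that $\Sigma(W)$ is parametrized by the algebraically independent entries $c_{p,q}$ of $C(W)$, giving dimension $3k$ as established in \cite{generalcombinatorics}. The limit $\Delta_{p,v}(W)\to 0$ imposes exactly one algebraic relation on these entries: either $c_{p,v}=0$ in case (1) of Definition \ref{def:boundaryprops}, or the $2\times 2$ minor relation $c_{p,v}c_{q,v+1}=c_{q,v}c_{p,v+1}$ in case (2). Since neither relation vanishes identically on $\Sigma(W)$, the set $B$ has codimension exactly one in $\Sigma(W)$, and by continuity lies in $\overline{\Sigma(W)}$.

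For the second step, I would adapt the arguments of \cite{wilsonloop} showing that admissible Wilson loop diagrams define positroid cells. The vanishing pattern of Pl\"{u}cker coordinates on $B$ is entirely determined by the support structure of $\D_{p,v}(W)$, and the remaining entries of $C(\D_{p,v}(W))$ are algebraically independent modulo the single imposed relation, so one obtains a well-defined matroid realized by $C(\D_{p,v}(W))$. Positivity of the non-vanishing Pl\"{u}ckers should follow from the same signed-minor analysis used for $\Sigma(W)$. An alternative route would be to read off the Grassmann necklace of $B$ directly from $\D_{p,v}(W)$ using the algorithm of \cite{generalcombinatorics}, suitably modified to accommodate the collapsed or merged endpoint, and verify it differs in exactly one entry from the Grassmann necklaces of $W$ and $W'$.

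The main obstacle is the second step, and within it case (2) of Definition \ref{def:boundaryprops}. When the endpoint of $p$ merely collapses to another vertex, the boundary configuration resembles an admissible Wilson loop diagram with reduced support and the existing machinery should apply directly. When two propagators are forced to meet on an edge, however, $\D_{p,v}(W)$ is a genuinely new kind of configuration, and its matroid/positroid structure has to be worked out carefully; in particular, one must verify that the quadratic relation does not accidentally force further Pl\"{u}cker coordinates to vanish beyond those already identified by the support, which would lower the dimension of $B$ and break the codimension-one count. Once a uniform combinatorial model for $B$ (for instance, a common Le-diagram or Go-diagram obtained by a local move on $W$ and on $W'$) is in hand, step three is immediate: the equality of the parametrizations $C(\D_{p,v}(W)) = C(\D_{p',v'}(W'))$ forces the same cell $B$ to sit inside both $\overline{\Sigma(W)}$ and $\overline{\Sigma(W')}$, giving the shared codimension-one boundary.
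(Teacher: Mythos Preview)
The statement you are attempting to prove is stated in the paper as a \emph{conjecture}, not a theorem; the paper provides no proof. Immediately after stating it, the authors write that the claim ``is shown explicitly in the case of two propagators and six vertices in \cite{casestudy}. However, this statement remains conjectured in the general case.'' There is therefore no proof in the paper to compare your proposal against.

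Your outline is a reasonable plan of attack, and you are honest about where it is incomplete. Step one and step three are essentially formal. The real content, as you recognize, is step two: showing that the locus cut out by $\Delta_{p,v}(W)=0$ inside $\overline{\Sigma(W)}$ is a single positroid cell of the expected dimension. In case (1) of Definition~\ref{def:boundaryprops} this is plausible and close to the existing machinery. In case (2), where two propagator endpoints merge on an edge, you correctly flag the danger that the single $2\times 2$ relation could force further Pl\"ucker vanishings and drop the dimension below $3k-1$, or could fail to cut out an irreducible positroid cell at all. You do not resolve this, and neither does the paper; indeed the authors' caveats following the conjecture (that boundary diagrams neither exhaust nor are exhausted by codimension-one incidences) suggest the combinatorics here is genuinely subtle. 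So what you have is a credible proof strategy with the hard step still open, which is exactly the status the paper assigns to the statement.
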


This statement is shown explicitly in the case of two propagators and six vertices in \cite{casestudy}. However, this statement remains conjectured in the general case. It is worth noting that not every codimension $1$ boundary of $\Sigma(W)$ can be obtained via these boundary diagram. Nor do these boundary diagrams capture every instance of when two cells $\Sigma(W)$ and $\Sigma(W')$ share a codimension 1 boundary in $\Grnn(k,n)$. These caveats aside, however, the boundary diagrams provide a good visual aide in understanding the orientability of $\cW_{k,n}^{*,+}$.

%Recall that an exact subdiagram $(P, V_P)$ is a subset of $W = (\cP,n)$ such that $|V_P| = |P| +3$ \cite{gencombi, wilsonloop}. An exact subdiagram consists of a single propagator and its supporting vertices, $(p, V_p)$. Note that these are not admissible as Wilson Loop diagrams themselves as the number of vertices is too small. However, as they are subdiagrams of Wilson loop diagrams, they maintain all other desired properties.

We define a family of Wilson loop diagrams that share codimension one boundaries with each other by Conjecture \ref{mnemonic}.

\begin{dfn}
Let $W = (\cP, n)$ be an admissible Wilson loop diagram. Let $p = (i,j)\in \cP$. Define another propagator $p'$
\bas p' =
\begin{cases}
(i-1, j)  \textrm{ or } (i, j-1) & \textrm{if } ; \; j \not \in \{i+2, i-2\} \\
(i-1, j-1) & \textrm{if }  \; j \in \{i+2, i-2\}. \\
\end{cases} \;
\eas We say that $p'$ is formed by {\textbf{moving $p$ clockwise}}. Call $W'$ be the Wilson loop diagram (not necessarily admissible) diagram formed by moving $p$ clockwise: $W' = ((\cP \setminus p)\cup p', n)$.
\end{dfn}

We say that $W$ has a \emph{valid clockwise move} if there is some $p \in \cP$ such that moving $p$ clockwise to $p'$ results in an admissible $W'$.
In particular, if $W$ has a propagator with a valid clockwise move, let $v \in V_p$ be the vertex in the support of $p$ that is not in the support of $p'$. Similarly, let $v' \in V_{p'}$ be the vertex in the support of $p'$ that is not in the support of $p$. Then, by construction \bas C(\D_{p, v}(W)) = C(\D_{p', v'} (W')) \;.\eas That is, by Conjecture \ref{mnemonic}, the cells $\Sigma(W)$ and $\Sigma(W')$ share a codimension 1 boundary.

We use these valid clockwise moves to rotate two families of Wilson loop diagrams to an isomorphic diagram with the order of the propagators changed.
\begin{dfn} \label{dfn series parallel}
Define \bas \cP_{series} = \{(i_1, j_1) \ldots (i_k, j_k) | i_r <_{i_1} j_r \; ; j_r \leq_{i_1} i_{r+1}\}\eas to be a set of propagators. Define $W_{series} = (\cP_{series} , n > 2k)$ to be a Wilson loop diagram defined by a propagator set of this form.
Define \bas\cP_{parallel} = \{(i_1, j_1) \ldots (i_k, j_k) | i_r <_{i_1} j_r \; ; j_k \leq_{i_1} j_{k+1} \textrm{ and } i_{k+1} \leq_{i_1} i_k \}\eas to be a set of propagators. Define $W_{parallel} = (\cP_{parallel} , n)$ and $W_{series} = (\cP_{series} , n)$ to be Wilson loop diagrams defined by propagator sets of this form.
\end{dfn}

Note that in $W_{series}$ and $W_{parallel}$, we introduce an ordering to the propagators not present before. 

\begin{lem} \label{lem:series and parallel}
Let $W_s$ be a diagram of the form $W_{series}$ and $W_p$ a diagram of the form $W_{parallel}$. There is a series of valid clockwise moves that rotate the propagators of $W_p$ until  $W_p = (\cP', n)$, with $\cP'$ is the same set of propagators with the order inverted. Similarly, there is a series or valid clockwise moves that rotate the propagators of $W_s$ until $W_s = (\cP', n)$, with $\cP'$ is the same set of propagators shifted by one. That is, $p_r = (i_{r-1}, j_{r-1}) \in \cP'$ corresponds to $p_{r-1}$ in $\cP$.
\end{lem}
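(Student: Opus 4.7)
The plan is to give a direct geometric construction for each case and verify admissibility explicitly at every intermediate step. Both $W_s$ and $W_p$ have special structure (disjoint arcs and nested arcs respectively) that lets us choreograph the clockwise moves cleanly; this structure is precisely what should be exploited. The general template for each case: move a distinguished propagator one vertex at a time in the clockwise direction, and when it would collide with a neighbour, first make a clockwise move on that neighbour to open up the necessary room, and continue.

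For the series case, I would rotate the last propagator $p_k = (i_k, j_k)$ clockwise around the polygon, past $p_{k-1}, p_{k-2}, \ldots, p_1$ in turn, by moving its left endpoint $i_k$ one vertex at a time through $i_k - 1, i_k - 2, \ldots$ (cyclically). When $i_k$ approaches $j_{k-1}+1$, the next single move on $p_k$ would violate either non-crossing or the support condition, so I would instead shift $p_{k-1}$ clockwise by one (opening room), and then resume moving $p_k$. Iterating, $p_k$ traverses the polygon while $p_1, \dots, p_{k-1}$ each shift one arc-slot clockwise. Reading off the final coordinates gives exactly the shifted labelling $p_r = (i_{r-1}, j_{r-1})$ claimed in the lemma, with the cyclic convention $p_0 = p_k$.

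For the parallel case, I would argue by induction on $k$. The base case $k=1$ is vacuous. For the inductive step, I peel one layer off the nesting: rotate the outermost propagator $p_k$ clockwise around the polygon, which is admissible because nothing lies outside it, and interleave its motion with rotations of the inner propagators so as to move the innermost $p_1$ one nesting-layer outward. After finitely many such choreographed rotations, the innermost propagator occupies what was the outermost slot, while $p_2, \ldots, p_k$ form a nested family one level deeper, to which I apply the inductive hypothesis. After $k$ stages of peeling, the order is fully inverted.

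The main obstacle is verifying admissibility at every single step, especially the support inequality $|V(P)| \geq |P|+3$ and the non-crossing condition precisely at the transitional moments when two endpoints become adjacent (case~(2) of Definition~\ref{def:boundaryprops}). I would handle this by a local analysis: since all but one propagator is fixed at each move, only subsets $P$ containing the moving propagator and a few immediate neighbours need to be inspected, and the structural conditions defining $\cP_{series}$ and $\cP_{parallel}$ provide enough separation to guarantee the inequality. Tracking the labels through the sequence to match the precise cyclic-shift and inversion formulas in the statement is then a bookkeeping exercise, which I would carry out by maintaining a running indexing table after each move.
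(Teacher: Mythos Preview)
Your high-level template—move propagators clockwise one step at a time, bumping neighbours to make room—matches the paper's spirit, but the specific choreographies you propose do not work as written.

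In the series case you move only the left endpoint $i_k$ of $p_k$. This cannot produce the cyclic shift $p_r \mapsto (i_{r-1}, j_{r-1})$: the right endpoint $j_k$ never moves, so $p_k$ merely elongates rather than sliding into slot $k-1$. More seriously, in a non-crossing series arrangement the left end of $p_k$ cannot pass the right end of $p_{k-1}$ at all; once they share the edge $(j_{k-1}, j_{k-1}+1)$, any further clockwise step of $p_k$'s left end produces a crossing, and ``shift $p_{k-1}$ clockwise by one'' only recreates the same adjacency one vertex over. Your summary ``$p_k$ traverses the polygon while $p_1,\dots,p_{k-1}$ each shift one arc-slot'' is therefore not a consequence of the process described, and the displacements do not match the claimed permutation.

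In the parallel case the inductive step asserts that interleaving rotations ``move[s] the innermost $p_1$ one nesting-layer outward,'' but this is precisely the difficulty: $p_1$ is trapped inside $p_2,\dots,p_k$ and any attempt to move an endpoint of $p_1$ past an endpoint of $p_2$ creates a crossing. You never specify the sequence of valid moves that accomplishes the outward migration, and the naive attempt fails for the same reason as above.

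The paper sidesteps both obstructions by abandoning the ``drive one propagator, bump the rest'' picture. It instead linearly orders all $2k$ propagator ends, $p_{1,i_1} \prec p_{1,j_1} \prec p_{2,i_2} \prec \cdots \prec p_{k,j_k}$, and runs a greedy sweep through this list: at each stage take the earliest end currently admitting a valid clockwise move and move it (as far as possible in the series case; one step at a time in the parallel case when it is not the end presently being routed). Because both ends of every propagator appear in the list, each propagator is shortened and re-lengthened as needed, and an end is only advanced when the move is valid, so the crossing and support checks are built into the schedule rather than handled by ad hoc bumping.
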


\begin{proof}
Given a propagator $p = (i,j)$, label its two ends, $(p, i)$ and $(p,j)$. In this manner, we may distinguish valid clockwise propagators moves made on the edge $(i,i+1)$ of the Wilson loop diagram from those made on the edge $(j,j+1)$.

First, order the ends of the propagators as follows \ba p_{1, i_1} \prec p_{1, j_1} \prec p_{2, i_2} \prec \ldots \prec p_{k, j_k} \;.\label{ordered ends}\ea

To see the set of moves necessary to transform $W_s$ to $W_s'$, start with the first end of a propagator in the order given in \eqref{ordered ends} that has a a valid clockwise move. Move this end as many steps as possible via valid clockwise moves. This is possible since $n > 2k$. Then proceed to the next propagator with a valid clockwise move in the ordering \eqref{ordered ends} move this as far as possible via valid clockwise moves, or until it reaches its destination ($p_{r, j_r} = p_{r+1, j_{r+1}}$ or $p_{r, i_r} = p_{r+1, i_{r+1}}$). Continue this algorithm (cycling though ordering \eqref{ordered ends} when one reaches the end) always choosing the first propagator end that has a valid clockwise move or that hasn't reached its destination.

For $W_p$, if either $p_{1, i_1}$ or $p_{1, j_1}$ is the first propagator end with a valid clockwise move, move it as far a possible, or until it reaches its destination ($p_{1, j_1} = p_{k, i_k}$ or $p_{1, i_1} = p_{k, j_{k}}$). If not, move the first propagator end with a valid clockwise move once. The existence of such a move is guaranteed by the fact that $n \geq k+4$. Repeat this process until both ends of $p_1$ are in position. Once $p_r$ has been moved to the position originally occupied by $p_{k-r+1}$ (when $p_{r, j_r} = p_{k-r+1, i_{k-r+1}}$ or $p_{r, i_r} = p_{k-r+1, j_{k-r+1}}$), then do not move the endpoints of $p_r$ further. If either $p_{r+1, i_{r+1}}$ or $p_{r+1, j_{r+1}}$ is the first propagator end with a valid clockwise move, move it as far a possible, or until it reaches its destination. Otherwise, move the first propagator end with a valid clockwise move that has not reached its destination once. Repeat this process until both ends of $p_{r+1}$ are in position.
\end{proof}

Note that this reordering of propagators within the diagram has no effect on $\Sigma(W)$. The existence of this process, does, however, have an effect on the space $\pi^{-1}(\Sigma(W)$, parametrized by $C_*(W)$ as we show below.

\subsection{The orientation of the bundle \label{orientation section}}

In this section, we show that $\pi^{-1}(\cW_{k,n})$ is not orientable for certain $k$ and $n$.

We define a set of open sets on $\cW_{k,n}$ that we use in the proof of non-orientability.
\begin{dfn}
If $W_1$ and $W_2$ are two admissible Wilson loop diagrams such that $\Sigma(W_1)$ and $\Sigma(W_2)$ share a codimension one boundary, $V$, define $U_{W_1, W_2} = \Sigma(W_1) \cup \Sigma(W_2) \cup V$. Let $J$ be the lexicographically minimal set in ${[n]} \choose k$ such that the $J^{th}$ Pl\"{u}cker coordinate of points in $U_{W_1, W_2}$ is nonzero. Let the homeomorphisms $\phi_J$ as defined in the proof of Proposition \ref{prop:bundle}.
\end{dfn}

If two Wilson loop diagrams share multiple distinct codimension one boundaries, then there are multiple such open sets.  Let $\cA$ be any atlas on $\cW_{k,n}$ that contains all $(U_{W_1, W_2}, \phi_J)$ such that $W_1$ and $W_2$ have $k$ propagators and $n$ vertices.

We are now ready to prove that $\pi^{-1}(\cW_{k,n})$ is not orientable for certain $k$ and $n$.

\begin{thm}\label{thm:non-orientable}
The bundle $\pi^{-1}(\cW_{k,n})$ is not orientable in the following two situations:
\begin{enumerate}
\item If $n >2k$ and $k$ is even.
\item If $\lfloor \frac{k}{2} \rfloor$ is odd.
\end{enumerate}
\end{thm}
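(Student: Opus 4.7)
The plan is to exhibit, in each of the two cases, a closed loop in $\cW_{k,n}$ along which the fiber transitions of the rank-$k$ real vector bundle $\pi^{-1}(\cW_{k,n}) \to \cW_{k,n}$ from Corollary \ref{cor:WLD bundle} compose to an element of $GL_k$ with negative determinant; such a loop witnesses non-orientability. The loop will traverse a sequence of codimension-one boundaries arising from valid clockwise moves, eventually returning to the starting diagram but with its propagators relabeled by a non-trivial permutation, and the sign of this permutation is exactly the determinant of the holonomy.

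To set up the trivializations, I note that over each $\Sigma(W)$, once an ordering $\cP = \{p_1,\ldots,p_k\}$ is fixed, the parametrization $C_*(W)$ of Definition \ref{dfn C star} gives a local trivialization of the bundle as $\Sigma(W) \times \mathbb{R}^k$, with the $\mathbb{R}^k$ factor given by the $(n+1)$-st column $(c_{p_1,n+1},\ldots,c_{p_k,n+1})$ indexed by propagators. When $W'$ is obtained from $W$ by a clockwise move of $p_i \in \cP$ to $p_i' \in \cP'$, the boundary identity $C(\D_{p_i,v}(W)) = C(\D_{p_i',v'}(W'))$ supplied by Conjecture \ref{mnemonic} extends to the $(n+1)$-st column (whose entries are untouched by the move). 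This yields a transition between trivializations that is the identity on $\mathbb{R}^k$ once basis vectors are matched through the obvious bijection $p_j \leftrightarrow p_j$ for $j \neq i$ and $p_i \leftrightarrow p_i'$.

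Now I would assemble the two loops. For case (1), take $W = W_{series}$ with $n > 2k$ and apply Lemma \ref{lem:series and parallel}: the prescribed sequence of valid clockwise moves returns to $W_{series}$ but with propagators cyclically shifted by one position. Tracking the propagator identifications through each intermediate step, the composite holonomy acts on $\mathbb{R}^k$ by the $k$-cycle, of sign $(-1)^{k-1}$, which is $-1$ exactly when $k$ is even. For case (2), take $W = W_{parallel}$ and apply Lemma \ref{lem:series and parallel}: the prescribed clockwise moves return to $W_{parallel}$ with propagator order reversed, producing the reversal permutation of sign $(-1)^{\binom{k}{2}} = (-1)^{\lfloor k/2 \rfloor}$, which is $-1$ exactly when $\lfloor k/2 \rfloor$ is odd. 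In each case the resulting determinant of the holonomy is $-1$, and the bundle is non-orientable.

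The main obstacle is making rigorous the claim that each individual clockwise move induces an identity-like transition with no stray $GL_k$ factor, so that the composite holonomy around the loop equals exactly the permutation extracted from Lemma \ref{lem:series and parallel}. This requires direct inspection of $C(W)$ and $C(W')$ for the two flavors of clockwise move used in the sequence (a propagator endpoint sliding to an adjacent unoccupied edge, versus two endpoints colliding on an edge and imposing a $2 \times 2$ Pl\"ucker relation), confirming in each case that the boundary matrix $C(\D_{p,v}(W))$ sits inside both $C(W)$ and $C(W')$ as the same parametrized object up to the relabeling of the single moved propagator. One must also confirm, via Theorem \ref{closure commutes}, that the boundary identification in $\Grnn(k,n)$ lifts intact to the total space of the bundle in $\Gr(k,n+1)$. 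Once these local matchings are established, the parity computations for the $k$-cycle and for the reversal permutation immediately yield the two stated cases.
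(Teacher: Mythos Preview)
Your proposal is correct and follows essentially the same strategy as the paper: both construct a loop in $\cW_{k,n}$ out of the sequences of valid clockwise moves supplied by Lemma \ref{lem:series and parallel}, observe that each intermediate transition preserves orientation, and then read off the holonomy as the sign of the cyclic shift (for $W_{series}$) or the reversal permutation (for $W_{parallel}$). The only real difference is in bookkeeping: you trivialize directly via the $(n+1)$-st column of $C_*(W)$ indexed by ordered propagators and argue each step is the identity, whereas the paper works with the Pl\"ucker charts $(U_{W_l,W_{l+1}},\phi_{J_l})$ from Proposition \ref{prop:bundle} and shows each transition $t_{J_{l-1},J_l}=\Delta_{J_{l-1}}(C(W_l))^{-1}\Delta_{J_l}(C(W_l))$ has positive determinant because both minors lie in $\Grnn(k,n)$, with the permutation matrix $\sigma$ appearing only at the closing step.
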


\begin{proof} 
It is sufficient to show that there exists some pair of charts $(U_{W_1, W_2}, \phi_J)$ and $(U_{W'_1, W'_2}, \phi_{J'})$ such that the structure group on $U_{W_1, W_2} \cap U_{W'_1, W'_2}$ cannot be reduced to $\textrm{Gl}_+(n)$. By definition, the index set $J$ is contained in ${[n]} \choose k$. That is, $n+1$ is not in any $J$ defining a chart. Since Corollary \ref{cor:WLD bundle} says $\pi^{-1}(\cW_{k,n}) \rightarrow \cW_{k,n}$ is a sub-bundle of $\Omega_{n+1} \rightarrow \Gr(k,n)$, we see from Proposition \ref{prop:bundle} that $\pi^{-1}(\cW_{k,n})$ is trivializable over $\cW_{k,n}$ on each $U_{W'_1, W'_2}$. Then we write \bas \phi_J^{-1}\circ \phi_{J'} :  (U_{W_1, W_2} \cap U_{W'_1, W'_2}) \times \R^k & \rightarrow (U_{W_1, W_2} \cap U_{W'_1, W'_2}) \times \R^k  \\ (C(W), \vec{v}) & \mapsto (C(W), t_{J, J'} \vec{v}) \;\eas for some gauge matrix $t_{J, J'} \in \textrm{Gl}(k)$. It is sufficient to show that there exists some transition map for which $t_{J, J'} \not \in \textrm{Gl}_+(k)$.

Now, consider the two admissible Wilson loop diagrams, $W_p$ and $W_s$ of the form $W_{series}$ and $W_{parallell}$ respectively. Let $H(W_p)$ and $H(W_s)$ be the two series of diagrams the come out of the valid clockwise moves described in Lemma \ref{lem:series and parallel}. Write $H(W_p) = \{W_p = W_1 , \ldots W_r = W_p\}$ and similarly for $H(W_s)$. One gets from $W_i$ to $W_{i+1}$ via a valid clockwise move.

Since $\phi_J$ is defined on matrix representations of points in $\Gr(k,n)$, we work with matrix representations. For $x \in \pi^{-1}(\Sigma(W_p))$ we write $C_*(W_p(x))$ with the rows ordered according to the ordering prescribed in Definition \ref{dfn series parallel}. For $x \in \pi^{-1}(\Sigma(W_s))$ we write $C_*(W_s(x))$ with similarly ordered rows. Moving from a point in $\pi^{-1}\Sigma(W_l)$ to a point in $\pi^{-1}\Sigma(W_{l+1})$ the clockwise move of a propagators $p_m$ is represented by a change in the positioning of the non-zero entries in only the $m^{th}$ rows of $C_*(W_l)(x)$ and $C_*(W_{l+1})(x)$. Moreover, only the first $n$ entries of the $m^{th}$ row potentially change. The last column of $C_*(W_l)(x)$ is uneffected by moving the propagator.

By Conjecture \ref{mnemonic}, $W_l$ and $W_{l+1}$ share a boundary. Write the chart in this case as $(U_{W_l, W_{l+1}}, \phi_{J_l})$. In the trivialization over $U_{W_l, W_{l+1}}$, we may write $C_*(W_l)(x)$ as $(C(W_l)(x), v(x))$. For $l \neq r-1$, the $G$-structure on $U_{W_{l-1}, W_l} \cap U_{W_l, W_{l+1}}$ is given by $t_{l-1, l} = \Big(\Delta_{J_{l-1}}(C(W_l)(x))\Big)^{-1}\Big( \Delta_{J_{l}}(C(W_l)(x))\Big)$.
%(We may ignore the subscript $*$ since $n+1 \not \in J$).
Then \bas \phi_{J_{l}}^{-1} \circ \phi_{J_{l-1}}\Big(C(W_l)(x), \vec{v}(x)\Big) =  \Big(C(W_l)(x), \Big(\Delta_{J_{l-1}}(C(W_l)(x))\Big)^{-1}\Big( \Delta_{J_{l}}((C(W_l)(x))\Big) \vec{v})\Big) \;.\eas Since $C(W_l) C(W_{l-1})\in \Grnn(k,n)$, $\Delta_{J_{l-1}}(C_*(W_l))$ and $\Delta_{J_{l}}(C_*(W_l))$ both have positive determinants.

It remains to consider when $l=r$. Consider $H(W_{s})$, which is only defined when $n > 2k$. When $l = r$, we are at the last step of Lemma \ref{lem:series and parallel} when $W_r = W_1$ with the propagators cyclically shifted by one. In terms of the matrices, this means that the rows of $C(W_1)$ and $C(W_r)$ are related by a permutation matrix, $\sigma \in S_k$ that cyclically shifts the rows of $C(W)$. In this case, the $G$-structure on $U_{W_{r-1}, W_r} \cap U_{W_1, W_{2}}$ is given by $\Delta_{J_{1}}(C(W_1)(x))^{-1} \sigma \Delta_{J_{r-1}}(C(W_r)(x))$. This has positive determinant if and only if $\det(\sigma) > 0$. In other words, if and only if $k$ is even. This proves the first point.

The second point comes from the family $H(W_{p})$. When $r= l$, the diagrams $W_r = W_1$ are equal, but with the propagators order inverted. Again, this implies that $C(W_1)$ and $C(W_r)$ are related by a permutation matrix, $\sigma \in S_k$ that inverts the order of the rows of $C(W)$. In this case, the $G$-structure on $U_{W_{r-1}, W_r} \cap U_{W_1, W_{2}}$ is given by $\Delta_{J_{1}}(C_*(W_r))^{-1} \sigma \Delta_{J_{r-1}}(C_*(W_r))$. This has positive determinant if and only if $\det(\sigma) > 0$. In otherwords, if and only if $\lfloor \frac{k}{2} \rfloor$ is odd.
\end{proof}

Below is an example from $\pi^{-1}(\cW_{2,6})$, where $W_{parallel} = W_{series}$.

\begin{eg} \label{2,6 example}
In this example, we explicitly write the diagrams in $H(W_{series})$ for $k=2$ and $n=6$. In what follows, the symbol $\bullet$ takes entries in $0$ or $1$, as in the definition of $C_*(W)$. Furthermore, variables are written with signs incorporated to force all Pl\"{u}cker coordinates to be positive. The boundaries between all cells defined by Wilson loop diagrams are documented in \cite{casestudy}. From there, we may directly see that the $\Sigma(W_i)$ indicated by the diagrams below each share a codimension one boundary with $\Sigma(W_{i+1})$.

\begin{tabular}{|c|c|}
  \hline
  $W_1 = \begin{tikzpicture}[rotate=60,baseline=(current bounding box.east)]
	\begin{scope}
	\drawWLD{6}{1}
	%\drawnumbers
	\drawprop{1}{0}{5}{0}
	\drawprop{2}{0}{4}{0}
		\end{scope}
	\end{tikzpicture}$
  &   $C_*(W_1) = \left(
          \begin{array}{ccccccc}
            c_{1,1} & c_{1,2} & 0 & 0 & -c_{1,5} & -c_{1,6} & \bullet \\
            0 & c_{2,2} & c_{2,3} & c_{2,4} & c_{2,5} & 0 & \bullet \\
          \end{array}
        \right)$ \\ \hline
    $W_2 = \begin{tikzpicture}[rotate=60,baseline=(current bounding box.east)]
	\begin{scope}
	\drawWLD{6}{1}
	%\drawnumbers
	\drawprop{6}{0}{4}{1}
	\drawprop{2}{0}{4}{-1}
		\end{scope}
	\end{tikzpicture}$
 &  $C_*(W_2) = \left(
          \begin{array}{ccccccc}
             c_{1,1} & 0 & 0 & -c_{1,4} & -c_{1,5} & -c_{1,6} & \bullet \\
            0 & c_{2,2} & c_{2,3} & c_{2,4} & c_{2,5} & 0 & \bullet \\
          \end{array}
        \right)$ \\ \hline
      $W_3 = \begin{tikzpicture}[rotate=60,baseline=(current bounding box.east)]
	\begin{scope}
	\drawWLD{6}{1}
	%\drawnumbers
	\drawprop{6}{0}{4}{0}
	\drawprop{1}{0}{3}{0}
		\end{scope}
	\end{tikzpicture}$ & $C_*(W_3) = \left(
          \begin{array}{ccccccc}
             c_{1,1} & 0 & 0 & -c_{1,4} & -c_{1,5} & -c_{1,6} & \bullet \\
            c_{2,1} & c_{2,2} & c_{2,3} & c_{2,4} & 0 & 0 & \bullet \\
          \end{array}
        \right)$  \\ \hline
    $W_4 = \begin{tikzpicture}[rotate=60,baseline=(current bounding box.east)]
	\begin{scope}
	\drawWLD{6}{1}
	%\drawnumbers
	\drawprop{6}{-1}{4}{0}
	\drawprop{6}{1}{2}{0}
		\end{scope}
	\end{tikzpicture}$ & $C_*(W_4) = \left(
          \begin{array}{ccccccc}
             c_{1,1} & 0 & 0 & -c_{1,4} & -c_{1,5} & -c_{1,6} & \bullet \\
             c_{2,1} & c_{2,2} & c_{2,3} & 0 & 0 & c_{2,6} & \bullet \\
          \end{array}
        \right)$ \\ \hline
  $W_5 = \begin{tikzpicture}[rotate=60,baseline=(current bounding box.east)]
	\begin{scope}
	\drawWLD{6}{1}
	%\drawnumbers
	\drawprop{5}{0}{3}{0}
	\drawprop{6}{0}{2}{0}
		\end{scope}
	\end{tikzpicture}$ & $C_*(W_5) = \left(
          \begin{array}{ccccccc}
             0 & 0 & -c_{1,3} & -c_{1,4} & -c_{1,5} & -c_{1,6} &  \bullet \\
             c_{2,1} & c_{2,2} & c_{2,3} & 0 & 0 & c_{2,6} & \bullet \\
          \end{array}
        \right)$  \\ \hline
   $W_6 = \begin{tikzpicture}[rotate=60,baseline=(current bounding box.east)]
	\begin{scope}
	\drawWLD{6}{1}
	%\drawnumbers
	\drawprop{5}{-1}{3}{0}
	\drawprop{5}{0}{1}{0}
		\end{scope}
	\end{tikzpicture}$ & $C_*(W_6) = \left(
          \begin{array}{ccccccc}
             0 & 0 & -c_{1,3} & -c_{1,4} & -c_{1,5} & -c_{1,6} &  \bullet \\
             c_{2,1} & c_{2,2} & 0 & 0 & c_{2,5} & c_{2,6} & \bullet \\
          \end{array}
        \right)$  \\ \hline
     $W_7 = \begin{tikzpicture}[rotate=60,baseline=(current bounding box.east)]
	\begin{scope}
	\drawWLD{6}{1}
	%\drawnumbers
	\drawprop{4}{0}{2}{0}
	\drawprop{5}{0}{1}{0}
		\end{scope}
	\end{tikzpicture}$ & $C_*(W_7) = \left(
          \begin{array}{ccccccc}
             0 & -c_{1,2} & -c_{1,3} & -c_{1,4} & -c_{1,5} & 0 & \bullet \\
             c_{2,1} & c_{2,2} & 0 & 0 & c_{2,5} & c_{2,6} & \bullet \\
          \end{array}
        \right)$  \\ \hline
\end{tabular}

For the the open sets $U_{W_1, W_2}$ and $U_{W_2, W_3}$, $J_1 = J_2 = \{1,2\}$. In this case, the transition matrix on the fibers is given by \bas t_{J_1, J_2} =
\left(\begin{array}{cc}
             c_{1,1} & 0  \\
            0 & c_{2,2} \\
          \end{array}
        \right)^{-1} \left(\begin{array}{cc}
             c_{1,1} & c_{1,2} \\
            0 & c_{2,2}  \\
          \end{array}\right)  \eas which has positive determinant.

For the open set $U_{W_6, W_7}$, $J_6 = \{1,3\}$. In this case, the transition matrix on the fibers is given by \bas t_{J_6, J_1} =
          \left(
          \begin{array}{cc}
             0 & -c_{1,2}  \\
             c_{2,1} & c_{2,2} \\
          \end{array}
        \right)^{-1}  \left(
          \begin{array}{cc}
             0 & 1  \\
             1 & 0 \\
          \end{array}
        \right) \left(\begin{array}{cc}
             0 & -c_{1,3} \\
             c_{2,1}  & 0  \\
          \end{array}
        \right) \;.\eas Notice that this has negative determinant.
\end{eg}

We conclude with the remark that Theorem \ref{thm:non-orientable}, inspired by conversations with Paul Heslop \cite{Heslopcommunication}, shows that $\pi^{-1}(\cW)$ is not orientable in approximately $3$ out of $4$ pairs $(k,n)$ with $n\geq k+4$. The authors are confident that with a better understanding of the boundary structures between positroid cells defined by Wilson Loop diagrams, one may show that all such bundles are non-orientable.

Finally, while volume forms are not well defined on non-orientable manifolds, the authors do not believe that the nonorientability of the space poses a threat to the program of geometrically understanding the Wilson loop amplitudes geometrically. Rather, we hope that this makes the problem more subtle and interesting. It is possible that the integrals associated to the diagrams correspond to some sort of characteristic class of the manifold, and that the volumes associated to the Amplituhedron may be seen as a special case of this more general result.

%\bibliographystyle{amsplain}
%\bibliographystyle{unsrt}
%\bibliography{c:/users/ganga/Dropbox/bibliography/Bibliography}

\end{document}